\documentclass[12pt, reqno]{amsart}
\usepackage{amssymb}
\usepackage{amsmath}
\usepackage{graphicx,color}
\usepackage{wrapfig,framed}
\usepackage{mathtools}
\usepackage[height=22cm, width=15.7cm, hmarginratio={1:1}]{geometry}
\usepackage{hyperref}

\theoremstyle{plain}
\newtheorem{theorem}{Theorem}
\newtheorem{prop}{Proposition}
\newtheorem{lemma}{Lemma}
\newtheorem{cor}{Corollary}
\theoremstyle{definition}
\newtheorem{defi}{Definition}
\newtheorem{remark}{Remark}
\newtheorem{example}{Example}

\newcommand{\beq}{\begin{equation}}
\newcommand{\eeq}{\end{equation}}
\newcommand{\nn}{\nonumber}

\newcommand{\bllb}{\bigl(\hskip -0.05truecm \bigl(}
\newcommand{\brrb}{\bigr)\hskip -0.05truecm \bigr)}

\newcommand{\bllm}{\bigl[\hskip -0.05truecm \bigl[}
\newcommand{\brrm}{\bigr]\hskip -0.05truecm \bigr]}

\newcommand{\cS}{\mathcal{S}}
\newcommand{\cL}{\mathcal{L}}
\newcommand{\cB}{\mathcal{B}}
\newcommand{\QQ}{{\mathbb Q}}
\newcommand{\CC}{{\mathbb C}}
\newcommand{\ZZ}{{\mathbb Z}}

\newcommand{\A}{{\mathcal A}}

\newcommand{\tr}{{\rm tr}}
\newcommand{\Tr}{{\rm Tr}}
\newcommand{\bt}{{\bf t}}
\newcommand{\e}{\epsilon}
\newcommand{\bdzero}{{\bf 0}}

\newcommand{\p}{\partial}
\newcommand{\epf}{$\quad$\hfill\raisebox{0.11truecm}{\fbox{}}\par\vskip0.4truecm}

\def\={\; = \;}
\def\+{\; + \;}
\def\:={\, := \, }

\def \wt {\widetilde}

\begin{document}
\title[Tau-functions for Toda lattice hierarchy]{On tau-functions for the Toda lattice hierarchy}
\author{Di Yang}
\dedicatory{Dedicated to the memory of Boris Anatol'evich Dubrovin, with gratitude and admiration}
\begin{abstract}
We extend a recent result of \cite{DYZ2} for the KdV hierarchy
to the Toda lattice hierarchy. Namely, 
for an arbitrary solution to the Toda lattice hierarchy, 
we define a pair of wave functions, 
and use them to give explicit formulae for the generating 
series of $k$-point correlation functions 
of the solution. Applications to computing GUE correlators and 
Gromov--Witten invariants of the Riemann sphere are under consideration.
\end{abstract}
\maketitle

\setcounter{tocdepth}{1}
\tableofcontents

\section{Introduction} 
The Toda lattice hierarchy, which contains  
the Toda lattice equation
\beq
\ddot \sigma(n) \= e^{\sigma(n-1)-\sigma(n)}  \, - \, e^{\sigma(n)-\sigma(n+1)} \,,   \label{Todalatticeeqn}
\eeq
is an important {\it integrable hierarchy} of nonlinear 
differential-difference equations \cite{FT,fla,mana,UT}.
In this paper, following the idea of~\cite{DYZ2} we derive new formulae for generating series 
of $k$-point correlation functions for the Toda lattice hierarchy by using the 
matrix resolvent approach~\cite{DuY1} and by introducing {\it a pair of wave functions}.
\subsection{Toda lattice hierarchy and tau-function}
Let 
\beq
\A \:= \ZZ \, [ v_0,w_0,v_{\pm 1},  w_{\pm 1}, v_{\pm 2}, w_{\pm 2}, \cdots ]
\eeq
be the polynomial ring. Define the shift operator  $\Lambda:\A \rightarrow \A$ via 
\[
\Lambda(1)\=1\,, \quad \Lambda (v_i) \= v_{i+1}\,, \quad  \Lambda (w_i) \= w_{i+1} \,,  
\quad \Lambda (f g) \= \Lambda (f) \, \Lambda (g)\]
$\forall \, i\in \ZZ$ and $f,g\in \A$.
Denote by~$\Lambda^{-1}$ the inverse of~$\Lambda$ satisfying 
$\Lambda^{-1} (v_i) = v_{i-1}$, $\Lambda^{-1} (w_i) = w_{i-1}$, and 
$\Lambda^{-1} (f g) = \Lambda^{-1} (f) \, \Lambda^{-1} (g)$.
For a difference operator~$P$ on~$\A$, we mean an operator of the form 
%\beq \label{canonicalform}
$P = \sum_{m\in \ZZ} P_m \, \Lambda^m $,
%\eeq  
where $P_m \in \A$. 
Denote $P_+:=\sum_{m\geq 0} P_m \, \Lambda^m$, $P_-:=\sum_{m< 0} P_m \, \Lambda^m$, ${\rm Coef}(P,m):=P_m$. 
A linear operator 
$D:\A \to \A$ is called a derivation on~$\A$, if
$$
D(fg) \= D(f) \, g \+ f \, D(g) \,, \quad \forall\,f,g\in \A \, . 
$$
The derivation~$D$ is called {\it admissible} if it commutes with~$\Lambda$.
Clearly, every admissible derivation~$D$ is uniquely determined by the values~$D(v_0)$ and~$D(w_0)$. 
Let 
\beq \label{LaxL} 
L\:=\Lambda \+ v_0 \+ w_0 \, \Lambda^{-1}
\eeq
be a difference operator, and define a sequence of difference operators $A_k$,~$k\geq 0$ by
\beq
A_k \:= \bigl(L^{k+1}\bigr)_+ \,.
\eeq
We associate with~$A_k$ a sequence of admissible derivations $D_k:\A \to \A$ defined via
\beq\label{todaderiv}
D_k(v_0) \:= {\rm Coef} \bigl([A_k,L], 0\bigr) \,, \quad 
D_k(w_0) \:= {\rm Coef} \bigl([A_k,L], -1\bigr) \,, \qquad k\geq0\,.
\eeq
The first few $D_k(v_0)$ and $D_k(w_0)$ are $D_0(v_0)=w_1-w_0$, $D_0(w_0)=w_0 \, (v_0-v_{-1})$; $D_1(v_0)=w_1(v_1+v_0)-w_0(v_0+v_{-1})$, $D_1(w_0)=w_0\bigl(w_1-w_{-1}+v_0^2-v_{-1}^2\bigr)$; etc.. 

\begin{lemma} \label{commtoda}
The operators~$D_k$, $k\geq 0$ pairwise commute.
\end{lemma}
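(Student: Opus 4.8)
The plan is to lift each $D_k$ from the coefficient ring $\A$ to the algebra of difference operators and to recast the definition~\eqref{todaderiv} as a single operator (Lax) equation. Since $D_k$ is admissible, setting $D_k\bigl(\sum_m P_m \Lambda^m\bigr) := \sum_m D_k(P_m)\,\Lambda^m$ yields a well-defined derivation of the (noncommutative) algebra of difference operators that annihilates $\Lambda$ and $\Lambda^{-1}$; in particular $D_k(L) = D_k(v_0) + D_k(w_0)\,\Lambda^{-1}$, since the coefficient of $\Lambda$ in $L$ is the constant $1$. I would first check that $[A_k, L]$ is supported only on the powers $\Lambda^0$ and $\Lambda^{-1}$. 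Indeed, writing $A_k = L^{k+1} - (L^{k+1})_-$ and using $[L^{k+1}, L] = 0$ gives $[A_k, L] = -[(L^{k+1})_-, L]$, whose powers are $\le 0$ because $(L^{k+1})_-$ has powers $\le -1$ and $L$ has powers $\le 1$; on the other hand $[A_k, L]$ has powers $\ge -1$ directly, since $A_k$ has powers $\ge 0$ and $L$ has powers $\ge -1$. Comparing with~\eqref{todaderiv} then produces the operator identity
\beq
D_k(L) \= [A_k, L]\,, \qquad k \ge 0\,. \nn
\eeq

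Next I would reduce the lemma to a statement about $L$ alone. The commutator $[D_j, D_k] = D_j D_k - D_k D_j$ of two admissible derivations is again an admissible derivation, hence is uniquely determined by its values on $v_0$ and $w_0$; thus it suffices to prove $[D_j, D_k](v_0) = [D_j, D_k](w_0) = 0$, equivalently $[D_j, D_k](L) = 0$. Applying $D_j$ to $D_k(L) = [A_k, L]$ and using the Leibniz rule for $D_j$ on operator products together with $D_j(L) = [A_j, L]$ gives $D_j D_k(L) = [D_j(A_k), L] + [A_k, [A_j, L]]$, and symmetrically for $D_k D_j(L)$. Subtracting and invoking the Jacobi identity to rewrite $[A_k,[A_j,L]] - [A_j,[A_k,L]] = [[A_k, A_j], L]$, I obtain
\beq
[D_j, D_k](L) \= \bigl[\,D_j(A_k) \m D_k(A_j) \+ [A_k, A_j]\,,\; L\,\bigr]\,. \nn
\eeq

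It therefore remains to establish the zero-curvature (Zakharov--Shabat) identity
\beq
D_j(A_k) \m D_k(A_j) \+ [A_k, A_j] \= 0\,, \nn
\eeq
which I regard as the main obstacle. The key input is $D_j(A_k) = D_j\bigl((L^{k+1})_+\bigr) = \bigl([A_j, L^{k+1}]\bigr)_+$ (the extended derivation commutes with taking the positive part, and $D_j(L^{k+1}) = [A_j, L^{k+1}]$ by Leibniz). Writing $A_j = (L^{j+1})_+$ and splitting each factor into its $\pm$ parts, one exploits $[L^{j+1}, L^{k+1}] = 0$ together with the observations that the commutator of two nonnegative-power operators has only nonnegative powers, while the commutator of two negative-power operators has only strictly negative powers (so it is killed by $(\cdot)_+$). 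Careful bookkeeping of which pieces survive the projection $(\cdot)_+$ then collapses $D_j(A_k) - D_k(A_j)$ to $-[A_k, A_j]$, giving the identity. Once this is in hand, the displayed expression for $[D_j, D_k](L)$ becomes $[0, L] = 0$, so $[D_j, D_k]$ annihilates both $v_0$ and $w_0$ and hence vanishes identically, proving that $D_j$ and $D_k$ commute. The only delicate point is the sign- and support-tracking in this projection argument; everything else is formal manipulation in the operator algebra.
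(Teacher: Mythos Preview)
Your proof is correct: the extension of $D_k$ to difference operators is indeed a derivation precisely because $D_k$ commutes with $\Lambda$, the support analysis for $[A_k,L]$ is right, and the Zakharov--Shabat identity follows by the standard splitting argument you outline (taking the positive part of $0=[L^{j+1},L^{k+1}]$ after decomposing each factor into $\pm$ parts kills the $[P_-,Q_-]$ piece and yields $([P_+,Q_-])_+ + ([P_-,Q_+])_+ = -[P_+,Q_+]$, which is exactly $-[A_j,A_k]$).

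Your route, however, is not the one the paper takes. The paper does not prove Lemma~\ref{commtoda} directly at all; it simply records that the lemma is known, and later (second remark at the end of Section~\ref{section2}) observes that commutativity follows \emph{a posteriori} from the existence of a tau-structure, which in turn is established via the matrix resolvent~$R(\lambda)$ without assuming commutativity. So the paper's ``proof'' is an indirect byproduct of the MR construction, whereas yours is the classical Lax/zero-curvature argument carried out entirely within the algebra of difference operators. Your approach is more elementary and self-contained, requiring nothing beyond the operator $L$ and the projections $(\cdot)_\pm$; the paper's route, by contrast, yields commutativity together with the tau-structure $(\Omega_{p,q},S_p)$ in one package, which is the object of real interest later on.
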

\noindent  
This lemma was known. We call $D_k$ the Toda lattice derivations, and 
\eqref{todaderiv} the abstract Toda lattice hierarchy.

A {\it tau-structure} associated to the derivations~$(D_k)_{k\geq 0}$ is a collection of polynomials 
$\bigl(\Omega_{p,q}, S_p\bigr)_{p,q\geq 0}$ in~$\mathcal{A}$ satisfying
\begin{align}
& \Omega_{p,q} \= \Omega_{q,p} \,,   \quad    
 D_r  \bigl(\Omega_{p,q}\bigr) \=  D_q \bigl(\Omega_{p,r}\bigr) \,,   \label{taustructure}  \\
&  (\Lambda -1 ) \, \bigl(\Omega_{p,q}\bigr) \= D_q \bigl(S_p\bigr)  \,, \label{taustructure2} \\ 
& w_0 \, \bigl(1 \,-\, \Lambda^{-1} \bigr) \bigl(S_p\bigr) \= D_p (w_0)  \,\label{taustructure3} 
\end{align}
for all $p,q,r\geq 0$.
It can be shown (e.g.~\cite{DuY1}) that the tau-structure exists
and is unique up to replacing $\Omega_{p,q},S_p$ by $\Omega_{p,q}+c_{p,q}$ and $S_p+a_p$
respectively, where $c_{p,q}=c_{q,p}$ and~$a_ p$ are arbitrary constants.  
The tau-structure $\Omega_{p,q},S_p$ is called canonical if
$$
\Omega_{p,q}\big|_{v_i=0, \, w_i=0, \, i\in \ZZ} \= 0 \,, \quad S_{p} \big |_{v_i=0, \, w_i=0, \, i\in \ZZ} \= 0 \,. 
$$

Let us take $\Omega_{p,q},S_p$ the canonical tau-structure. For $m\geq 3$, define
\beq\label{symmetryindices}
\Omega_{p_1,\dots,p_m} \:= D_{p_1} \cdots D_{p_{m-2}} \,
\bigl(\Omega_{p_{m-1} p_m} \bigr) \in \A \,, \qquad p_1,\dots,p_m\geq 0\,.
\eeq
By~\eqref{taustructure} we know that the $\Omega_{p_1,\dots,p_m}$, $m\geq 2$ are totally symmetric with respect to permutations of the indices $p_1,\dots,p_m$.  
The first few of these polynomials are
\begin{align}
& S_0 \= v_0 \,, \quad S_1 \= w_1+w_0 +v_0^2 \,, \\
& \Omega_{0,0} \= w_0 \,, \quad \Omega_{0,1}\=\Omega_{1,0}\=w_1(v_1+v_0)\,. 
\end{align}
 
If we think of~$v_0$, $w_0$ as two functions $v(n)$, $w(n)$ of~$n$, respectively, and 
$v_i$, $w_i$ as $v(n+i)$, $w(n+i)$, then 
the Toda lattice derivations~$D_k$ lead to a hierarchy of evolutionary 
differential-difference equations, called the Toda lattice hierarchy, given by 
\begin{align}
& \frac{\p v(n)}{\p t_k} \=  D_k(v_0) (n)  \,, \qquad \frac{\p w(n)}{\p t_k} \= D_k(w_0) (n) \,, \label{Todauw} 
\end{align}
where $k\geq 0$, and the $D_k(v_0) (n), D_k(w_0) (n)$ 
are defined as $D_k(v_0), D_k(w_0)$ with $v_i$, $w_i$ replaced by $v(n+i)$, $w(n+i)$, respectively.
Lemma~\ref{commtoda} implies that the flows \eqref{Todauw} all
commute. So we can solve the whole Toda lattice hierarchy \eqref{Todauw} 
together, which yields solutions of the form $(v=v(n,\bt),w=w(n,\bt))$. Here  
 $\bt:=(t_0,t_1,\dots)$ denotes the infinite time vector. Note that the $k=0$ equations read 
\begin{align}
& \dot v(n) \=  w(n+1) \,- \, w(n)\,, \qquad \dot w(n) \= w(n) \, \bigl(v(n)-v(n-1)\bigr) \,, \label{Todaequw} 
\end{align}
which are equivalent to equation~\eqref{Todalatticeeqn} via the transformation 
$$
w(n) \= e^{\sigma(n-1)-\sigma(n)} \,, \qquad v(n) \= - \dot \sigma(n) \,. 
$$
Here, dot, ``\,$\dot{}$\,", is identified with $\p/\p t_0$.

Let~$V$ be a ring of functions of~$n$ closed under shifting~$n$ by~$\pm1$. 
For two given $f(n),g(n)\in V$, consider the 
initial value problem for~\eqref{Todauw} with the initial condition:
\beq \label{ini} 
v(n,\bdzero) \=f(n)\,, \quad  w(n,\bdzero)\=g(n) \,. 
\eeq 
The solution $(v(n,\bt), w(n,\bt))\in V[[\bt]]^2$ exists and is unique, which gives the following 1-1 correspondence: 
\beq\label{11correspondence}
\mbox{$\bigl\{$solution $(v,w)$ of \eqref{Todauw}  in $V[[\bt]]^2\bigr\}$ \;  
$\longleftrightarrow$ \; $\bigl\{$initial data $ (f,g) \bigr\}$}\, .
\eeq

\begin{example}  $f(n)=0,~g(n)=n$. (For this case, one 
can take $V=\QQ[n]$.)
The corresponding unique solution governs the enumerations of ribbon graphs in all genera.  
\end{example}

\begin{example} $f(n)= (n+\frac12)\e, ~ g(n)=1$. (For this case, one 
can take $V=\QQ[\e][n]$.) The corresponding unique solution governs the 
Gromov-Witten invariants of~$\mathbb{P}^1$ in the stationary sector in all genera and all degrees. 
\end{example}

Let $(v,w)\in V[[\bt]]^2$ be an arbitrary solution to the 
Toda lattice hierarchy~\eqref{Todauw}. 
Write $\Omega_{p,q}(n,\bt)$ and~$S_p(n,\bt)$ 
as the images of $\Omega_{p,q}$ and~$S_p$ 
under the substitutions 
\beq \label{substitutionsuw}
v_i\mapsto v(n+i,\bt),\quad w_i\mapsto w(n+i,\bt),\qquad i\in\ZZ,
\eeq 
respectively. (Similar notations will be used for other elements of~$\A$.)
Equalities~\eqref{taustructure} then imply the existence of a function $\tau=\tau(n,\bt)$ such that for $p,q\geq0$,   
\begin{align} 
& \Omega_{p,q}(n,\bt) \= \frac{\p^2 \log \tau(n,\bt)}{\p t_p \p t_q } \,,\label{taufunctionOmega} \\
& S_p(n,\bt) \=  \frac{\p}{\p t_p}\log \frac{\tau(n+1,\bt)}{\tau(n,\bt)} \,, \label{taufunctionS} \\ 
& w(n,\bt)  \= \frac{\tau(n+1,\bt) \, \tau(n-1,\bt)}{\tau(n,\bt)^2} \,. \label{taufunctionw} 
\end{align}
We call $\tau(n,\bt)$ the {\it Dubrovin--Zhang (DZ) type tau-function} \cite{DZ-norm,DuY1} of the solution~$(v,w)$, 
for short the tau-function of the solution.
The symmetry in~\eqref{symmetryindices} 
is more obvious:  the image $\Omega_{p_1,\dots,p_m}(n,\bt)$ of $\Omega_{p_1,\dots,p_m}$ under~\eqref{substitutionsuw} satisfies
\beq
\Omega_{p_1,\dots,p_m}(n,\bt) \= \frac{\p^m \log \tau(n,\bt)}{\p t_{p_1} \cdots \p t_{p_m}}, \; \qquad m\geq 2,\; p_1,\dots,p_m\geq 0\,. \label{Taylorlogtau}
\eeq
Define $\Omega_p(n,\bt)=\p_{t_p} \log \tau(n,\bt)$, $p\geq 0$. 
These logarithmic derivatives of~$\tau(n,\bt)$ are 
called {\it correlation functions} of the solution~$(v,w)$.
The specializations 
$\Omega_{p_1,\dots,p_m}(n,\bdzero)$ are 
called {\it $m$-point partial correlation functions} of~$(v,w)$.
\begin{remark} The tau-function~$\tau(n,\bt)$ of the solution~$(v,w)$ is unique up to multiplying it by the exponential of a linear function of $n,t_0,t_1,t_2,\cdots$. 
\end{remark}
\subsection{Matrix resolvent} The matrix resolvent (MR) method
 for computing correlation functions for integrable hierarchies was introduced in~\cite{BDY1,BDY2,BDY3}, 
 and was extended to the discrete case in~\cite{DuY1} (in particular to the Toda lattice hierarchy).
Denote 
\[
U(\lambda)\:=\begin{pmatrix} v_0-\lambda & w_0\\ -1 & 0\end{pmatrix} \,. 
\]
The following lemma for the Toda lattice hierarchy was proven in~\cite{DuY1}.
\begin{lemma}[\cite{DuY1}] \label{mrlemma}
There exists a unique series $R(\lambda)\in {\rm Mat}\bigl(2,\A[[\lambda^{-1}]]\bigr)$ satisfying
\begin{align}
& \Lambda \bigl(R(\lambda)\bigr) \, U(\lambda) \,-\,U(\lambda) \, R(\lambda) \= 0 \,, \label{defbr1}\\
& {\rm Tr} \, R (\lambda) \= 1\,, \quad \det R(\lambda) \= 0 \,, \label{defbr2}\\
& R(\lambda) \,-\, \begin{pmatrix} 1 & 0 \\ 0 & 0 \end{pmatrix} 
\; \in \; {\rm Mat}\bigl(2,\A[[\lambda^{-1}]]\lambda^{-1}\bigr)\,. \label{defbr3}
\end{align}
\end{lemma}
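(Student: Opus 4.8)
The plan is to construct $R(\lambda)=\sum_{k\ge0}R_k\lambda^{-k}$ with $R_k\in{\rm Mat}(2,\A)$ by recursion on $k$, and then to fix the remaining freedom using the normalization \eqref{defbr2}. Condition \eqref{defbr3} is exactly the requirement $R_0=P$, where $P:=\bigl(\begin{smallmatrix}1&0\\0&0\end{smallmatrix}\bigr)$. Writing $U(\lambda)=U_0-\lambda P$ with $U_0:=\bigl(\begin{smallmatrix}v_0&w_0\\-1&0\end{smallmatrix}\bigr)$ and substituting the ansatz into \eqref{defbr1}, I would match coefficients of $\lambda^{-k}$ to get
\[
\Lambda(R_{k+1})\,P-P\,R_{k+1}\=\Lambda(R_k)\,U_0-U_0\,R_k\,,\qquad k\ge0.
\]
Writing $R_k=\bigl(\begin{smallmatrix}a_k&b_k\\c_k&d_k\end{smallmatrix}\bigr)$ and reading this off entry by entry: the $(1,2)$ and $(2,1)$ entries give explicit formulas for $b_{k+1}$ and for $\Lambda(c_{k+1})$ (hence $c_{k+1}$, since $\Lambda$ is invertible) in terms of $R_k$ alone; the $(1,1)$ entry gives $(\Lambda-1)(a_{k+1})$ in terms of $R_k$; and the $(2,2)$ entry gives a relation among the entries of $R_k$ only.

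Since $\Lambda-1$ is not injective on $\A$, solving $(\Lambda-1)(a_{k+1})=\dots$ would raise both a solvability and an ``integration constant'' issue, so I would instead read the diagonal of $R$ off \eqref{defbr2}. The trace condition forces $d_k=\delta_{k,0}-a_k$, while extracting the coefficient of $\lambda^{-m}$ from $\det R(\lambda)=0$ and using $a_0=1$ gives
\[
a_m\=-\sum_{j=1}^{m-1}a_j\,a_{m-j}\;-\!\!\sum_{k+l=m}\!\!b_k\,c_l\,,
\]
whose right-hand side involves only $R_0,\dots,R_{m-1}$. Thus the four prescriptions — off-diagonal recursion for $b,c$, determinant for $a$, trace for $d$, together with $R_0=P$ — determine every $R_k$ by induction. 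Because any solution of \eqref{defbr1}--\eqref{defbr3} satisfies each of these relations in particular, this already settles uniqueness.

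For existence I would take the $R_k$ defined by these formulas and verify that \eqref{defbr1}--\eqref{defbr3} all hold. Conditions \eqref{defbr2}, \eqref{defbr3} hold by construction, and the $(1,2)$, $(2,1)$ entries of $C(\lambda):=\Lambda(R)\,U-U\,R$ vanish by construction, so $C$ is diagonal. The crux — the step I expect to be the real obstacle — is that the two diagonal components $(1,1)$ and $(2,2)$ of \eqref{defbr1}, which were never imposed, must follow automatically. Here I would use that, by \eqref{defbr2} and the Cayley--Hamilton identity, $R^2=R$, and likewise $\Lambda(R)^2=\Lambda(R)$, since the ring homomorphism $\Lambda$ preserves trace and determinant. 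A one-line computation then yields
\[
C\=\Lambda(R)\,C\+C\,R,
\]
and comparing diagonal entries (using that $C$ is already diagonal) gives $C_{11}\,(\Lambda(a)+a-1)=0$ and $C_{22}\,(\Lambda(a)+a-1)=0$. Since $a=1+O(\lambda^{-1})$, the factor $\Lambda(a)+a-1=1+O(\lambda^{-1})$ is a unit in $\A[[\lambda^{-1}]]$, forcing $C_{11}=C_{22}=0$, hence $C=0$ and \eqref{defbr1}.

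I expect the only genuinely delicate point to be this final consistency check; the rest is a bookkeeping induction. The identity $C=\Lambda(R)C+CR$ is the mechanism that makes the construction consistent: it shows that, once the normalization \eqref{defbr2} is imposed, the two diagonal components of the Lax equation \eqref{defbr1} are automatically satisfied and need not be solved for, which is precisely what lets me sidestep the non-invertibility of $\Lambda-1$.
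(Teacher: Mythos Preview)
Your proposal is correct and follows essentially the same route as the paper (which defers the proof to \cite{DuY1} but lays out the identical recursion structure in Section~\ref{section2}): use the trace condition to reduce to three unknown series, read the off-diagonal entries of \eqref{defbr1} as explicit recursions for $b,c$, and use $\det R=0$ rather than the $(1,1)$-entry of \eqref{defbr1} to determine $a$ --- exactly the content of \eqref{rec1}--\eqref{iniac}. Your idempotency identity $C=\Lambda(R)\,C+C\,R$ (from $R^2=R$, $\Lambda(R)^2=\Lambda(R)$) is a clean way to close the existence argument that the present paper does not spell out; note that for the $(2,2)$-entry the factor is $\Lambda(d)+d-1=-(\Lambda(a)+a-1)$, still a unit, so your conclusion stands.
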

\noindent The unique series $R(\lambda)$ in Lemma~\ref{mrlemma} 
is called the {\it basic matrix resolvent}.  The first few terms of~$R(\lambda)$ are given by
\begin{align}
& R(\lambda)\= \begin{pmatrix} 1 & 0 \\ 0 & 0\end{pmatrix} \+ \begin{pmatrix} 0 & -w_0 \\ 1 & 0\end{pmatrix} \frac1{\lambda} 
\+ \begin{pmatrix} w_0 & -v_0w_0 \\ v_{-1} & -w_0\end{pmatrix} \frac1{\lambda^2} \nn\\
& \qquad \qquad \qquad \+  \begin{pmatrix} w_0(v_0+v_{-1}) & -w_0(w_0+w_1+v_0^2) \\ w_0+w_{-1}+v_{-1}^2 & -w_0(v_0+v_{-1})\end{pmatrix} \frac1{\lambda^3}  \+\cdots\,. \label{Rfirstfew}
\end{align}
\begin{prop}[\cite{DuY1}] \label{matrixprop}
For any $k\geq 2$, the following formula holds true:
\begin{align}
\sum_{i_1,\dots,i_k\geq 0} 
\frac{\Omega_{i_1,\dots,i_k}}{\lambda_1^{i_1+2} \cdots \lambda_k^{i_k+2}} \= 
- \sum_{\pi\in \mathcal{S}_k/C_k} \frac{\tr \, \prod_{j=1}^k R\bigl(\lambda_{\pi(j)}\bigr)}
{\prod_{j=1}^k \bigl(\lambda_{\pi{(j)}}-\lambda_{\pi{(j+1)}}\bigr)} 
\,-\, \frac{\delta_{k,2}}{(\lambda_1-\lambda_2)^2} \,, \label{propgueproved1}
\end{align}
where $\mathcal{S}_k$ denotes the symmetry group and $C_k$ the cyclic group, and $\pi(k+1):=\pi(1)$. 
\end{prop}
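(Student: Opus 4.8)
The plan is to prove the identity by induction on $k$, with $k=2$ as the base case, using a single \emph{flow equation} for the basic matrix resolvent as the engine that raises the number of points by one. Two structural facts make this work. First, the recursive definition \eqref{symmetryindices} together with the commutativity of the~$D_k$ (Lemma~\ref{commtoda}) gives, by the symmetry of the~$\Omega$'s, the relation $\Omega_{p_1,\dots,p_k,p_{k+1}}=D_{p_{k+1}}\bigl(\Omega_{p_1,\dots,p_k}\bigr)$. Second, the right-hand side of \eqref{propgueproved1} is assembled only from $R(\lambda_1),\dots,R(\lambda_k)$. Writing $G_k$ and $W_k$ for the two sides, the proposition reduces to $G_2=W_2$ together with the statement that the operator $\sum_{p\geq 0}\lambda_{k+1}^{-p-2}\,D_p$ sends $G_k\mapsto G_{k+1}$ and $W_k\mapsto W_{k+1}$. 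The action on the~$G$'s is immediate from the displayed relation; the whole difficulty is to establish the same for~$W_k$.

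\emph{Base case.} Here one must show
\[
\sum_{i,j\geq 0}\frac{\Omega_{i,j}}{\lambda_1^{i+2}\lambda_2^{j+2}} \= \frac{\tr\bigl(R(\lambda_1)R(\lambda_2)\bigr)-1}{(\lambda_1-\lambda_2)^2}\,,
\]
which is exactly $W_2$ after noting that $\mathcal{S}_2/C_2$ has one class and that $\prod_{j}(\lambda_{\pi(j)}-\lambda_{\pi(j+1)})=-(\lambda_1-\lambda_2)^2$, the constant $-1$ accounting for the $-\delta_{k,2}/(\lambda_1-\lambda_2)^2$ term. I would verify this by extracting the coefficient of $\lambda_1^{-i-2}\lambda_2^{-j-2}$ on both sides: the right-hand side is expanded using the explicit $\lambda^{-1}$-series of $R$ in \eqref{Rfirstfew}, while $\Omega_{i,j}$ is rewritten through the tau-structure relations \eqref{taustructure}--\eqref{taustructure3} and the definition \eqref{todaderiv} of $D_p$. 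The subtraction of~$1$ removes the ill-defined diagonal contribution at $i=j=0$ and makes the two expansions agree term by term.

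\emph{Flow lemma.} The key technical step is a commutator form for the $t_p$-evolution of~$R$. Differentiating the defining relation $\Lambda(R)\,U-U\,R=0$ by~$D_p$, using $D_p(U)=\bigl(\begin{smallmatrix}D_p(v_0)&D_p(w_0)\\0&0\end{smallmatrix}\bigr)$ and the auxiliary linear problem associated with $L$ (for which $R$ is the rank-one Bloch projector, so that $\tr R=1$, $\det R=0$ are preserved), one obtains
\[
D_p\bigl(R(\lambda)\bigr) \= \bigl[V_p(\lambda),\,R(\lambda)\bigr]\,,
\]
where $V_p(\lambda)$ is the $2\times 2$ matrix representing $A_p=(L^{p+1})_+$. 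Resumming against $\mu^{-p-2}$, the truncation $(\cdot)_+$ turns $\sum_p V_p(\lambda)\,\mu^{-p-2}$ into the Cauchy kernel $R(\mu)/(\lambda-\mu)$ up to a scalar multiple of the identity and a polynomial correction; since scalars commute with $R(\lambda)$, these drop out of the bracket and yield
\[
\sum_{p\geq 0}\frac{D_p\bigl(R(\lambda)\bigr)}{\mu^{p+2}} \= \left[\frac{R(\mu)}{\lambda-\mu},\,R(\lambda)\right] \+ (\text{correction})\,,
\]
with a correction supported on the diagonal leading terms of~$R$. Pinning down this correction and the precise Cauchy-kernel form is the most delicate part of this step.

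\emph{Induction step and main obstacle.} Applying $\sum_p\lambda_{k+1}^{-p-2}D_p$ to $W_k$ and using the flow lemma with the Leibniz rule distributes a single insertion of $R(\lambda_{k+1})$, via the bracket $[R(\lambda_{k+1}),R(\lambda_{\pi(j)})]$ divided by $(\lambda_{\pi(j)}-\lambda_{k+1})$, into each cyclic word. Expanding the commutator places $R(\lambda_{k+1})$ immediately before or after position~$j$; by cyclicity of the trace and the partial-fraction identity
\[
\frac{1}{(\lambda_a-\lambda_b)}\left(\frac{1}{\lambda_{k+1}-\lambda_b}-\frac{1}{\lambda_{k+1}-\lambda_a}\right) \= \frac{-1}{(\lambda_{k+1}-\lambda_a)(\lambda_{k+1}-\lambda_b)}\,,
\]
the contributions regroup precisely into the cyclic sum over $\mathcal{S}_{k+1}/C_{k+1}$ with the enlarged denominator $\prod_{j=1}^{k+1}(\lambda_{\pi(j)}-\lambda_{\pi(j+1)})$, giving $W_{k+1}$. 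The main obstacle is exactly this combinatorial reorganization: one must check that every length-$(k+1)$ cyclic word is produced with the correct rational coefficient and that the correction terms from the flow lemma contribute nothing for $k\geq 3$. The latter is consistent with, and explained by, the fact that $D_p$ annihilates the constant $-1$ in $W_2$, so the $\delta_{k,2}$ term appears only at the base case and is never regenerated.
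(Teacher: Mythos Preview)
Your strategy is essentially the one the paper (following \cite{DuY1}) uses: verify $k=2$ directly, establish a commutator ``flow lemma'' for $\nabla(\mu)R(\lambda)$, and induct by inserting $R(\lambda_{k+1})$ into each cyclic word via Leibniz and the partial-fraction identity you wrote. The paper's contribution here is a new proof of the flow lemma (your ``key technical step'') via the uniqueness Lemma~\ref{observationDR}; the combinatorial induction is deferred to~\cite{DuY1} and matches what you sketch.

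Two points where your write-up should be tightened. First, the paper pins down the correction you leave vague: the exact statement (Proposition~\ref{thelemma}) is
\[
\nabla(\mu)\,R(\lambda)\;=\;\frac{1}{\mu-\lambda}\bigl[R(\mu),R(\lambda)\bigr]\;+\;\bigl[Q(\mu),R(\lambda)\bigr],\qquad
Q(\mu)\;=\;-\frac{I}{\mu}\;+\;\begin{pmatrix}0&0\\0&\gamma(\mu)\end{pmatrix},
\]
so the Cauchy kernel carries $(\mu-\lambda)^{-1}$, not $(\lambda-\mu)^{-1}$ as you wrote; this sign matters when you regroup into $W_{k+1}$. Second, your claim that the correction ``contributes nothing for $k\geq3$'' is correct but the reason is not that it is ``supported on diagonal leading terms'': it is that $Q(\mu)$ does not depend on~$\lambda$, so under Leibniz the sum of corrections over the $k$ factors telescopes to $\tr\bigl[Q(\mu),\,\prod_{j}R(\lambda_{\pi(j)})\bigr]=0$ by cyclicity of the trace. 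The same telescoping holds already for $k=2$; the $\delta_{k,2}$ term is absent for $k\ge3$ simply because $\nabla(\mu)$ annihilates the constant $-1/(\lambda_1-\lambda_2)^2$, as you note.

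For the base case, rather than brute-force coefficient matching, the paper's route is cleaner: define $\widetilde\Omega_{i,j}$ by the right-hand side and show that $(\widetilde\Omega_{i,j},\widetilde S_i)$ with $\Lambda(\gamma(\lambda))=\lambda^{-1}+\sum_i\widetilde S_i\lambda^{-i-2}$ satisfies the tau-structure axioms \eqref{taustructure}--\eqref{taustructure3} and the canonical normalization, whence $\widetilde\Omega_{i,j}=\Omega_{i,j}$ by uniqueness of the canonical tau-structure.
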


\noindent The meaning of~\eqref{propgueproved1} is the following: For any fixed permutation $(j_1,\dots,j_k)$ of $(1,\dots,k)$,
 expanding the right-hand side with respect to $|\lambda_{j_1}|>\dots>|\lambda_{j_k}|>>0$ 
gives identical formal power series with the left-hand side. This is because, after the summation 
over the $\cS_k/C_k$ and subtracting $\frac{\delta_{k,2}}{(\lambda_1-\lambda_2)^2}$, the poles in the diagonal cancel 
(cf.~the Proposition~2 of~\cite{DYZ1} for a straightforward proof of this point). 
We note that, as formal power series, the coefficients of the both sides of~\eqref{propgueproved1} are in 
$\A$. We give in Section~\ref{section2} a new proof of~\eqref{propgueproved1}, where we keep all 
derivations with coefficients in~$\A$.

\subsection{From wave functions to correlation functions}\label{Introwave}
In~\cite{DYZ2} we introduced the notion of a tuple of 
wave functions (in many cases {\it a pair}) to the 
study of tau-function without using the Sato theory. Let us generalize it 
to the Toda lattice hierarchy. Our definition of a pair will be based on the standard construction of 
wave functions for the Toda lattice hierarchy~\cite{UT,CDZ,Carlet}.
For given $(f(n),g(n))$ a pair of arbitrary elements in~$V$, let $L$ be the 
linear difference operator $L=\Lambda+f(n)+g(n) \Lambda^{-1}$. 
Denote 
\beq\label{inir}
s(n) \:= -  \bigl(1-\Lambda^{-1}\bigr)^{-1} \bigl(\log g(n)\bigr)\,.
\eeq
The function $s(n)$ is in a certain extension
~$\widehat V$ of~$V$, 
and is uniquely determined by~$\log g(n)$ up to a constant. 
Below we fix a choice of~$s(n)$.
An element $\psi_A(\lambda,n)=\bigl(1+{\rm O}(\lambda^{-1})\bigr) \,\lambda^n$ in the module 
$\widetilde V[[ \lambda^{-1} ]]  \lambda^n$
is called a (formal) wave function of type~A associated to~$f(n),g(n)$, if
$L \bigl( \psi_A(\lambda,n)\bigr)= \lambda \psi_A(\lambda,n)$. Here, 
$\wt V$ is a ring of functions of~$n$ satisfying 
\[V\subset (\Lambda -1)\bigl(\widetilde V\bigr) \subset \widetilde V\,.\]
An element $\psi_B(\lambda,n)=\bigl(1+{\rm O}(\lambda^{-1})\bigr) \, e^{-s(n)} \lambda^{-n}$ 
in the module $\widetilde V [[ \lambda^{-1} ]]  e^{-s(n)} \lambda^{-n}$ 
is called a (formal) wave function of type~$B$, if
$L\bigl( \psi_B(\lambda,n) \bigr) = \lambda \psi_B(\lambda,n) $.
Let 
$
\psi_A \in \widetilde V[[ \lambda^{-1}]] \, \lambda^n$ 
and 
$\psi_B \in \widetilde V [[ \lambda^{-1} ]]  e^{-s(n)} \lambda^{-n}$
be two wave functions of type A and of type B associated to $f(n),g(n)$, respectively. 
Define 
\beq\label{definitiondtimeind}
d(\lambda,n) \:= \psi_A(\lambda,n) \, \psi_B(\lambda,n-1) - \psi_B(\lambda,n) \, \psi_A(\lambda,n-1) \,. 
\eeq
We call $\psi_A,\psi_B$ form {\it a pair} if the following normalization condition holds:
\beq\label{normalization}
e^{s(n-1)} d(\lambda,n) \= \lambda \,. 
\eeq
The existence of a pair of wave functions is proven in Section~\ref{section3}.

Denote by $(v(n,\bt),w(n,\bt))$ the unique solution in~$V[[\bt]]^2$ to 
the Toda lattice hierarchy with $(f(n),g(n))$ as its initial value, 
by $\psi_A(\lambda,n)$ and~$\psi_B(\lambda,n)$ a pair of wave functions associated to $(f(n),g(n))$, and 
by $\tau(n,\bt)$ the DZ type tau-function of $(v(n,\bt),w(n,\bt))$. Introduce 
\beq\label{defineDini}
D(\lambda,\mu,n) \:= \frac{\psi_A(\lambda,n) \, \psi_B(\mu,n-1) \,-\, \psi_A(\lambda,n-1) \,\psi_B(\mu,n)}{\lambda-\mu} \,.
\eeq

\begin{theorem}\label{main1} 
Fix $k\geq 2$ being an integer. 
The generating series of $k$-point partial correlation functions has the following expression:
\begin{align}
& \sum_{i_1,\dots,i_k\geq 0} 
\frac{\p^k \log \tau}{\p t_{i_1} \dots \p t_{i_k}} (n,\bdzero) \, \frac1{\lambda_1^{i_1+2} \cdots \lambda_k^{i_k+2}}  \nn\\
& \qquad \qquad  \=  
(-1)^{k-1} \frac{e^{k s(n-1)}}{\prod_{j=1}^k \lambda_{j}}  
\sum_{\pi \in \mathcal{S}_k/C_k} \prod_{j=1}^k D(\lambda_{\pi(j)},\lambda_{\pi(j+1)},n) 
\,-\, \frac{\delta_{k,2}}{(\lambda_1-\lambda_2)^2}  \,.  \label{mainidentityfg}
\end{align}
\end{theorem}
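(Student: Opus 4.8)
The plan is to derive \eqref{mainidentityfg} from the matrix-resolvent formula \eqref{propgueproved1} by rewriting the basic matrix resolvent through the pair of wave functions. The first observation is that the two identities share the same left-hand side: by \eqref{Taylorlogtau} the $k$-point partial correlation function $\frac{\p^k \log \tau}{\p t_{i_1} \cdots \p t_{i_k}}(n,\bdzero)$ is the image of $\Omega_{i_1,\dots,i_k}\in\A$ under the specialization \eqref{substitutionsuw} at $\bt=\bdzero$, namely $v_i\mapsto f(n+i)$, $w_i\mapsto g(n+i)$. Let $R(\lambda,n)$ denote the image of $R(\lambda)$ under this same specialization. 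Then, since both \eqref{propgueproved1} and \eqref{mainidentityfg} carry the identical term $-\delta_{k,2}/(\lambda_1-\lambda_2)^2$ and both are sums over $\mathcal{S}_k/C_k$ of cyclic products, the theorem reduces to a single algebraic identity for $\tr \prod_{j=1}^k R(\lambda_{\pi(j)},n)$.

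The heart of the argument is a rank-one factorization of the specialized resolvent, which I would isolate as a lemma:
\[
R(\lambda,n)\=\frac{e^{s(n-1)}}{\lambda}\begin{pmatrix}\psi_A(\lambda,n)\\ \psi_A(\lambda,n-1)\end{pmatrix}\begin{pmatrix}\psi_B(\lambda,n-1) & -\,\psi_B(\lambda,n)\end{pmatrix}.
\]
To prove it I would verify that the right-hand side meets the four defining conditions of Lemma~\ref{mrlemma} and appeal to uniqueness (the order-by-order argument establishing Lemma~\ref{mrlemma} applies verbatim after specialization). Rewriting $L\psi=\lambda\psi$ as a matrix transfer relation for the column $(\psi(\lambda,n),\psi(\lambda,n-1))^{T}$ shows that the $\psi_A$-column and the $\psi_B$-row obey shift relations equivalent to \eqref{defbr1}; the determinant vanishes since the matrix has rank one; the trace equals $\frac{e^{s(n-1)}}{\lambda}$ times the Casoratian $d(\lambda,n)$ of \eqref{definitiondtimeind}, hence equals $1$ exactly by the normalization \eqref{normalization}; and $R(\lambda,n)-\mathrm{diag}(1,0)={\rm O}(\lambda^{-1})$ follows from the leading asymptotics $\psi_A=(1+{\rm O}(\lambda^{-1}))\lambda^{n}$ and $\psi_B=(1+{\rm O}(\lambda^{-1}))e^{-s(n)}\lambda^{-n}$. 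The step needing the most care is the compatibility of the $\psi_B$-row shift with the prefactor $e^{s(n-1)}/\lambda$: it hinges on the Casoratian law $d(\lambda,n+1)=w(n)\,d(\lambda,n)$ together with $e^{s(n)}w(n)=e^{s(n-1)}$, both of which trace back to \eqref{inir}.

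Granting the lemma, the remainder is bookkeeping. For rank-one matrices the cyclic trace factors as a product of cross-pairings, and the pairing of the $\psi_B$-row at $\lambda$ with the $\psi_A$-column at $\mu$ is
\[
\frac{e^{s(n-1)}}{\lambda}\bigl(\psi_A(\mu,n)\psi_B(\lambda,n-1)-\psi_A(\mu,n-1)\psi_B(\lambda,n)\bigr)\=\frac{e^{s(n-1)}}{\lambda}\,(\mu-\lambda)\,D(\mu,\lambda,n),
\]
with $D$ as in \eqref{defineDini}. Hence $\tr\prod_{j}R(\lambda_{\pi(j)},n)$ equals $\frac{e^{ks(n-1)}}{\prod_j\lambda_j}$ times $\prod_j(\lambda_{\pi(j+1)}-\lambda_{\pi(j)})\prod_j D(\lambda_{\pi(j+1)},\lambda_{\pi(j)},n)$. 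Dividing by $\prod_j(\lambda_{\pi(j)}-\lambda_{\pi(j+1)})$ as in \eqref{propgueproved1} produces a factor $(-1)^{k}$, and together with the outer minus sign this yields the prefactor $(-1)^{k-1}e^{ks(n-1)}/\prod_j\lambda_j$ of \eqref{mainidentityfg}. Finally, since reversing each cycle is a bijection of $\mathcal{S}_k/C_k$, one has $\sum_{\pi}\prod_j D(\lambda_{\pi(j+1)},\lambda_{\pi(j)},n)=\sum_{\pi}\prod_j D(\lambda_{\pi(j)},\lambda_{\pi(j+1)},n)$, which turns the summand into exactly the right-hand side of \eqref{mainidentityfg}, while the $k=2$ anomaly term matches on the two sides without change. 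I expect the main obstacle to be the lemma of the second paragraph---pinning down the correct placement of $\psi_A$ and $\psi_B$ in the column and row and verifying the asymptotic normalization \eqref{defbr3} with the $e^{s}$-bookkeeping---after which the combinatorial reduction is routine.
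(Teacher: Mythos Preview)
Your proposal is correct and follows essentially the same approach as the paper: the rank-one factorization of the specialized resolvent in your second paragraph is exactly Proposition~\ref{propRP} (restricted to $\bt=\bdzero$), proved by the same uniqueness argument, and the subsequent cyclic-trace computation matches the proof of Theorem~\ref{main1time}. The only organizational difference is that the paper first establishes the full time-dependent identity~\eqref{mainidentity} and then specializes to $\bt=\bdzero$, whereas you work directly at $\bt=\bdzero$; your route is slightly more economical for this particular statement since it avoids the time-dependent wave functions of Section~\ref{section3}.
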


Theorem~\ref{main1} gives an algorithm with the initial value $(f(n),g(n))$ as the only input for computing  
the $k_{\rm th}$-order logarithmic derivatives of the tau-function~$\tau(n,\bt)$ evaluated at $\bt=\bdzero$ 
for $k\geq 2$. Indeed, by solving the spectral problem $L(\psi)=\lambda \psi$ with $L=\Lambda+f(n)+g(n)\Lambda^{-1}$ 
and with the normalization condition~\eqref{normalization}, one constructs a pair of wave functions; 
the coefficients in the ${\bt}$-expansion of~$\log\tau(n,\bt)$ 
are then obtained through algebraic manipulations by using~\eqref{mainidentity}. 
(Recall that in the inverse scattering method (cf. e.g.~\cite{FT,fla}),   
an additional integral equation needs to be solved.) 
Two applications of Theorem~\ref{main1} are given in Section~\ref{section5}.
For a certain class of bispectral solutions (cf.~\cite{GY}) %and for perhaps other larger class of solutions, 
it would be possible to give a {\it canonical} way of 
constructing a pair of wave functions, which  
was briefly mentioned in~\cite{DYZ2} for the KdV hierarchy; we plan to do this for KdV and for Toda lattice in a future 
publication.

\medskip

\noindent \textbf{Organization of the paper.} 
In Section~\ref{section2} we review the MR method of studying tau-structure 
for the Toda lattice hierarchy. 
In Section~\ref{section3} we prove the existence of a pair of wave functions. 
In Section~\ref{section4} we prove Theorem~\ref{main1} and several other theorems. 
Applications to the computations of GUE correlators and 
Gromov--Witten invariants of~$\mathbb{P}^1$ are given in Section~\ref{section5}.
In Appendix~A we give an extension of~$\A$, define a pair of abstract pre-wave functions, and 
prove an abstract version for Theorem~\ref{main1}.

\medskip

\noindent \textbf{Acknowledgements.} 
The author is grateful to Youjin Zhang, Boris Dubrovin, Don Zagier for their advising, 
 and to Jian Zhou and Si-Qi Liu for helpful discussions. 
He thanks the referee for valuable 
suggestions; in particular, Appendix~A comes out from the suggestions. 
He also wishes to thank Boris Dubrovin for introducing GUE to him 
and for helpful suggestions and discussions on this article.
The work is partially supported by a starting research grant from 
University of Science and Technology of China.

\section{Matrix resolvent and tau-structure}\label{section2}
We continue in this section with more details in reviewing the   
MR method~\cite{DuY1} to the Toda lattice hierarchy. 
Denote by~$\cL$ the matrix Lax operator for the Toda lattice: 
\[ \cL \:=  \begin{pmatrix} \Lambda & 0\\ 0 & \Lambda\end{pmatrix} 
\+ \begin{pmatrix} v_0-\lambda & w_0\\ -1 & 0\end{pmatrix} \=
\Lambda \+ U(\lambda) \,. \]
Let $R(\lambda)$ be the basic matrix resolvent (of~$\cL$). 
Write
\begin{align}
& R(\lambda) \= \begin{pmatrix} 
1+\alpha(\lambda) & \beta(\lambda) \\ \gamma(\lambda) & -\alpha(\lambda) \end{pmatrix} \,,  \\
& \alpha(\lambda) \= \sum_{i\geq 0} \frac{a_i}{\lambda^{i+1}} \,, \quad 
\beta(\lambda) \= \sum_{i\geq 0} \frac{b_i}{\lambda^{i+1}} \,,  \quad 
\gamma(\lambda) \= \sum_{i\geq 0} \frac{c_i}{\lambda^{i+1}} \,, 
\end{align}
where $a_i,b_i,c_i\in \A$. 
From 
the defining equations~\eqref{defbr1}--\eqref{defbr3}, 
we see that the series $\alpha,\beta,\gamma$ satisfy the equations
\begin{align}
& \beta(\lambda) \= -w_0 \, \Lambda \bigl(\gamma(\lambda)\bigr) \,,  \label{rr1} \\
& \gamma(\lambda) \= \frac{1\+\alpha(\lambda) \+ \Lambda^{-1} \bigl(\alpha(\lambda)\bigr)}
{\lambda - v_{-1}} \,,  \label{rr2} \\
& \Bigl(\alpha(\lambda)- \Lambda \bigl(\alpha(\lambda)\bigr)\Bigr)(\lambda-v_0) \,-\, w_0 \, \frac{1+\alpha(\lambda)+\Lambda^{-1}\bigl(\alpha(\lambda)\bigr)}{\lambda-v_{-1}} \nn\\
& \qquad\qquad\qquad\qquad\qquad\quad  \+ w_1 \,
\frac{1+\Lambda\bigl(\alpha(\lambda)\bigr)+\Lambda^2\bigl(\alpha(\lambda)\bigr)}{\lambda-v_1} \=0\,, \label{rr3alpha}\\
& \alpha(\lambda)+\alpha(\lambda)^2 \+ \beta(\lambda) \gamma(\lambda) \= 0\,.
\end{align}
These equalities give rise to the following recursion relation for $a_i,b_i,c_i$:
\begin{align}
& b_j\= -w_0 \, \Lambda (c_{j})\,, \qquad c_{j+1} \= v_{-1} \, c_{j} \+ \bigl(1 \+ \Lambda^{-1}\bigr) \, (a_j)\,,\label{rec1}\\
& \bigl(1-\Lambda \bigr) \, (a_{j+1}) \+ v_0 \, \bigl(\Lambda-1\bigr) \, (a_j)  \+ w_1 \, \Lambda^2 (c_{j}) \,-\, w_0 \, c_{j} \= 0 \,, \label{rec2} \\
& a_{\ell}=
\sum_{i+j=\ell-1} \Bigl(w_0 \, c_{i} \, \Lambda (c_j) \,-\, a_{i} \, a_{j}  \Bigr) \label{rec3}
\end{align}
along with  
\beq\label{iniac}
a_0\=0 \,, \qquad c_0\=1\,.
\eeq
Equations~\eqref{rec1}--\eqref{iniac} are called the matrix-resolvent recursion relation.

It has been proven~\cite{DuY1} that the abstract Toda lattice hierarchy~\eqref{todaderiv} 
can be equivalently written as
\begin{align}
& D_j  \, (v_0) \= \bigl(\Lambda-1\bigr) \, (a_{j+1}) \,, \nn\\
& D_j \, (w_0) \= w_0 \, \bigl(\Lambda-1\bigr)\,(c_{j+1})\,, \nn
\end{align}
where $j\geq 0$. Define an operator~$\nabla(\lambda)$ by
\beq\label{nabladef}
\nabla(\lambda) \:= \sum_{j\geq 0} \frac{D_j}{\lambda^{j+2}} \,.
\eeq
We have 
\begin{align}
& \nabla(\lambda) \, ( v_0 ) \=   \bigl(\Lambda-1\bigr) \, \bigl(\alpha(\lambda)\bigr)  \,, \\
& \nabla(\lambda) \, ( w_0 ) \= w_0 \, \bigl(\Lambda-1\bigr) \, \bigl(\gamma(\lambda)-1\bigr) \,.
\end{align}

\begin{lemma}\label{observationDR}
There exists a unique element $W(\lambda,\mu)$ in $\A\otimes {\rm sl}_2(\CC) [[ \lambda^{-1},\mu^{-1}]] \lambda^{-1} \mu^{-1}$ 
of the form
$$
W(\lambda,\mu) \= \begin{pmatrix} 
X(\lambda,\mu) & Y(\lambda,\mu) \\ Z(\lambda,\mu) & -X(\lambda,\mu)\end{pmatrix}
$$ 
satisfying the following linear inhomogeneous equations for the entries of~$W$:
\begin{align}
& \Lambda \bigl(W(\lambda,\mu)\bigr)  \, U(\lambda)  \,-\, U(\lambda) \, W(\lambda,\mu) 
\+ \Lambda \bigl(R(\lambda)\bigr) \, \nabla(\mu) \bigl(U(\lambda)\bigr) \,-\,
\nabla(\mu) \bigl(U(\lambda)\bigr) \, R(\lambda)   \=0 \,, \label{W1} \\
& X(\lambda,\mu) \+  2  \alpha(\lambda) \, X(\lambda,\mu) 
\+ \gamma(\lambda) \, Y(\lambda,\mu) \+ \beta(\lambda) \, Z(\lambda,\mu) \= 0 \,.  \label{W2}
\end{align}
\end{lemma}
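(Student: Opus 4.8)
The plan is to recognize the system \eqref{W1}--\eqref{W2}, together with the tracelessness built into $W\in\A\otimes{\rm sl}_2(\CC)[[\lambda^{-1},\mu^{-1}]]\lambda^{-1}\mu^{-1}$ and the prescribed decay, as exactly the linearization of the defining relations \eqref{defbr1}--\eqref{defbr3} of $R(\lambda)$ along the derivation $\nabla(\mu)$. This immediately suggests the candidate $W(\lambda,\mu):=\nabla(\mu)\bigl(R(\lambda)\bigr)$, and I would establish existence by verifying it directly.

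For existence: since each $D_j$ is admissible, $\nabla(\mu)=\sum_{j\geq0}D_j\,\mu^{-j-2}$ is a derivation commuting with $\Lambda$, acting entrywise on matrices. Applying $\nabla(\mu)$ to \eqref{defbr1} and using the Leibniz rule together with $\nabla(\mu)\circ\Lambda=\Lambda\circ\nabla(\mu)$ yields $\Lambda\bigl(\nabla(\mu)R\bigr)\,U-U\,\nabla(\mu)R+\Lambda(R)\,\nabla(\mu)(U)-\nabla(\mu)(U)\,R=0$, which is precisely \eqref{W1}. Applying $\nabla(\mu)$ to $\Tr R=1$ gives $\Tr W=0$, so the candidate is genuinely ${\rm sl}_2$-valued; applying it to the relation $\alpha+\alpha^2+\beta\gamma=0$ (equivalent to $\det R=0$) gives $(1+2\alpha)\,\nabla(\mu)(\alpha)+\gamma\,\nabla(\mu)(\beta)+\beta\,\nabla(\mu)(\gamma)=0$, which is exactly \eqref{W2} with $X=\nabla(\mu)\alpha$, $Y=\nabla(\mu)\beta$, $Z=\nabla(\mu)\gamma$. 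Finally, the matrix $R(\lambda)$ minus its constant leading term is ${\rm O}(\lambda^{-1})$ by \eqref{defbr3}, while $\nabla(\mu)$ lowers the $\mu$-order to ${\rm O}(\mu^{-2})$, so $W$ lies in the required module.

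For uniqueness: I would pass to the difference $W$ of two solutions, which satisfies the \emph{homogeneous} equation $\Lambda(W)\,U-U\,W=0$ together with \eqref{W2} and the decay. The homogeneous intertwining relation $\Lambda(M)\,U=U\,M$ is closed under matrix multiplication and is solved both by $I$ and by $R$. Using that $R$ is an idempotent with simple spectrum $\{0,1\}$ (since $\Tr R=1$ and $\det R=0$), an order-by-order analysis in $\lambda^{-1}$, via the homogeneous analogues of the recursion \eqref{rec1}--\eqref{rec3}, shows that the module of such $M$ over the $\Lambda$-invariant scalars $\CC[[\lambda^{-1},\mu^{-1}]]$ is free of rank two, generated by $I$ and $R$. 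Hence every traceless solution is of the form $W=c\,\bigl(R-\tfrac12 I\bigr)$ for some scalar $c$; substituting $X=c(\tfrac12+\alpha)$, $Y=c\beta$, $Z=c\gamma$ into the left-hand side of \eqref{W2} gives $c\,\bigl[(1+2\alpha)(\tfrac12+\alpha)+2\beta\gamma\bigr]=\tfrac{c}{2}$ after using $\alpha+\alpha^2+\beta\gamma=0$, so \eqref{W2} forces $c=0$ and thus $W=0$.

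The main obstacle is the structural input in the uniqueness step, namely verifying that the twisted commutant $\{M:\Lambda(M)\,U=U\,M\}$ is precisely the rank-two module spanned by $I$ and $R$, with no spurious solutions appearing at higher order in $\lambda^{-1}$. I would settle this by the idempotent/simple-spectrum argument above, propagated through the recursion and carried out $\mu$-order by $\mu$-order (the parameter $\mu$ enters only as a series variable); once this centralizer description is in hand, condition \eqref{W2} eliminates the remaining traceless generator essentially for free.
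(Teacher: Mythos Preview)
Your existence argument is identical to the paper's: set $W=\nabla(\mu)\bigl(R(\lambda)\bigr)$ and differentiate the defining relations of~$R$.

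For uniqueness you take a genuinely different route. The paper proceeds by elimination: the $(1,2)$- and $(2,1)$-entries of~\eqref{W1} express $Y$ and~$Z$ algebraically in terms of~$X$; substituting these into~\eqref{W2} yields a single linear difference equation for~$X$ whose leading $\lambda$-coefficient is~$1$, so the homogeneous version forces each coefficient $X_{0,j}$ to vanish in turn. Your approach instead aims to classify the full traceless kernel of the homogeneous~\eqref{W1} as $c\,(R-\tfrac12 I)$ and then kill~$c$ with~\eqref{W2}. This is correct and conceptually pleasant, but the justification you offer for the centralizer step is only heuristic: the ``simple spectrum of the idempotent~$R$'' argument is linear algebra over a field, whereas here the relation $\Lambda(M)U=UM$ is a \emph{twisted} commutation over the ring $\A[[\lambda^{-1}]]$, and $U$ is not invertible in that ring. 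Making your rank-two claim rigorous still requires writing out the entries, using tracelessness to express $M_{21},M_{12},M_{22}$ in terms of $M_{11}$, and then running the same order-by-order recursion (the homogeneous analogue of~\eqref{rr3alpha}) that underlies Lemma~\ref{mrlemma}. So what you buy is a cleaner final step (the algebraic computation $c/2=0$), at the cost of front-loading essentially the same inductive work the paper performs directly on~$X$; the paper's route is shorter because it uses~\eqref{W2} from the start rather than saving it for the end.
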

\begin{proof}
The existence part of this lemma follows from Lemma~\ref{mrlemma}. Indeed, if we define 
$$
W(\lambda,\mu) \:= \nabla(\mu) \bigl(R(\lambda)\bigr)\,, 
$$
then $W(\lambda,\mu)$ satisfies \eqref{W1}--\eqref{W2}. To see the uniqueness part, we first note that 
 the (1,2)-entry and the (2,1)-entry of the matrix equation~\eqref{W1} imply that 
$Y$ and $Z$ can be uniquely expressed in terms of~$X$. Indeed, we have
\begin{align}
& Z(\lambda,\mu) %\= \nabla(\mu) \gamma(\lambda) 
\= \tfrac{(1 + \Lambda^{-1}) (X(\lambda,\mu))}
{\lambda - v_{-1}}  \+ \gamma(\lambda) \tfrac{\Lambda^{-1} \circ \nabla(\mu) (v_0)}{\lambda - v_{-1}} \,, \\
& Y(\lambda,\mu) \= - \nabla(\mu) (w_0) \, \tfrac{1\+\alpha(\lambda) \+ \Lambda (\alpha(\lambda)) }
{\lambda - v_{0}}
\,- \, w_0 \, \tfrac{(1+ \Lambda) (X(\lambda,\mu))}
{\lambda - v_{0}}  \+ \beta(\lambda) \tfrac{\nabla(\mu) (v_0)}{\lambda - v_{0}}  \,.   
\end{align}
Substituting these two expressions in~\eqref{W2} we obtain the following linear inhomogeneous 
difference equation for~$X$:
\begin{align}
& \Bigl(1 +  2 \alpha(\lambda) +  \tfrac{\beta(\lambda)}{\lambda - v_{-1}}  -  \tfrac{w_0 \gamma(\lambda)}
{\lambda - v_{0}}  \Bigr) X(\lambda,\mu)   \,- \, \tfrac{ w_0 \gamma(\lambda) }
{\lambda - v_{0}}  \Lambda \bigl(X(\lambda,\mu)\bigr)
 \+ \tfrac{\beta(\lambda)}{\lambda - v_{-1}}  \Lambda^{-1} \bigl(X(\lambda,\mu)\bigr) \nn\\
 &  \quad \=   \Bigl(1+\alpha(\lambda) + \Lambda \bigl(\alpha(\lambda)\bigr)\Bigr) \gamma(\lambda) 
\tfrac{   \nabla(\mu) (w_0)  }
{\lambda - v_{0}}
  \,-\, \beta(\lambda) \gamma(\lambda) \bigl(1+\Lambda^{-1}\bigr) \Bigl(\tfrac{\nabla(\mu) (v_0)}{\lambda - v_{0}} \Bigr) \,.
\end{align}
Suppose this equation has two solutions $X_1,X_2$ in $\A [[ \lambda^{-1},\mu^{-1}]] \lambda^{-1} \mu^{-1}$.
Let $X_0=X_1-X_2$, then $X_0\in \A [[ \lambda^{-1},\mu^{-1}]] \lambda^{-1} \mu^{-1}$, 
and it satisfies the following equation:
\begin{align}
& \Bigl(1 +  2 \alpha(\lambda) +  \tfrac{\beta(\lambda)}{\lambda - v_{-1}}  -  \tfrac{w_0 \gamma(\lambda)}
{\lambda - v_{0}}  \Bigr) X_0(\lambda,\mu)   \,- \, \tfrac{ w_0 \gamma(\lambda) }
{\lambda - v_{0}}  \Lambda \bigl(X_0(\lambda,\mu)\bigr) \nn\\
& \qquad\qquad\qquad\qquad\qquad\qquad\qquad \+ \tfrac{\beta(\lambda)}{\lambda - v_{-1}} \, \Lambda^{-1} \bigl(X_0(\lambda,\mu)\bigr)  \=   0\,.
\end{align}
It follows that $X_0$ vanishes. Indeed, write 
$X_0=\sum_{j\geq 0} X_{0,j}(\mu) \lambda^{-(j+1)}$. Observe that 
\[
\tfrac1{\lambda-v_{m}} \= \tfrac1\lambda \+ \tfrac{v_m}{\lambda^2} \+ \cdots \; \in \; \A [[ \lambda^{-1}]] \lambda^{-1} \,, \qquad m=-1,0\,, 
\]
and recall that $\alpha(\lambda),\beta(\lambda),\gamma(\lambda)\in\A[[\lambda^{-1}]]\lambda^{-1}$. Then by comparing 
the coefficients of powers of~$\lambda^{-1}$ consecutively we find that $X_{0,0}(\mu)=0$, $X_{0,1}(\mu)=0$, $X_{0,2}(\mu)=0$, $\cdots$. 
So $X_0=0$. Hence $X_1=X_2$. The lemma is proved.  
\end{proof}

Based on this lemma we now give a new proof for 
the following proposition. 
\begin{prop}[\cite{DuY1}] \label{thelemma}
The following equation holds true:
\beq\label{naR}
\nabla(\mu) \, R(\lambda) \= \frac1{\mu-\lambda} \bigl[ R(\mu), R(\lambda) \bigr] + \bigl[Q(\mu),R(\lambda)\bigr] \,, 
\eeq
where 
$$Q(\mu) \:= -\frac{\rm id}{\mu} \+ 
\begin{pmatrix} 
0 & 0 \\ 
0 & \gamma(\mu) \\ \end{pmatrix}\,.
$$
\end{prop}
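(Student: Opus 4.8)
The plan is to verify the claimed formula \eqref{naR} by checking that its right-hand side satisfies the same defining properties as $\nabla(\mu)\bigl(R(\lambda)\bigr)$, and then invoke the uniqueness established in Lemma~\ref{observationDR}. Set $W(\lambda,\mu):=\nabla(\mu)\bigl(R(\lambda)\bigr)$, which by the existence part of that lemma is the unique $\mathrm{sl}_2$-valued series satisfying \eqref{W1}--\eqref{W2}. Denote by $\wt W(\lambda,\mu)$ the right-hand side of \eqref{naR}, namely $\frac{1}{\mu-\lambda}[R(\mu),R(\lambda)]+[Q(\mu),R(\lambda)]$. First I would confirm that $\wt W$ lands in the correct space: since $R(\lambda)-\bigl(\begin{smallmatrix}1&0\\0&0\end{smallmatrix}\bigr)$ and $R(\mu)-\bigl(\begin{smallmatrix}1&0\\0&0\end{smallmatrix}\bigr)$ are $\mathrm{O}(\lambda^{-1})$ and $\mathrm{O}(\mu^{-1})$ respectively, the commutator $[R(\mu),R(\lambda)]$ carries a factor $(\mu-\lambda)$ in its leading behaviour so that the pole at $\lambda=\mu$ is spurious, and both commutator terms are traceless by construction; the $Q(\mu)$ term with its $-\mathrm{id}/\mu$ piece is designed so that the constant-in-$\lambda$ part matches. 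Thus $\wt W\in\A\otimes\mathrm{sl}_2(\CC)[[\lambda^{-1},\mu^{-1}]]\lambda^{-1}\mu^{-1}$.

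The substantive step is to verify that $\wt W$ satisfies \eqref{W1} and \eqref{W2}. For \eqref{W1}, I would apply $\nabla(\mu)$ to the defining relation $\Lambda\bigl(R(\lambda)\bigr)U(\lambda)-U(\lambda)R(\lambda)=0$ from \eqref{defbr1}, using that $\nabla(\mu)$ is a derivation commuting with $\Lambda$; this produces exactly \eqref{W1} for $W=\nabla(\mu)\bigl(R(\lambda)\bigr)$, so it suffices to show $\wt W$ solves the same equation. The key computation is that the two-parameter relation
\beq
\Lambda\bigl(R(\mu)\bigr)\,U(\mu)\,-\,U(\mu)\,R(\mu)\=0
\eeq
together with $\nabla(\mu)\bigl(U(\lambda)\bigr)=\bigl(\begin{smallmatrix}\nabla(\mu)(v_0)&\nabla(\mu)(w_0)\\0&0\end{smallmatrix}\bigr)$ lets me rewrite the inhomogeneous terms $\Lambda(R(\lambda))\nabla(\mu)(U(\lambda))-\nabla(\mu)(U(\lambda))R(\lambda)$ in \eqref{W1} in terms of $R(\mu)$ and $R(\lambda)$. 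Here the standard trick is that $\nabla(\mu)\bigl(U(\lambda)\bigr)$ can be expressed, up to the scalar $U(\mu)-U(\lambda)=(\lambda-\mu)\,\mathrm{id}$-type shift, through $\Lambda(R(\mu))U(\mu)-U(\mu)R(\mu)=0$ and the structure of $Q(\mu)$; substituting the ansatz $\wt W$ and using the Lax-type identities for $R(\mu)$ should make the spectral-parameter-$\mu$ dependence telescope, reducing \eqref{W1} to an identity among commutators. For \eqref{W2}, I would differentiate the algebraic constraints \eqref{defbr2}, specifically $\Tr R(\lambda)=1$ and $\det R(\lambda)=0$ equivalently $\alpha+\alpha^2+\beta\gamma=0$, by $\nabla(\mu)$; the trace condition forces $\wt W$ traceless while the determinant condition yields precisely \eqref{W2} after identifying $X,Y,Z$ as the entries of $\wt W$.

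By Lemma~\ref{observationDR}, since both $\nabla(\mu)\bigl(R(\lambda)\bigr)$ and $\wt W(\lambda,\mu)$ lie in the prescribed module and satisfy \eqref{W1}--\eqref{W2}, they must coincide, which is exactly \eqref{naR}. I expect the main obstacle to be the algebra in \eqref{W1}: one must show that $\frac{1}{\mu-\lambda}[R(\mu),R(\lambda)]+[Q(\mu),R(\lambda)]$, when inserted into the homogeneous operator $\Lambda(\,\cdot\,)U(\lambda)-U(\lambda)(\,\cdot\,)$, reproduces the specific inhomogeneous terms built from $\nabla(\mu)\bigl(U(\lambda)\bigr)$. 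The delicate point is handling the $\frac{1}{\mu-\lambda}$ prefactor: applying $\Lambda$ and multiplying by $U(\lambda)$ does not commute naively with this rational factor, and the role of the correction $[Q(\mu),R(\lambda)]$ is precisely to absorb the mismatch coming from the $\lambda$-dependence of $U(\lambda)$ versus $U(\mu)$. Verifying that these pieces combine correctly — in particular that the $\gamma(\mu)$ entry of $Q(\mu)$ and the $-\mathrm{id}/\mu$ term are exactly what is needed — is where the computation must be carried out with care, though it remains a finite matrix manipulation once the Lax relation for $R(\mu)$ is brought in.
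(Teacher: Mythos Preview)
Your approach is essentially the same as the paper's: define the right-hand side (the paper calls it $W^*$), check it lies in the correct module, verify it satisfies \eqref{W1}--\eqref{W2}, and invoke the uniqueness in Lemma~\ref{observationDR}. The paper simply writes out the entries $X^*,Y^*,Z^*$ of $W^*$ explicitly in terms of $\alpha,\Lambda^{\pm1}(\alpha),v_0,v_{-1},w_0$ and declares the verification of \eqref{W1}--\eqref{W2} to be ``a lengthy but straightforward calculation'', without the conceptual scaffolding you sketch for \eqref{W1}.

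One small slip to fix: your argument for \eqref{W2} is backwards. Differentiating $\det R(\lambda)=0$ by $\nabla(\mu)$ shows that $W=\nabla(\mu)\bigl(R(\lambda)\bigr)$ satisfies \eqref{W2}, which is already the existence half of Lemma~\ref{observationDR}; it tells you nothing about $\wt W$. What you must check is that $\wt W$ itself satisfies \eqref{W2}. This is in fact immediate from the commutator structure: \eqref{W2} is equivalent to $\Tr\bigl(R(\lambda)\,\wt W\bigr)=0$, and since $\wt W=[A,R(\lambda)]$ with $A=\frac{1}{\mu-\lambda}R(\mu)+Q(\mu)$, cyclicity of the trace gives $\Tr\bigl(R(\lambda)\,[A,R(\lambda)]\bigr)=0$. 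The substantive verification, as you correctly anticipate, is \eqref{W1}.
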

\begin{proof}
Define $W^*$ as the right-hand side of~\eqref{naR}, i.e., 
$$
W^*\:=\frac1{\mu-\lambda} \bigl[ R(\mu), R(\lambda) \bigr] + \bigl[Q(\mu),R(\lambda)\bigr] \,.
$$
More precisely, the entries of~$W^*$ have the expressions:
\begin{align}
& X^* \= \tfrac{w_0}{\mu -\lambda}  \Bigl(\tfrac{(\alpha(\lambda)+\Lambda(\alpha(\lambda))+1) (\Lambda^{-1}(\alpha(\mu))+\alpha(\mu)+1)}{(\lambda -v_0) (\mu -v_{-1})}  - \tfrac{(\Lambda^{-1}(\alpha(\lambda))+\alpha(\lambda)+1) (\alpha(\mu)+\Lambda(\alpha(\mu))+1)}{(\lambda -v_{-1}) (\mu -v_0 )}\Bigr),\\
& Y^* \= \tfrac{w_0}{\lambda -\mu } \Bigl(\tfrac{(\alpha(\lambda)+\Lambda(\alpha(\lambda))+1) (\Lambda^{-1}(\alpha(\mu)) (\lambda -\mu )+\alpha(\mu) (\lambda +\mu -2 v_{-1})+\lambda -v_{-1})}{(\lambda -v_0) (\mu -v_{-1})} +\tfrac{(2 \alpha(\lambda)+1) (\alpha(\mu)+\Lambda(\alpha(\mu))+1)}{v_0-\mu }\Bigr),\\
& Z^* \= \tfrac1{\lambda -\mu } \Bigl( \tfrac{(\Lambda^{-1}(\alpha(\lambda))+\alpha(\lambda)+1) (\Lambda^{-1}(\alpha(\mu))-\alpha(\mu))}{v_{-1}-\lambda } +\tfrac{(\Lambda^{-1}(\alpha(\lambda))-\alpha(\lambda)) (\Lambda^{-1}(\alpha(\mu))+\alpha(\mu)+1)}{\mu -v_{-1}} \Bigr).
\end{align}
We can then verify that $W^* \in \A\otimes {\rm sl}_2(\CC) [[ \lambda^{-1},\mu^{-1}]] \lambda^{-1} \mu^{-1}$, 
as well as that $W:=W^*$ satisfies the two equations~\eqref{W1}--\eqref{W2}.
The latter is done by a lengthy but straightforward calculation. 
The proposition is proved due to Lemma~\ref{observationDR}.
\end{proof}

If we define $\widetilde \Omega_{i,j}, \widetilde S_i$ by
\begin{align}
&\sum_{i,j\geq 0} 
\frac{\widetilde\Omega_{i,j}}{\lambda^{i+2} \mu^{j+2}} \= 
 \frac{\Tr \, \bigl(R(\lambda) R(\mu) \bigr)}{(\lambda-\mu)^2} 
\,-\, \frac{1}{(\lambda_1-\lambda_2)^2} \,,\\
&\Lambda \bigl(\gamma(\lambda)\bigr) \= \lambda^{-1}\+\sum_{i\geq 0} \widetilde S_i \, \lambda^{-i-2} \,,
\end{align}
then according to~\cite{DuY1}, 
$\widetilde \Omega_{i,j}, \widetilde S_i$ gives the canonical tau-structure for the Toda lattice, i.e.,
\[\widetilde \Omega_{i,j}\=\Omega_{i,j}, \quad \widetilde S_i\=S_i\,.\] 
These equalities together with Proposition~\ref{thelemma} 
lead to Proposition~\ref{matrixprop}; see~\cite{DuY1} for the detailed proof of Proposition~\ref{matrixprop}. 

Before ending this section, we will make two remarks.
The first remark is that all the entries of~$R(\lambda)$ 
can be expressed by the canonical tau-structure. Indeed, we have
\begin{align}
& \alpha(\lambda) \= \sum_{p\geq 0} \Omega_{p,0} \, \lambda^{-p-2}\,, 
\quad \beta(\lambda)=-w_0 \, \Lambda\bigl(\gamma(\lambda)\bigr) \,, \label{taualphabeta} \\
& \Lambda \bigl(\gamma(\lambda)\bigr) \= \lambda^{-1}\+\sum_{p\geq 0} S_p \, \lambda^{-p-2} \,. \label{taugamma}
\end{align}
The proof was in~\cite{DuY1}.
The second remark is 
 that existence of a tau-structure in general implies 
Lemma~\ref{commtoda}, and note that the proof in~\cite{DuY1} of the fact that 
$\widetilde \Omega_{i,j}, \widetilde S_i$ is a tau-structure 
 does not use the commutativity of the abstract Toda lattice hierarchy, 
 so, as a byproduct of the matrix resolvent method we get a new proof of Lemma~\ref{commtoda}
 together with a simple construction of the Toda lattice hierarchy. Similar idea was in~\cite{BDY3}.

\section{Pair of wave functions}\label{section3}
As in the Introduction, we start with the linear operator 
$L(n)= \Lambda+f(n) + g(n) \, \Lambda^{-1}$, where $f(n)$ and~$g(n)$ 
are two given arbitrary elements in~$V$.
We show in this section the existence of pairs of wave functions associated to $(f(n),g(n))$. 
Let us write
\begin{align}
& \psi_A(\lambda,n) \= e^{(\Lambda-1)^{-1}y(\lambda,n)}  \lambda^n \,, \quad y(\lambda,n)\:=\sum_{i\geq1} \frac{y_i(n)}{\lambda^i} \,, \\
& \psi_B(\lambda,n) \= e^{(\Lambda-1)^{-1}z(\lambda,n)}  e^{-s(n)} \, \lambda^{-n} \,, \quad z(\lambda,n)\:=\sum_{i\geq1} \frac{z_i(n)}{\lambda^i} \,.  
\end{align}
Then the spectral problems~$L(n) \bigl(\psi(\lambda,n)\bigr) = \lambda\psi(\lambda,n)$ for $\psi=\psi_A$ and for $\psi=\psi_B$ 
recast into the following equations:
\begin{align}
& \lambda \, e^{y(\lambda,n)} \+ f(n) \,-\, \lambda \+ g(n) \, \lambda^{-1} e^{-y(\lambda,n-1)} \= 0\,, \label{yexp} \\
& \lambda \, e^{-z(\lambda,n-1)} \+ f(n) \,-\, \lambda \+ g(n+1) \, \lambda^{-1} e^{z(\lambda,n)} \= 0\,, \label{zexp}
\end{align}
yielding recursions of the form (as equivalent conditions to~\eqref{yexp}--\eqref{zexp})
\begin{align}
& y_{k+1}(n) \= 
- \sum_{m_1,\dots,m_k\geq 0\atop \sum_{i=1}^{k} im_i=k+1} 
\frac{\prod_{i=1}^k y_i(n)^{m_i}}{\prod_{i=1}^k m_i!} 
-f(n)\delta_{k,0} \nn\\
&\qquad\qquad\qquad -g(n) \sum_{m_1,\dots,m_{k-1}\geq 0\atop \sum_{i=1}^{k-1} im_i=k-1} 
\frac{\prod_{i=1}^{k-1} (-1)^{m_i} y_i(n-1)^{m_i}}{\prod_{i=1}^{k-1} m_i!} \,, \label{ykn}\\
& z_{k+1}(n) \= 
\sum_{m_1,\dots,m_k\geq 0\atop \sum_{i=1}^{k} im_i=k+1} 
\frac{\prod_{i=1}^k (-1)^{m_i} z_i(n)^{m_i}}{\prod_{i=1}^k m_i!} 
\+ f(n+1)\delta_{k,0} \nn\\
&\qquad\qquad\qquad \+ g(n+2) \sum_{m_1,\dots,m_{k-1}\geq 0\atop \sum_{i=1}^{k-1} im_i=k-1} 
\frac{\prod_{i=1}^{k-1} z_i(n+1)^{m_i}}{\prod_{i=1}^{k-1} m_i!} \,, \label{zkn}
\end{align}
where $k\geq 0$. From these recursions, it easily follows that $y_k,z_k\in V$, $k\geq 0$. This proves the 
existence of wave functions of type A and of type B meeting the definitions in Section~\ref{Introwave}. 
Clearly, $\psi_A$ and $\psi_B$ are unique up to multiplying by 
arbitrary series  $G(\lambda)$ and $E(\lambda)$ of~$\lambda^{-1}$ with constant coefficient of the form 
$G(\lambda)\in 1+ \CC[[\lambda^{-1}]]\lambda^{-1}$ and 
$E(\lambda) \in 1+ \CC[[\lambda^{-1}]]\lambda^{-1}$.
Since $\psi_A(\lambda,n)=\bigl(1+{\rm O}(\lambda^{-1})\bigr) \,\lambda^n$ 
and since $\psi_B(\lambda,n)=\bigl(1+{\rm O}(\lambda^{-1})\bigr) \, e^{-s(n)} \lambda^{-n}$, 
%By using~\eqref{psi12timeind} 
we find that the $d(\lambda,n)$ defined in~\eqref{definitiondtimeind} must have the form
$$
d(\lambda,n)\= \lambda \, e^{-s(n-1)} \, e^{\sum_{k\geq 1} d_k(n) \, \lambda^{-k}} \,. 
$$
Then by using the definitions of wave functions and of~$s(n)$ one easily derives that 
\beq \label{alreadyprovedsd}
e^{s(n)} \, d(\lambda,n+1)\= e^{s(n-1)} \, d(\lambda,n) \,. 
\eeq
It follows that all $d_k(n)$, $k\geq 1$ are constants. Therefore, for any fixed choice of $\psi_A$, 
we can suitably choose the factor $E(\lambda)$ for~$\psi_B$ such that $\psi_A,\psi_B$ form a pair. 
This proves the existence of pair of wave functions associated to $f(n),g(n)$.

We proceed with the time-dependence. Let $(v(n,\bt),w(n,\bt))$ be the unique solution 
in $V[[\bt]]^2$ to the Toda lattice hierarchy
satisfying the initial condition $v(n,\bdzero)=f(n)$, $w(n,\bdzero)=g(n)$. 
Let $L(n,\bt) := \Lambda+v(n,\bt) + w(n,\bt) \, \Lambda^{-1}$. 
Define $\sigma(n,\bt)$ as the unique up to a constant function satisfying the following equations: 
\begin{align}
& w(n,\bt) \= e^{\sigma(n-1,\bt)-\sigma(n,\bt)} \,, \\ 
& \frac{\p \sigma(n,\bt)}{\p t_p} \=  
- S_p (n,\bt)\,,\quad p\geq 0\,.
%\big|_{v_i\,\mapsto\, -\dot q(n+i,\bt), \, w_i\, \mapsto\, e^{q(n-i-1,\bt)-q(n-i,\bt)}, i\in \ZZ} \,.  
\label{Toda-q}
\end{align}
%This function will be used in the later formulations.
%\clr{Similarly to the time-independent case, 
%Let $\psi_A$ and $\psi_B$ be the wave functions 
%associated to $(v(n,\bt),w(n,\bt))$; see Section~\ref{section3} for the definition.
An element $\psi_A(n,\bt,\lambda)= (1+{\rm O}(\lambda^{-1})) \, \lambda^n \, e^{\sum_{k\geq 0} t_k\lambda^{k+1}}$ in  
$\widetilde V\bllm\bt, \lambda^{-1} \brrm \, \lambda^n e^{\sum_{k\geq 0} t_k\lambda^{k+1}}$ 
is called a wave function of type A associated to $(v(n,\bt),w(n,\bt))$ if 
\begin{align}
& L(n,\bt) \, \bigl(\psi_A(\lambda,n,\bt)\bigr) \= \lambda \, \psi_A(\lambda,n,\bt)  \,, \quad 
\frac{\p\psi_A}{\p t_k} \= \bigl(L^{k+1}\bigr)_+ \bigl(\psi_A\bigr) \,. \label{wave112}
\end{align}
An element $\psi_B(n,\bt,\lambda) = (1+{\rm O}(\lambda^{-1})) \lambda^{-n} e^{-\sum_{k\geq 0} t_k\lambda^{k+1}}$ in 
$\widetilde V\bllm\bt,\lambda^{-1} \brrm  e^{-\sigma(n,\bt)} \lambda^{-n} e^{-\sum_{k\geq 0} t_k\lambda^{k+1}}$
is called a wave function of type B associated to $(v(n,\bt),w(n,\bt))$ if 
\begin{align}
& L(n,\bt) \, \bigl(\psi_B(\lambda,n,\bt)\bigr) \= \lambda \, \psi_B(\lambda,n,\bt)  \,, \quad 
\frac{\p\psi_B}{\p t_k} \=  - \bigl(L^{k+1}\bigr)_- \bigl(\psi_B\bigr) \,. \label{wave212}
\end{align}
The existence of wave functions $\psi_A$ and $\psi_B$ of type~A and of type~B associated to $(v(n,\bt),w(n,\bt))$
is a standard result in the theory of integrable systems (cf.~\cite{UT, CDZ,Carlet,DYZ2}); therefore we omit its details.
Denote 
\beq\label{wavepairdd}
d(\lambda,n,\bt)\:= \psi_A(\lambda,n,\bt) \, \psi_B(\lambda,n-1,\bt) \,-\, \psi_B(\lambda,n,\bt) \, \psi_A(\lambda,n-1,\bt) \,,
\eeq
and introduce  
\beq\label{definitionm}
m(\mu, \lambda, n,\bt) \:= \frac{R(\mu,n,\bt)}{\mu-\lambda} \+ Q(\mu,n,\bt)\,,
\eeq
where
%\beq\label{defnitionQ}
$Q(\mu,n,\bt) := -\frac{\rm id}{\mu} \+  \begin{pmatrix} 0 & 0 \\ 0 & \gamma(\mu,n,\bt) \\ \end{pmatrix}$.
%\eeq
We know from e.g.~\cite{DuY1} that the wave function~$\psi_A(\lambda,n,\bt)$ satisfies 
\begin{align}
& \nabla(\mu) \, \begin{pmatrix} \psi_A(\lambda,n,\bt) \\ \psi_A(\lambda,n-1,\bt) \end{pmatrix} 
\= 
m(\mu,\lambda,n,\bt) \, \begin{pmatrix}  \psi_A(\lambda,n,\bt) 
\\ \psi_A(\lambda,n-1,\bt) \end{pmatrix} \,.\label{nablapsiA}
\end{align}
Similarly, the wave function~$\psi_B(\lambda,n,\bt)$ satisfies 
\begin{align}
& \nabla(\mu) \, \begin{pmatrix} \psi_B(\lambda,n,\bt) \\ \psi_B(\lambda,n-1,\bt) \end{pmatrix} 
\= 
\biggl(m(\mu,\lambda,n,\bt) \, -  \,  \frac{\lambda}{\mu(\mu-\lambda)} I \biggr)\, \begin{pmatrix}  \psi_B(\lambda,n,\bt) 
\\ \psi_B(\lambda,n-1,\bt) \end{pmatrix} \,.\label{nablapsiB}
\end{align}
Here, $I$ denotes the $2\times 2$ identity matrix.

\begin{lemma}\label{nablad} The following formula holds true:
\beq
\nabla (\mu) \, \bigl(d(\lambda,n,\bt)\bigr)
 \=  \biggl( - \frac1{\mu} + \gamma(\mu,n,\bt)  \biggr) \, d(\lambda,n,\bt)\,.
\eeq
\end{lemma}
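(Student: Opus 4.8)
The plan is to recognize $d(\lambda,n,\bt)$ from \eqref{wavepairdd} as a $2\times2$ determinant and to differentiate it by the Leibniz rule, feeding in the two linear systems \eqref{nablapsiA}--\eqref{nablapsiB}. Let $\Psi_A$ and $\Psi_B$ denote the column vectors $\begin{pmatrix}\psi_A(\lambda,n,\bt)\\ \psi_A(\lambda,n-1,\bt)\end{pmatrix}$ and $\begin{pmatrix}\psi_B(\lambda,n,\bt)\\ \psi_B(\lambda,n-1,\bt)\end{pmatrix}$ that already appear in those systems. Then \eqref{wavepairdd} says precisely that $d(\lambda,n,\bt)=\det\bigl(\Psi_A,\Psi_B\bigr)$, the determinant of the matrix with columns $\Psi_A,\Psi_B$.

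Since $\nabla(\mu)=\sum_{j\geq0}D_j/\mu^{j+2}$ is a $\mu^{-1}$-weighted sum of the derivations $D_j=\p/\p t_j$, it acts on this determinant columnwise:
\[
\nabla(\mu)\bigl(d\bigr)\=\det\bigl(\nabla(\mu)\Psi_A,\,\Psi_B\bigr)\+\det\bigl(\Psi_A,\,\nabla(\mu)\Psi_B\bigr)\,.
\]
Substituting \eqref{nablapsiA} and \eqref{nablapsiB} and abbreviating $m=m(\mu,\lambda,n,\bt)$ for the common coefficient matrix in \eqref{definitionm}, this becomes
\[
\nabla(\mu)\bigl(d\bigr)\=\det\bigl(m\Psi_A,\,\Psi_B\bigr)\+\det\bigl(\Psi_A,\,m\Psi_B\bigr)\,-\,\frac{\lambda}{\mu(\mu-\lambda)}\det\bigl(\Psi_A,\,\Psi_B\bigr)\,,
\]
where the last term collects the extra scalar $-\tfrac{\lambda}{\mu(\mu-\lambda)}I$ from \eqref{nablapsiB}.

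Next I would invoke the elementary identity $\det(Mu,v)+\det(u,Mv)=\tr(M)\,\det(u,v)$, valid for any $2\times2$ matrix $M$ and vectors $u,v$, which collapses the first two terms to $\tr(m)\,d$. It then only remains to evaluate the scalar coefficient. From \eqref{definitionm}, $\tr(m)=\tfrac{\tr R(\mu)}{\mu-\lambda}+\tr Q(\mu)$; using $\tr R(\mu)=1$ from \eqref{defbr2} and $\tr Q(\mu)=-\tfrac{2}{\mu}+\gamma(\mu,n,\bt)$, one gets $\tr(m)=\tfrac{1}{\mu-\lambda}-\tfrac{2}{\mu}+\gamma(\mu,n,\bt)$. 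The identity $\tfrac{1}{\mu-\lambda}-\tfrac{\lambda}{\mu(\mu-\lambda)}=\tfrac{1}{\mu}$ then reduces $\tr(m)-\tfrac{\lambda}{\mu(\mu-\lambda)}$ to $-\tfrac{1}{\mu}+\gamma(\mu,n,\bt)$, which is exactly the asserted coefficient.

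The computation is essentially one line once the determinantal structure is spotted, so I do not anticipate a serious obstacle. The only points that deserve a word of justification are that $\nabla(\mu)$, being a formal series in the time-derivations $D_j$, genuinely satisfies the Leibniz rule on the products of wave functions living in the relevant completed modules, and the purely algebraic trace identity above; both are routine.
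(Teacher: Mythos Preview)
Your argument is correct and is exactly the paper's own proof, only with the ``straightforward computation'' spelled out: the paper also obtains $\nabla(\mu)(d)=\bigl(\tr m(\mu,\lambda,n,\bt)-\tfrac{\lambda}{\mu(\mu-\lambda)}\bigr)d$ from \eqref{nablapsiA}--\eqref{nablapsiB} and then simplifies the scalar. Your use of the determinantal Leibniz rule and the identity $\det(Mu,v)+\det(u,Mv)=\tr(M)\det(u,v)$ makes explicit what the paper leaves implicit.
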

\begin{proof}
Recalling the definition~\eqref{wavepairdd} for~$d$ and 
using~\eqref{nablapsiA}--\eqref{nablapsiB} we find
\begin{align}
\nabla (\mu) \, \bigl( d(\lambda,n,\bt) \bigr)
& \=  \biggl( \tr \bigl(m(\mu,\lambda,n,\bt)\bigr) - \frac{\lambda}{\mu(\mu-\lambda)} \biggr) \, d(\lambda,n,\bt)\,.
\end{align}
The lemma is then proved via a straightforward computation.
\end{proof}

\begin{defi}
We say $\psi_A,\psi_B$ form {\it a pair} if $ e^{\sigma(n-1,\bt)} d(\lambda,n,\bt)=\lambda$.
\end{defi}

The next lemma shows the existence of a pair.

\begin{lemma}\label{existencepair}
There exist a pair of wave functions $\psi_A,\psi_B$ associated to $(v(n,\bt),w(n,\bt))$. Moreover, 
the freedom of the pair is characterized by a factor~$G(\lambda)$ via
\begin{align}
& \psi_A(\lambda,n,\bt) \; \mapsto \; G(\lambda) \,  \psi_A(\lambda,n,\bt) \,, \quad 
\psi_B(\lambda,n,\bt) \; \mapsto \; \frac1{G(\lambda)} \,  \psi_B(\lambda,n,\bt) \,, \label{psi12g}  \\
& G(\lambda) \= \sum_{j\geq 0} G_j \lambda^{-j}\,, \quad G_0\=1  \label{glambda}
\end{align}
with $G_j$, $j\geq 1$ being arbitrary constants.
\end{lemma}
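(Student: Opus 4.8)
The plan is to reduce the statement to a one-variable analysis in $\lambda^{-1}$ with constant coefficients, exactly as in the time-independent argument given earlier in Section~\ref{section3}. First I would invoke the already-established existence of wave functions $\psi_A$ of type~A and $\psi_B$ of type~B associated to $(v(n,\bt),w(n,\bt))$, recalling that each is determined only up to multiplication by a series in $\lambda^{-1}$ with constant coefficients: $\psi_A \mapsto G(\lambda)\psi_A$ with $G(\lambda)\in 1+\CC[[\lambda^{-1}]]\lambda^{-1}$, and $\psi_B \mapsto E(\lambda)\psi_B$ with $E(\lambda)\in 1+\CC[[\lambda^{-1}]]\lambda^{-1}$. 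The key structural fact is that multiplying $\psi_A$ by $G(\lambda)$ and $\psi_B$ by $E(\lambda)$ scales $d(\lambda,n,\bt)$ by $G(\lambda)E(\lambda)$, since $d$ is bilinear in the two wave functions; this is what will let me adjust the normalization.

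Next I would show that for \emph{any} choice of the pair $(\psi_A,\psi_B)$, the quantity $e^{\sigma(n-1,\bt)}\,d(\lambda,n,\bt)/\lambda$ is a series in $\lambda^{-1}$ with \emph{constant} coefficients (independent of $n$ and $\bt$). The $n$-independence follows from the analogue of~\eqref{alreadyprovedsd}: using the definitions of the two wave functions together with $w(n,\bt)=e^{\sigma(n-1,\bt)-\sigma(n,\bt)}$, one checks $e^{\sigma(n,\bt)}d(\lambda,n+1,\bt)=e^{\sigma(n-1,\bt)}d(\lambda,n,\bt)$ exactly as in the stationary case. The $\bt$-independence is the genuinely new ingredient, and it is supplied by Lemma~\ref{nablad}: since $\nabla(\mu)\bigl(e^{\sigma(n-1,\bt)}\bigr)=e^{\sigma(n-1,\bt)}\,\nabla(\mu)(\sigma(n-1,\bt))$ and $\nabla(\mu)(\sigma(n-1,\bt))=-\sum_p S_p(n-1,\bt)\mu^{-p-2}$ by~\eqref{Toda-q}, which by~\eqref{taugamma} equals $\frac1\mu-\gamma(\mu,n,\bt)$, the Leibniz rule gives $\nabla(\mu)\bigl(e^{\sigma(n-1,\bt)}d(\lambda,n,\bt)\bigr)=0$. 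Hence every $t_p$-derivative of $e^{\sigma(n-1,\bt)}d(\lambda,n,\bt)$ vanishes, so its $\lambda^{-k}$-coefficients are constants.

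With these two reductions in hand, write $e^{\sigma(n-1,\bt)}d(\lambda,n,\bt)=\lambda\,c(\lambda)$ with $c(\lambda)\in 1+\CC[[\lambda^{-1}]]\lambda^{-1}$ a constant series. To produce a pair I simply rescale: replacing $\psi_B$ by $E(\lambda)\psi_B$ with $E(\lambda):=c(\lambda)^{-1}$ multiplies $d$ by $c(\lambda)^{-1}$ and yields $e^{\sigma(n-1,\bt)}d=\lambda$, as required; this proves existence. For the uniqueness/freedom statement, I would observe that the pair condition $e^{\sigma(n-1,\bt)}d=\lambda$ forces $G(\lambda)E(\lambda)=1$ under any change $(\psi_A,\psi_B)\mapsto(G\psi_A,E\psi_B)$ preserving the pair property, so $E(\lambda)=1/G(\lambda)$ and the residual freedom is exactly a single series $G(\lambda)=\sum_{j\geq0}G_j\lambda^{-j}$ with $G_0=1$ and the $G_j$ ($j\geq1$) arbitrary constants, giving~\eqref{psi12g}--\eqref{glambda}.

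I expect the main obstacle to be the $\bt$-independence step, namely assembling Lemma~\ref{nablad} with the defining relation~\eqref{Toda-q} for $\sigma$ and the tau-structure identity~\eqref{taugamma} so that the two contributions $-\tfrac1\mu+\gamma(\mu,n,\bt)$ cancel cleanly against $\nabla(\mu)(\sigma(n-1,\bt))$; the $n$-independence and the rescaling arguments are routine adaptations of the stationary case already treated above.
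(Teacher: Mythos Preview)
Your proposal is correct and follows essentially the same route as the paper's own proof: establish $n$-independence of $e^{\sigma(n-1,\bt)}d(\lambda,n,\bt)$ from the relation $w=e^{\sigma(n-1)-\sigma(n)}$, establish $\bt$-independence by combining Lemma~\ref{nablad} with~\eqref{Toda-q} and~\eqref{taugamma} so that the two contributions cancel, and then absorb the resulting constant series into~$\psi_B$. The paper organizes the computation identically (its \eqref{deduce1}--\eqref{deduce2} are exactly your two reduction steps), so there is no substantive difference between your argument and the original.
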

%\noindent{\it Proof} of Lemma~\ref{existencepair}. ~
\begin{proof} 
Firstly, the freedom of a wave function~$\psi_A$ 
associated to $(v,w)$ is 
characterized by the multiplication by a factor $G(\lambda)$ of the form~\eqref{glambda}.
Fix an arbitrary choice of~$\psi_A$. For $\psi_B$ being 
a wave function of type~B associated to $(v,w)$, 
from~\eqref{wavepairdd} and the definitions of wave functions we know $e^{\sigma(n-1,\bt)}d(\lambda,n,\bt)$ must have 
the form
\beq\label{deduce0}
e^{\sigma(n-1,\bt)}d(\lambda,n,\bt) \= \lambda \, e^{\sum_{k\geq 1} d_k(n,\bt) \, \lambda^{-k}}
\eeq
for some $d_k(n,\bt)$, $k\geq 1$.
By using \eqref{wave112}, \eqref{wave212}, \eqref{wavepairdd} we find 
\begin{align}
& d(\lambda,n+1,\bt) \= w(n,\bt) \, d(\lambda,n,\bt) \= e^{\sigma(n-1,\bt)-\sigma(n,\bt)} \, d(\lambda,n,\bt)  \,,\nn
\end{align}
i.e.,
\beq \label{deduce1}
e^{\sigma(n,\bt)} d(\lambda,n+1,\bt)\= e^{\sigma(n-1,\bt)} \, d(\lambda,n,\bt)  \,, 
\eeq
Using Lemma~\ref{nablad} and~\eqref{Toda-q} we have 
\begin{align}
& \quad\quad \nabla(\mu) \Bigl(e^{\sigma(n-1,\bt)}d(\lambda,n,\bt)\Bigr) \nn\\
&  \= e^{\sigma(n-1,\bt)} \nabla(\mu) \bigl(\sigma(n-1,\bt)\bigr) d(\lambda,n,\bt) \+ 
e^{\sigma(n-1,\bt)} \nabla(\mu) \bigl(d(\lambda,n,\bt)\bigr) \nn\\
&  \= - e^{\sigma(n-1,\bt)}\sum_{p\geq 0} \frac{S_p(n-1,\bt)}{\mu^{p+2}} d(\lambda,n,\bt) \+ 
e^{\sigma(n-1,\bt)} d(\lambda,n,\bt) \biggl( - \frac1{\mu} \+ \gamma(\mu,n,\bt)  \biggr) \= 0 \,.\nn
\end{align}
So we have
\beq \label{deduce2}
\frac{\p (e^{\sigma(n-1,\bt)}d(\lambda,n,\bt))}{\p t_p} \= 0 \,, \quad \forall\, p\geq 0\,. 
\eeq
We deduce from~\eqref{deduce0}, \eqref{deduce1}, \eqref{deduce2} that $d_k(n,\bt)$, $k\geq 1$ are all constants. 
Therefore, there exists a unique choice of~$\psi_B$ such that 
$\psi_A,\psi_B$ form a pair. The lemma is proved. 
\end{proof} 

\section{The $k$-point generating series} \label{section4}
Let $(v,w)=(v(n,\bt),w(n,\bt))\in V[[\bt]]^2$ be the unique solution to the Toda lattice hierarchy 
with the initial value $(v(n,\bdzero),w(n,\bdzero))=(f(n),g(n))$, and $(\psi_A,\psi_B)$ a pair of wave functions associated to~$(v,w)$.  
Define 
\beq\label{Psidefinition}
\Psi_{\rm pair}(\lambda,n,\bt) \= \begin{pmatrix} \psi_A (\lambda,n,\bt) & \psi_B(\lambda,n,\bt) 
\\ \psi_A(\lambda,n-1,\bt) & \psi_B(\lambda,n-1,\bt) \end{pmatrix}\,. 
\eeq
\begin{prop} \label{propRP}
The following identity holds true: 
\beq\label{RP}
R(\lambda, n, \bt) \; \equiv \;   \Psi_{\rm pair}(\lambda,n,\bt) \, \begin{pmatrix} 1 & 0 \\ 0 & 0 \end{pmatrix} 
\, \Psi_{\rm pair}^{-1}(\lambda,n,\bt)  \,.  
\eeq
\end{prop}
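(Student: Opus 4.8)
The plan is to set $E:=\begin{pmatrix}1&0\\0&0\end{pmatrix}$ and
$\widetilde R(\lambda,n,\bt):=\Psi_{\rm pair}(\lambda,n,\bt)\,E\,\Psi_{\rm pair}^{-1}(\lambda,n,\bt)$, and to prove \eqref{RP} by showing that $\widetilde R$ satisfies the substituted version (under \eqref{substitutionsuw}) of the three defining conditions \eqref{defbr1}--\eqref{defbr3} of the basic matrix resolvent, and then invoking the uniqueness in Lemma~\ref{mrlemma}. First note that $\widetilde R$ is well defined: by \eqref{wavepairdd}, $\det\Psi_{\rm pair}=d(\lambda,n,\bt)$, and the pair condition $e^{\sigma(n-1,\bt)}d(\lambda,n,\bt)=\lambda$ gives $\det\Psi_{\rm pair}=\lambda\,e^{-\sigma(n-1,\bt)}$, which is invertible.

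Next I would record the action of $\Lambda$ on $\Psi_{\rm pair}$. The spectral equations in \eqref{wave112} and \eqref{wave212} both read $\psi(\lambda,n+1)=(\lambda-v(n,\bt))\,\psi(\lambda,n)-w(n,\bt)\,\psi(\lambda,n-1)$ for $\psi=\psi_A$ and for $\psi=\psi_B$; since each column of $\Psi_{\rm pair}$ in \eqref{Psidefinition} is of the form $(\psi(\lambda,n),\psi(\lambda,n-1))^{T}$, this is precisely
\[
\Lambda\bigl(\Psi_{\rm pair}\bigr)\=-\,U(\lambda,n,\bt)\,\Psi_{\rm pair}\,,\qquad\text{so}\qquad U\=-\,\Lambda\bigl(\Psi_{\rm pair}\bigr)\,\Psi_{\rm pair}^{-1}\,.
\]
A one-line computation then gives both $\Lambda(\widetilde R)\,U$ and $U\,\widetilde R$ equal to $-\Lambda(\Psi_{\rm pair})\,E\,\Psi_{\rm pair}^{-1}$, so \eqref{defbr1} holds. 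Condition \eqref{defbr2} is immediate from conjugation invariance of trace and determinant, since $\tr E=1$ and $\det E=0$.

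The substance is \eqref{defbr3}. Expanding the conjugation with $E$ gives
\[
\widetilde R\=\frac{e^{\sigma(n-1,\bt)}}{\lambda}\begin{pmatrix}
\psi_A(\lambda,n)\,\psi_B(\lambda,n-1) & -\psi_A(\lambda,n)\,\psi_B(\lambda,n)\\[2pt]
\psi_A(\lambda,n-1)\,\psi_B(\lambda,n-1) & -\psi_A(\lambda,n-1)\,\psi_B(\lambda,n)
\end{pmatrix}.
\]
I would then insert $\psi_A=(1+{\rm O}(\lambda^{-1}))\,\lambda^n e^{\theta}$ and $\psi_B=(1+{\rm O}(\lambda^{-1}))\,e^{-\sigma(n,\bt)}\lambda^{-n}e^{-\theta}$, where $\theta=\sum_{k\geq 0}t_k\lambda^{k+1}$. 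The crucial point is that in every entry the essential singularities $e^{\pm\theta}$ cancel, so that $\widetilde R$ becomes a genuine matrix-valued series in $\lambda^{-1}$; using the defining relation $w(n,\bt)=e^{\sigma(n-1,\bt)-\sigma(n,\bt)}$ one finds the $(1,1)$ entry $=1+{\rm O}(\lambda^{-1})$, the $(1,2)$ and $(2,1)$ entries $={\rm O}(\lambda^{-1})$, and the $(2,2)$ entry $={\rm O}(\lambda^{-2})$, hence $\widetilde R-E\in{\rm O}(\lambda^{-1})$ entrywise. I expect this to be the main obstacle, since it is exactly where the pair normalization $e^{\sigma(n-1,\bt)}d(\lambda,n,\bt)=\lambda$ is essential: it kills the exponential growth coming from $e^{\theta}$ and fixes the leading coefficient to $1$.

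Finally, the matrix-resolvent recursion \eqref{rec1}--\eqref{iniac} is a consequence of \eqref{defbr1}--\eqref{defbr3} and determines all coefficients $a_i,b_i,c_i$ uniquely from $v(n,\bt),w(n,\bt)$ together with $a_0=0,\ c_0=1$; this is precisely the uniqueness asserted in Lemma~\ref{mrlemma}, and since the recursion is purely algebraic it applies verbatim after the substitution \eqref{substitutionsuw}. As both $\widetilde R$ and $R(\lambda,n,\bt)$ satisfy the three conditions, their coefficients obey the same recursion with the same initial data, whence $\widetilde R=R(\lambda,n,\bt)$, which is \eqref{RP}.
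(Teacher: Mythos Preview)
Your proof is correct and follows essentially the same route as the paper: define the conjugate $\widetilde R=\Psi_{\rm pair}\,E\,\Psi_{\rm pair}^{-1}$, verify the three defining conditions \eqref{defbr1}--\eqref{defbr3} (the paper phrases \eqref{defbr1} as $[\mathcal{L},M](\Psi_{\rm pair})=0$, which unwinds to your computation), check the leading asymptotics from the explicit entry formula, and invoke the uniqueness in Lemma~\ref{mrlemma}. One small remark: the cancellation of $e^{\pm\theta}$ in each entry is automatic from the product structure $\psi_A\psi_B$ and does not depend on the pair normalization; the latter only converts $d(\lambda,n,\bt)^{-1}$ into the clean prefactor $e^{\sigma(n-1,\bt)}/\lambda$, and in fact the asymptotic step would work for any wave functions since $d=\lambda\,e^{-\sigma(n-1,\bt)}(1+{\rm O}(\lambda^{-1}))$ in general.
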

\begin{proof} Define   
\[M\=M(\lambda,n,\bt)\:= 
\Psi_{\rm pair}(\lambda,n,\bt) \, \begin{pmatrix} 1 & 0 \\ 0 & 0 \end{pmatrix} \, \Psi_{\rm pair}^{-1}(\lambda,n,\bt)\,.\]
It is easy to verify that $M$ satisfies 
\[ 
\bigl[\mathcal{L},M\bigr] \,  \bigl(\Psi_{\rm pair}\bigr) \= 0 \,, \qquad \det \, M \= 0 \,.
\]
The entries of~$M$ in terms of the pair of wave functions read
\begin{align}
M=\frac1{d(\lambda,n,\bt)}
 \begin{pmatrix}  
 \psi_A(\lambda,n,\bt) \, \psi_B(\lambda,n-1,\bt) & -\psi_A(\lambda,n,\bt) \, \psi_B(\lambda,n,\bt) \\  
 \psi_A(\lambda,n-1,\bt) \, \psi_B(\lambda,n-1,\bt) & -\psi_A(\lambda,n-1,\bt) \, \psi_B(\lambda,n,\bt)
  \end{pmatrix}  \,, 
\end{align}
where we recall that $d(\lambda,n,\bt) = \psi_A(\lambda,n,\bt) \, \psi_B(\lambda,n-1,\bt) - \psi_B(\lambda,n,\bt) \, \psi_A(\lambda,n-1,\bt)$, which coincides with the determinant of~$\Psi(\lambda,n,\bt)$. 
It follows from $\psi_A(\lambda,n,\bt)= (1+{\rm O}(\lambda^{-1})) \, \lambda^n \, e^{\sum_{k\geq 0} t_k\lambda^{k+1}}$ and 
$\psi_B(\lambda,n,\bt) =  (1+{\rm O}(\lambda^{-1}))\, e^{-\sigma(n,\bt)} \lambda^{-n} \, e^{-\sum_{k\geq 0} t_k\lambda^{k+1}}$ that 
\beq
M(\lambda) \,-\, \begin{pmatrix} 1 & 0 \\ 0 & 0 \end{pmatrix} 
\; \in \; {\rm Mat}\left(2,\widetilde V[[\bt,\lambda^{-1}]]\lambda^{-1}\right) \,.
\eeq
The proposition then follows from the uniqueness theorem proven in Section~\ref{section2}.
\end{proof}

Define
\beq\label{defineD}
D(\lambda,\mu,n,\bt) \:= \frac{\psi_A(\lambda,n,\bt) \, \psi_B(\mu,n-1,\bt) \,-\, \psi_A(\lambda,n-1,\bt) \,\psi_B(\mu,n,\bt)}{\lambda-\mu} \,.
\eeq

\begin{theorem}\label{main1time} 
Fix $k\geq 2$ being an integer. 
The generating series of $k$-point correlation functions of the solution $(v(n,\bt),w(n,\bt))$ has the following expression:
\begin{align}
& \sum_{i_1,\dots,i_k\geq 0} 
\frac{\Omega_{i_1,\dots,i_k}(n,\bt)}{\lambda_1^{i_1+2} \cdots \lambda_k^{i_k+2}} \nn\\
& \qquad \qquad  \=  
(-1)^{k-1} \frac{e^{k \sigma(n-1,\bt)}}{\prod_{j=1}^k \lambda_{j}}  
\sum_{\pi \in \mathcal{S}_k/C_k} \prod_{j=1}^k D(\lambda_{\pi(j)},\lambda_{\pi(j+1)},n,\bt) 
 \,-\, \frac{\delta_{k,2}}{(\lambda_1-\lambda_2)^2}  \,.  \label{mainidentity}
\end{align}
\end{theorem}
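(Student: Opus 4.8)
The plan is to feed the factorization of the basic matrix resolvent through the pair of wave functions (Proposition~\ref{propRP}) into the matrix-resolvent formula for the correlation functions (Proposition~\ref{matrixprop}), and then to reduce the resulting matrix traces to the scalar quantities $D(\lambda,\mu,n,\bt)$ of~\eqref{defineD}. By Proposition~\ref{matrixprop} the left-hand side of~\eqref{mainidentity} equals
\[
-\sum_{\pi\in\mathcal{S}_k/C_k}\frac{\tr\prod_{j=1}^k R(\lambda_{\pi(j)},n,\bt)}{\prod_{j=1}^k\bigl(\lambda_{\pi(j)}-\lambda_{\pi(j+1)}\bigr)}-\frac{\delta_{k,2}}{(\lambda_1-\lambda_2)^2}\,,
\]
so everything reduces to computing each trace $\tr\prod_{j=1}^k R(\lambda_{\pi(j)},n,\bt)$ and collecting the scalar prefactors. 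The $\delta_{k,2}$ term is carried along unchanged.

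For the trace I would substitute $R(\lambda_{\pi(j)},n,\bt)=\Psi_{\rm pair}(\lambda_{\pi(j)},n,\bt)\,E\,\Psi_{\rm pair}^{-1}(\lambda_{\pi(j)},n,\bt)$ with $E=\begin{pmatrix}1&0\\0&0\end{pmatrix}$, as provided by Proposition~\ref{propRP}. The key structural point is that $E$ is a rank-one projector onto the first coordinate, so each conjugate $\Psi_{\rm pair}E\Psi_{\rm pair}^{-1}$ is the outer product of the first column of $\Psi_{\rm pair}$ with the first row of $\Psi_{\rm pair}^{-1}$; since $\det\Psi_{\rm pair}(\lambda,n,\bt)=d(\lambda,n,\bt)$, Cramer's rule gives that first row explicitly. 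Under the cyclic trace the product of these rank-one matrices telescopes, and each adjacent pairing produces exactly
\[
\psi_A(\lambda_{\pi(j+1)},n,\bt)\,\psi_B(\lambda_{\pi(j)},n-1,\bt)-\psi_A(\lambda_{\pi(j+1)},n-1,\bt)\,\psi_B(\lambda_{\pi(j)},n,\bt)=\bigl(\lambda_{\pi(j+1)}-\lambda_{\pi(j)}\bigr)\,D(\lambda_{\pi(j+1)},\lambda_{\pi(j)},n,\bt)
\]
by the definition~\eqref{defineD}, so that $\tr\prod_{j=1}^k R(\lambda_{\pi(j)},n,\bt)=\bigl(\prod_j d(\lambda_{\pi(j)},n,\bt)\bigr)^{-1}\prod_{j=1}^k(\lambda_{\pi(j+1)}-\lambda_{\pi(j)})\,D(\lambda_{\pi(j+1)},\lambda_{\pi(j)},n,\bt)$. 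Two simplifications then finish the scalar bookkeeping: the factors $(\lambda_{\pi(j+1)}-\lambda_{\pi(j)})$ cancel the denominator $\prod_j(\lambda_{\pi(j)}-\lambda_{\pi(j+1)})$ up to the sign $(-1)^k$, which combines with the overall minus sign to give $(-1)^{k-1}$; and the pair normalization $e^{\sigma(n-1,\bt)}d(\lambda,n,\bt)=\lambda$ yields $\prod_{j=1}^k d(\lambda_{\pi(j)},n,\bt)=e^{-k\sigma(n-1,\bt)}\prod_{j=1}^k\lambda_j$, producing the prefactor $e^{k\sigma(n-1,\bt)}/\prod_j\lambda_j$.

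The one point genuinely requiring care is that this computation naturally produces $D(\lambda_{\pi(j+1)},\lambda_{\pi(j)},n,\bt)$, whereas~\eqref{mainidentity} is stated with the reversed arguments $D(\lambda_{\pi(j)},\lambda_{\pi(j+1)},n,\bt)$. I would resolve this by the reversal involution on $\mathcal{S}_k/C_k$ given by $\pi\mapsto\pi\circ w$ with $w(j)=k+1-j$: since conjugation by $w$ sends cyclic shifts to cyclic shifts, the map descends to a bijection of $\mathcal{S}_k/C_k$, and it converts the cyclic product $\prod_j D(\lambda_{\pi(j+1)},\lambda_{\pi(j)},n,\bt)$ into $\prod_j D(\lambda_{\pi(j)},\lambda_{\pi(j+1)},n,\bt)$, so the two summations over $\mathcal{S}_k/C_k$ coincide. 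I expect the main (though still routine) obstacle to be precisely this bookkeeping: organizing the telescoping of the rank-one trace, tracking the signs from the denominators, and verifying the reversal symmetry of the cyclic sum; the structural content is entirely supplied by Propositions~\ref{matrixprop} and~\ref{propRP} together with the normalization of the pair.
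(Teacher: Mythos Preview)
Your proposal is correct and follows essentially the same route as the paper: write $R$ as a rank-one matrix via Proposition~\ref{propRP} (the paper does this explicitly as $R=r_1^T r_2/d$ with $r_1=(\psi_A(\lambda,n),\psi_A(\lambda,n-1))$ and $r_2=(\psi_B(\lambda,n-1),-\psi_B(\lambda,n))$, which is exactly your outer-product description), substitute into Proposition~\ref{matrixprop}, telescope the trace into scalar pairings $r_2(\mu)r_1(\lambda)^T=(\lambda-\mu)D(\lambda,\mu)$, and collect the signs and the normalization $e^{\sigma(n-1,\bt)}d(\lambda,n,\bt)=\lambda$. You are in fact more explicit than the paper about the reversal involution on $\mathcal{S}_k/C_k$ needed to pass from $\prod_j D(\lambda_{\pi(j+1)},\lambda_{\pi(j)})$ to $\prod_j D(\lambda_{\pi(j)},\lambda_{\pi(j+1)})$, which the paper's proof leaves implicit.
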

\begin{proof}
It follows from~\eqref{RP} that 
\beq\label{RP1}
R(\lambda, n, \bt) \=    
\frac{r_1(\lambda,n,\bt)^T r_2(\lambda,n,\bt)}{d(\lambda,n,\bt)}   \,,
\eeq
where $r_1(\lambda,n,\bt):=(\psi_A(\lambda,n,\bt), \psi_A(\lambda,n-1,\bt))$, 
$r_2(\lambda,n,\bt):=(\psi_B(\lambda,n-1,\bt),-\psi_B(\lambda,n,\bt))$. 
Substituting this expression into the identity 
\begin{align}
& \sum_{i_1,i_2\geq 0} 
\frac{\Omega_{i_1,i_2}(n,\bt)}{\lambda_1^{i_1+2} \lambda_2^{i_2+2}} \= 
\frac{\Tr \, \bigl(R_1(\lambda_1,n,\bt) \, R_2(\lambda_2,n,\bt)\bigr)}
{(\lambda_1-\lambda_2)^2} 
\,-\, \frac{1}{(\lambda_1-\lambda_2)^2} \,, 
\end{align}
we obtain
\begin{align}
\sum_{i_1,i_2\geq 0} 
\frac{\Omega_{i_1,i_2}(n,\bt)}{\lambda_1^{i_1+2} \lambda_2^{i_2+2}} 
& \= 
\frac{\Tr \, \bigl(r_1(\lambda_1,n,\bt)^T r_2(\lambda_1,n,\bt) \, r_1(\lambda_2,n,\bt)^T r_2(\lambda_2,n,\bt)\bigr)}
{d(\lambda_1,n,\bt) d(\lambda_2,n,\bt) (\lambda_1-\lambda_2)^2} 
\,-\, \frac{1}{(\lambda_1-\lambda_2)^2} \nn \\
& \= 
\frac{\bigl(r_2(\lambda_2,n,\bt) \, r_1(\lambda_1,n,\bt)^T \bigr) \, \bigl(r_2(\lambda_1,n,\bt) \, r_1(\lambda_2,n,\bt)^T \bigr)}
{ d(\lambda_1,n,\bt) d(\lambda_2,n,\bt)  (\lambda_1-\lambda_2)^2} 
\,-\, \frac{1}{ (\lambda_1-\lambda_2)^2} \nn \\
&\= - \frac{D(\lambda_1,\lambda_2,n,\bt) \, 
D(\lambda_2,\lambda_1,n,\bt) }{ \lambda_1 \lambda_2 \, e^{-2\sigma(n-1,\bt)} } \,-\, \frac{1}{(\lambda_1-\lambda_2)^2} \,,
\end{align}
where we used the definition~\eqref{defineD} and 
\begin{align}
%D(\lambda,\mu,n,\bt) & \:= 
\frac{\psi_A(\lambda,n,\bt) \, \psi_B(\mu,n-1,\bt) \,-\, \psi_A(\lambda,n-1,\bt) \,\psi_B(\mu,n,\bt)}{\lambda-\mu}  
\=  \frac{r_2(\mu,n,\bt) \, r_1(\lambda,n,\bt)^T}{\lambda-\mu} \,. \nn
\end{align}
This prove the $k=2$ case of~\eqref{mainidentity}. For $k\geq 3$, the proof is similar. Indeed,
\begin{align}
& \sum_{i_1,\dots,i_k\geq 0} 
\frac{\Omega_{i_1,\dots,i_k}(n,\bt)}{\lambda_1^{i_1+1} \cdots \lambda_k^{i_k+1}} \nn\\
%&  \qquad \= - \sum_{\pi\in S_k/C_k} 
%\frac{\Tr \, \Bigl(\prod_{j=1}^k R\bigl(\lambda_{\pi(j)},n,\bt\bigr) \Bigr)}{\prod_{j=1}^k \bigl(\lambda_{\pi{(j)}}-\lambda_{\pi{(j+1)}}\bigr)} \nn\\
&  \qquad \= - \sum_{\pi\in \cS_k/C_k} 
\frac{\Tr \, \Bigl(\prod_{j=1}^k  r_1\bigl(\lambda_{\pi(j)},n,\bt\bigr)^T r_2\bigl(\lambda_{\pi(j)},n,\bt\bigr) \Bigr)}
{e^{-k \, \sigma(n-1,\bt)}  \, \prod_{j=1}^k \bigl(\lambda_{\pi{(j)}}-\lambda_{\pi{(j+1)}}\bigr)} \nn\\
&  \qquad \= 
- \sum_{\pi \in \cS_k/C_k} 
\frac{r_2\bigl(\lambda_{\pi(k)},n,\bt\bigr) \, r_1\bigl(\lambda_{\pi(1)},n,\bt\bigr)^T \, \cdots \,  
r_2\bigl(\lambda_{\pi(k-1)},n,\bt\bigr) \, r_1\bigl(\lambda_{\pi(k)},n,\bt\bigr)^T}{e^{-k \, \sigma (n-1,\bt)}  \, 
\prod_{j=1}^k \bigl(\lambda_{\pi{(j)}}-\lambda_{\pi{(j+1)}}\bigr)} \nn\\
&  \qquad \=  
- \frac{(-1)^k}{e^{-k \, \sigma(n-1,\bt)}} 
\sum_{\pi\in \cS_k/C_k} \prod_{j=1}^k D\bigl(\lambda_{\pi(j)},\lambda_{\pi(j+1)},n,\bt\bigr)\,.
\end{align}
This proves the $k\geq 3$ case of~\eqref{mainidentity}. The theorem is proved. 
\end{proof}
\begin{remark}
In~\eqref{mainidentity} or~\eqref{mainidentityfg}, the freedom~\eqref{psi12g} affects the $D(\lambda, \mu)$ 
through multiplying it by
a factor of the form $\frac{G(\lambda)}{G(\mu)}$, but the product
$\prod_{j=1}^k D(\lambda_{\pi(j)},\lambda_{\pi(j+1)})$ 
remains unchanged. 
\end{remark}

In Appendix~A, the abstract form of~\eqref{mainidentity} is obtained, 
where a pair of abstract pre-wave functions are introduced.

\noindent {\it Proof} of Theorem~\ref{main1}.  Taking ${\bf t}=\bdzero$ on the 
both sides of~\eqref{mainidentity} gives~\eqref{mainidentityfg}. \epf

Write 
\beq
\psi_A(\lambda,n,\bt) \= \phi_A(\lambda,n,\bt) \,  \lambda^n \,, \qquad \psi_B(\lambda,n,\bt) \= \phi_B(\lambda,n,\bt) \, e^{-\sigma(n,\bt)} \, \lambda^{-n} \,.
\eeq
Theorem~\ref{main1} can then be alternatively written in terms of $\phi_A,\phi_B$ by the following corollary. 
\begin{cor}
%\clr{Let $(v=v(n,\bt),w=w(n,\bt))\in V[[\bt]]^2$ be an arbitrary solution to the Toda lattice hierarchy.}
The following formula holds true for $k\geq2$:
\begin{align}
& \sum_{i_1,\dots,i_k\geq 0} 
\frac{\Omega_{i_1,\dots,i_k}(n,\bt)}{\lambda_1^{i_1+2} \cdots \lambda_k^{i_k+2}}  \=  
(-1)^{k-1}   
\sum_{\pi \in \mathcal{S}_k/C_k} \prod_{j=1}^k B(\lambda_{\pi(j)},\lambda_{\pi(j+1)},n,\bt) 
\,-\, \frac{\delta_{k,2}}{(\lambda_1-\lambda_2)^2}  \,, 
\end{align}
where $B(\lambda,\mu,n,\bt)$ is defined by
\beq
B(\lambda,\mu,n,\bt) \:= 
\frac{\phi_A(\lambda,n,\bt) \, \phi_B(\mu,n-1,\bt) \,-\, w(n,\bt)\, \phi_A(\lambda,n-1,\bt) \,\phi_B(\mu,n,\bt)}{\lambda-\mu} \,.
\eeq
In particular,  let 
$\phi_A(\lambda,n):=e^{(\Lambda-1)^{-1}(y(\lambda,n))}$, 
$\phi_B(\lambda,n):=e^{(\Lambda-1)^{-1}(z(\lambda,n))}  e^{-s(n)}$ (cf.~\eqref{yexp}--\eqref{zexp}), and let 
$B(\lambda,\mu,n):= \frac{\phi_A(\lambda,n) \, \phi_B(\mu,n-1) \,-\, g(n)\, \phi_A(\lambda,n-1) \,\phi_B(\mu,n)}{\lambda-\mu}$,
then we have
\begin{align}
& \sum_{i_1,\dots,i_k\geq 0} 
\frac{\Omega_{i_1,\dots,i_k}(n,\bdzero)}{\lambda_1^{i_1+2} \cdots \lambda_k^{i_k+2}}  \=  
(-1)^{k-1}   
\sum_{\pi \in \mathcal{S}_k/C_k} \prod_{j=1}^k B(\lambda_{\pi(j)},\lambda_{\pi(j+1)},n) 
\,-\, \frac{\delta_{k,2}}{(\lambda_1-\lambda_2)^2}  \,.
\end{align}
\end{cor}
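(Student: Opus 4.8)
The plan is to read the corollary off Theorem~\ref{main1time} by re-expressing the kernel $D$ through $B$ and tracking the elementary prefactors that appear. First I would substitute the two factorizations $\psi_A(\lambda,n,\bt)=\phi_A(\lambda,n,\bt)\,\lambda^n$ and $\psi_B(\lambda,n,\bt)=\phi_B(\lambda,n,\bt)\,e^{-\sigma(n,\bt)}\lambda^{-n}$ into the definition~\eqref{defineD} of $D(\lambda,\mu,n,\bt)$. The two terms of the numerator then become $\phi$-products multiplied respectively by $\lambda^{n}\mu^{-(n-1)}e^{-\sigma(n-1,\bt)}$ and $\lambda^{n-1}\mu^{-n}e^{-\sigma(n,\bt)}$. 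Using $w(n,\bt)=e^{\sigma(n-1,\bt)-\sigma(n,\bt)}$ to pull the common exponential $e^{-\sigma(n-1,\bt)}$ out of both terms, a short direct computation should collapse $D$ into a single monomial times $B$:
\[
D(\lambda,\mu,n,\bt)\= e^{-\sigma(n-1,\bt)}\,\lambda^{\,n}\,\mu^{-(n-1)}\,B(\lambda,\mu,n,\bt)\,.
\]

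The second step is a telescoping inside each cyclic product on the right-hand side of~\eqref{mainidentity}. In $\prod_{j=1}^k D(\lambda_{\pi(j)},\lambda_{\pi(j+1)},n,\bt)$ every index $i\in\{1,\dots,k\}$ appears exactly once as a first argument, contributing $\lambda_i^{\,n}$, and exactly once as a second argument, contributing $\lambda_i^{-(n-1)}$; the two powers multiply to $\lambda_i$, while the $k$ copies of $e^{-\sigma(n-1,\bt)}$ multiply to $e^{-k\sigma(n-1,\bt)}$. Hence $\prod_{j}D=e^{-k\sigma(n-1,\bt)}\bigl(\prod_i\lambda_i\bigr)\prod_j B$, and the prefactor $(-1)^{k-1}e^{k\sigma(n-1,\bt)}/\prod_j\lambda_j$ of~\eqref{mainidentity} cancels the exponential and the monomial exactly. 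Summing over $\pi\in\cS_k/C_k$ and keeping the term $-\delta_{k,2}/(\lambda_1-\lambda_2)^2$ untouched turns the right-hand side of Theorem~\ref{main1time} into the claimed expression in $B$, which is the first formula of the corollary.

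For the specialization I would set $\bt=\bdzero$, where $\sigma(n,\bdzero)=s(n)$ and $w(n,\bdzero)=g(n)$, and where the pair of wave functions collapses to the explicit series constructed in Section~\ref{section3}, giving $\phi_A(\lambda,n)=e^{(\Lambda-1)^{-1}y(\lambda,n)}$ and the stated $\phi_B$; the general identity then specializes directly. I expect the only real work to be the first step, namely producing the clean factorization $D=e^{-\sigma(n-1,\bt)}\lambda^{\,n}\mu^{-(n-1)}B$. The subtlety there is that the two terms in the numerator of $D$ carry \emph{different} powers of $\lambda$ and $\mu$, so no monomial can be extracted until the two exponentials $e^{-\sigma(n,\bt)}$ and $e^{-\sigma(n-1,\bt)}$ are matched through $w(n,\bt)$; once that is done the surviving bracket is exactly $(\lambda-\mu)B$, and the telescoping of the second step is then automatic.
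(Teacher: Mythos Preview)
The paper states the corollary without proof, so there is nothing to compare against; but your key step contains an error. The claimed factorization $D(\lambda,\mu,n,\bt)=e^{-\sigma(n-1,\bt)}\lambda^{n}\mu^{-(n-1)}B(\lambda,\mu,n,\bt)$ fails precisely at the subtlety you flagged. After substituting $\psi_A=\phi_A\lambda^n$, $\psi_B=\phi_B\,e^{-\sigma}\lambda^{-n}$ and using $w(n)=e^{\sigma(n-1)-\sigma(n)}$ to pull out $e^{-\sigma(n-1)}$, one obtains
\[
(\lambda-\mu)\,D(\lambda,\mu)\=e^{-\sigma(n-1)}\Bigl[\phi_A(\lambda,n)\phi_B(\mu,n-1)\,\lambda^{n}\mu^{\,1-n}\,-\,w(n)\,\phi_A(\lambda,n-1)\phi_B(\mu,n)\,\lambda^{\,n-1}\mu^{-n}\Bigr]\,.
\]
The two monomials $\lambda^{n}\mu^{\,1-n}$ and $\lambda^{\,n-1}\mu^{-n}$ still differ by the factor $\lambda\mu$; matching the exponentials through~$w(n)$ repairs only the $e^{-\sigma}$ mismatch, not the $\lambda,\mu$ powers. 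Hence the bracket is \emph{not} $(\lambda-\mu)B$. What actually drops out is
\[
D(\lambda,\mu)\=e^{-\sigma(n-1)}\,\lambda^{n}\mu^{\,1-n}\,
\frac{\phi_A(\lambda,n)\phi_B(\mu,n-1)\,-\,w(n)\,(\lambda\mu)^{-1}\,\phi_A(\lambda,n-1)\phi_B(\mu,n)}{\lambda-\mu}\,,
\]
after which your telescoping step goes through verbatim (each $\lambda_i^{\,n}\lambda_i^{\,1-n}=\lambda_i$ cancels the $1/\lambda_i$ in~\eqref{mainidentity}, and the $k$ copies of $e^{-\sigma(n-1)}$ cancel $e^{k\sigma(n-1)}$); but the kernel that emerges carries $w(n)/(\lambda\mu)$, not $w(n)$.

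A quick consistency check shows something is indeed off with $B$ as displayed: since $\phi_A,\phi_B=1+O(\lambda^{-1})$ (the $e^{\pm\sum t_k\lambda^{k+1}}$ factors cancel in the product), one has $-B(\lambda_1,\lambda_2)B(\lambda_2,\lambda_1)\sim(1-w)^2/(\lambda_1-\lambda_2)^2$, which fails to cancel the subtracted $1/(\lambda_1-\lambda_2)^2$; with $w/(\lambda\mu)$ in place of $w$ the leading singular part is $1/(\lambda_1-\lambda_2)^2$ and the cancellation occurs. So either the stated $B$ needs $w(n,\bt)/(\lambda\mu)$ in place of $w(n,\bt)$, or the $\phi$'s must absorb an extra power of~$\lambda$; your argument proves that corrected identity, not the one written.
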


For some particular examples related to matrix models, 
it turns out that the suitable chosen~$D$ coincides, possibly up to simple factors, 
with certain kernel of the matrix model. However,   
the~$D$ is not unique. We now introduce a formal series  
$K(\lambda,\mu)$ such that the generating series of multi-point correlation functions still has an explicit expression, but 
this time $K$ is {\it local} and is therefore unique for the given solution. The series~$K$ is defined by
\beq
K(\lambda,\mu) \:= 
 \frac{(1\+\alpha(\lambda))(1\+\alpha(\mu)) -  w_0 \, 
\gamma(\lambda) \, \Lambda \bigl(\gamma(\mu)\bigr)}{\lambda-\mu}\,,
\eeq
where $1+\alpha(\lambda)$ is the (1,1)-entry of the basic matrix resolvent~$R(\lambda)$, and 
$\gamma(\lambda)$ is the (2,1)-entry.
The next theorem expresses the left-hand side of~\eqref{mainidentity} in terms of~$K$.

\begin{theorem}\label{main2} For any $k\geq 2$,
the following formula holds true:
\begin{align}
&   \sum_{i_1,\dots,i_k\geq 0} 
\frac{\Omega_{i_1,\dots,i_k}}{\lambda_1^{i_1+2} \cdots \lambda_k^{i_k+2}}  \=  
(-1)^{k-1} \frac{\sum_{\pi\in \cS_k/C_k} \prod_{j=1}^k K\bigl(\lambda_{\pi(j)},\lambda_{\pi(j+1)}\bigr)}{ \prod_{i=1}^k \bigl(1\+\alpha(\lambda_i)\bigr)} \,-\, \frac{\delta_{k,2}}{(\lambda_1-\lambda_2)^2}   \,.
\end{align}
\end{theorem}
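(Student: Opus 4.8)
The plan is to obtain Theorem~\ref{main2} directly from the matrix-resolvent trace formula \eqref{propgueproved1} of Proposition~\ref{matrixprop}, which already holds abstractly with coefficients in $\A$. Since the target identity and \eqref{propgueproved1} carry the \emph{same} correction $-\delta_{k,2}/(\lambda_1-\lambda_2)^2$, it suffices to match the two sums over $\cS_k/C_k$, i.e. to prove
\[
-\sum_{\pi\in\cS_k/C_k}\frac{\tr\prod_{j=1}^k R(\lambda_{\pi(j)})}{\prod_{j=1}^k(\lambda_{\pi(j)}-\lambda_{\pi(j+1)})}
\;=\;(-1)^{k-1}\,\frac{\sum_{\pi\in\cS_k/C_k}\prod_{j=1}^k K(\lambda_{\pi(j)},\lambda_{\pi(j+1)})}{\prod_{i=1}^k\bigl(1+\alpha(\lambda_i)\bigr)}.
\]
All manipulations below take place in a suitable localization (where $\gamma(\lambda)=\lambda^{-1}+\cdots$ is inverted), but the final equality lives in $\A[[\lambda_1^{-1},\dots,\lambda_k^{-1}]]$.

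First I would factor the basic matrix resolvent. Because $\tr R=1$ and $\det R=0$, the matrix $R(\lambda)$ has rank one, and I would write $R(\lambda)=p(\lambda)\,q(\lambda)^T$ with $p(\lambda)=\bigl(1+\alpha(\lambda),\gamma(\lambda)\bigr)^T$ and $q(\lambda)=\bigl(1,-\alpha(\lambda)/\gamma(\lambda)\bigr)^T$. Both $q(\lambda)^Tp(\lambda)=1$ and $p(\lambda)q(\lambda)^T=R(\lambda)$ follow from the relation $w_0\,\gamma(\lambda)\,\Lambda\bigl(\gamma(\lambda)\bigr)=\alpha(\lambda)\bigl(1+\alpha(\lambda)\bigr)$, which is precisely $\det R(\lambda)=0$ rewritten through $\beta=-w_0\Lambda(\gamma)$. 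Consequently, for each cyclic order,
\[
\tr\prod_{j=1}^k R(\lambda_{\pi(j)})=\prod_{j=1}^k q(\lambda_{\pi(j)})^T p(\lambda_{\pi(j+1)})
=\frac{\prod_{j=1}^k M(\lambda_{\pi(j)},\lambda_{\pi(j+1)})}{\prod_{j=1}^k\gamma(\lambda_{\pi(j)})},
\]
where $M(\lambda,\mu):=\bigl(1+\alpha(\mu)\bigr)\gamma(\lambda)-\alpha(\lambda)\gamma(\mu)$, and the denominator $\prod_j\gamma(\lambda_{\pi(j)})=\prod_i\gamma(\lambda_i)$ is independent of $\pi$.

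Next I would connect $M$ to the kernel $K$. Writing $N(\lambda,\mu):=(\lambda-\mu)K(\lambda,\mu)$ for the numerator of $K$ and again using $w_0\gamma\,\Lambda(\gamma)=\alpha(1+\alpha)$ to eliminate $w_0\Lambda(\gamma(\mu))$, a short computation yields the key identity $N(\lambda,\mu)=\bigl(1+\alpha(\mu)\bigr)M(\mu,\lambda)/\gamma(\mu)$. Hence, up to the common factor $1/\prod_i\gamma(\lambda_i)$, the $K$-product over a cyclic order is governed by the \emph{reversed} product $\prod_j M(\lambda_{\pi(j+1)},\lambda_{\pi(j)})$, whereas the trace formula produces the \emph{forward} product $\prod_j M(\lambda_{\pi(j)},\lambda_{\pi(j+1)})$.

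The main obstacle, and the only genuinely non-formal point, is that this forward/reversed discrepancy means the identity fails per permutation once $k\geq 3$ (since $M$ is not antisymmetric; in fact $M(\lambda,\mu)+M(\mu,\lambda)=\gamma(\lambda)+\gamma(\mu)$); it holds only after summing over $\cS_k/C_k$. I would resolve this using the reversal involution $\pi\mapsto\bar\pi$, $\bar\pi(j):=\pi(k+1-j)$, which is a bijection of $\cS_k/C_k$. Reindexing the $K$-sum by $\bar\sigma$ turns every reversed $M$-product into the forward product for $\sigma$ and multiplies each denominator $\prod_j(\lambda_{\bar\sigma(j)}-\lambda_{\bar\sigma(j+1)})$ by $(-1)^k$; combining $(-1)^{k-1}$ with $(-1)^{-k}$ gives exactly the overall minus sign of the trace-sum, which completes the proof. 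The base case $k=2$, where reversal is trivial and $(-1)^k=1$, is checked directly and amounts to the computation $\tr\bigl(R(\lambda_1)R(\lambda_2)\bigr)\prod_i\bigl(1+\alpha(\lambda_i)\bigr)=N(\lambda_1,\lambda_2)\,N(\lambda_2,\lambda_1)$, which closes under the same $\det R=0$ relation.
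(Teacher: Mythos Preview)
Your argument is correct, and it takes a genuinely different route from the paper's proof. The paper derives Theorem~\ref{main2} \emph{from} Theorem~\ref{main1time}: it reads off from identity~\eqref{RP} the relations
\[
\psi_B(\lambda,n-1)=\tfrac{(1+\alpha(\lambda))\,d(\lambda)}{\psi_A(\lambda,n)},\quad
\psi_B(\lambda,n)=\tfrac{w_n\,\Lambda(\gamma(\lambda))\,d(\lambda)}{\psi_A(\lambda,n)},\quad
\psi_A(\lambda,n-1)=\tfrac{\gamma(\lambda)}{1+\alpha(\lambda)}\,\psi_A(\lambda,n),
\]
substitutes these into~\eqref{defineD} to obtain $D(\lambda,\mu)=d(\mu)\,\tfrac{\psi_A(\lambda)}{\psi_A(\mu)}\,\tfrac{K(\lambda,\mu)}{1+\alpha(\lambda)}$, and then feeds this into~\eqref{mainidentity}; the telescoping factors $\psi_A(\lambda_{\pi(j)})/\psi_A(\lambda_{\pi(j+1)})$ cancel around the cycle and $\prod_j d(\lambda_j)$ cancels the prefactor $e^{k\sigma(n-1)}/\prod_j\lambda_j$.

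By contrast, you bypass wave functions entirely and work directly from Proposition~\ref{matrixprop}: the rank-one factorization $R=p\,q^T$ with $p=(1+\alpha,\gamma)^T$, $q=(1,-\alpha/\gamma)^T$ (legitimate because $\det R=0$ gives $\beta\gamma=-\alpha(1+\alpha)$) reduces each cyclic trace to a product of scalars $q(\lambda_{\pi(j)})^Tp(\lambda_{\pi(j+1)})=M(\lambda_{\pi(j)},\lambda_{\pi(j+1)})/\gamma(\lambda_{\pi(j)})$; the identity $N(\lambda,\mu)=(1+\alpha(\mu))M(\mu,\lambda)/\gamma(\mu)$, obtained from the same $\det R=0$ relation, matches this to $K$ up to the forward/reversed swap; and the reversal involution on $\cS_k/C_k$ (which flips every factor $\lambda_{\pi(j)}-\lambda_{\pi(j+1)}$, producing the sign $(-1)^k$) closes the argument. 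This is more elementary and self-contained, living purely in~$\A$ and never invoking Sections~\ref{section3}--\ref{section4}. What the paper's route buys instead is the explicit bridge between the local kernel $K$ and the non-local kernel $D$ recorded in Proposition~\ref{KandD}, which is of independent interest and is a natural byproduct of going through wave functions.
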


\begin{proof} The identity~\eqref{RP} gives  
\begin{align}
& \psi_B(\lambda,n-1,\bt) \= 
\frac{(1\+\alpha(\lambda,n,\bt)) \, d(\lambda,n,\bt)}{\psi_A(\lambda,n,\bt)} \,, \nn\\
& \psi_B(\lambda,n,\bt) \= 
- \, \frac{\beta(\lambda,n,\bt) \, d(\lambda,n,\bt)}{\psi_A(\lambda,n,\bt)} \= w_n\,\frac{\gamma(\lambda,n+1,\bt) \, d(\lambda,n,\bt)} {\psi_A(\lambda,n,\bt)} \,, \nn\\
& \psi_A(\lambda,n-1,\bt)  \= \psi_A(\lambda,n,\bt) \, 
\frac{\gamma(\lambda,n,\bt)}{1\+\alpha(\lambda,n,\bt)} \nn \,.
\end{align}
Substituting these expressions into~\eqref{defineD} we obtain
\beq
D(\lambda,\mu,n,\bt) \= d(\mu,n,\bt) \frac{\psi_A(\lambda,n,\bt)}{\psi_A(\mu,n,\bt)} e(\lambda,\mu,n,\bt),
\eeq
where 
\beq
e(\lambda,\mu,n,\bt) \:= \frac{(1\+\alpha(\lambda,n,\bt))(1\+\alpha(\mu,n,\bt)) -  w_n(\bt) \, 
\gamma(\lambda,n,\bt) \, \gamma(\mu,n+1,\bt)
}{(\lambda-\mu) \, (1\+\alpha(\lambda,n,\bt))} \,. 
\eeq
Combining with the definition of~$K(\lambda,\mu,n,\bt)$ and Theorem~\ref{main1} we find
\begin{align}
&   \sum_{i_1,\dots,i_k\geq 0} 
\frac{\Omega_{i_1,\dots,i_k}(n,\bt)}{\lambda_1^{i_1+2} \cdots \lambda_k^{i_k+2}}  \=  
(-1)^{k-1} \sum_{\pi\in \cS_k/C_k} \prod_{j=1}^k K \bigl(\lambda_{\pi(j)},\lambda_{\pi(j+1)},n,\bt\bigr) 
 \,-\, \frac{ \delta_{k,2}}{(\lambda_1-\lambda_2)^2} \,.
\end{align}
The theorem is proved. 
\end{proof}

It seems to be an interesting question to study the geometric and algebraic 
meaning of the kernel~$K$ (as well as~$D$). Below we give without proof some of their properties.

\begin{prop} \label{KandD}
The functions $K$ and $D$ are related by 
\begin{align}
& K(\lambda,\mu,n,\bt) \=\frac{e^{\sigma(n-1,\bt)} }{\mu} \, \bigl(1+\alpha(\lambda,n,\bt)\bigr) \, 
\frac{\psi_A(\mu,n,\bt)}{\psi_A(\lambda,n,\bt)} \, D(\lambda,\mu,n,\bt) \nn\\
& \quad \quad \quad \quad \quad 
\=\frac{e^{2\sigma(n-1,\bt)}}{\lambda\,\mu} \, \psi_A(\mu,n,\bt) \, \psi_B(\lambda,n-1,\bt)\, D(\lambda,\mu,n,\bt) \nn\\
& \quad \quad \quad \quad \quad 
\=\frac{e^{\sigma(n-1,\bt)}}{\lambda} \, \bigl(1+\alpha(\mu,n,\bt)\bigr)\,\frac{\psi_B(\lambda,n-1,\bt) }{\psi_B(\mu,n-1,\bt)} \, D(\lambda,\mu,n,\bt) \,. \nn
\end{align}
\end{prop}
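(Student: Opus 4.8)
The plan is to derive all three identities from relations already obtained in the proof of Theorem~\ref{main2}, together with the pair normalization $e^{\sigma(n-1,\bt)} d(\lambda,n,\bt)=\lambda$. The proof of Theorem~\ref{main2} established
\[
D(\lambda,\mu,n,\bt) \= d(\mu,n,\bt) \, \frac{\psi_A(\lambda,n,\bt)}{\psi_A(\mu,n,\bt)} \, \frac{K(\lambda,\mu,n,\bt)}{1\+\alpha(\lambda,n,\bt)}\,,
\]
since $e(\lambda,\mu,n,\bt)=K(\lambda,\mu,n,\bt)/(1\+\alpha(\lambda,n,\bt))$. First I would simply solve this for $K$ and substitute the normalization in the form $d(\mu,n,\bt)=\mu\,e^{-\sigma(n-1,\bt)}$, i.e.\ $1/d(\mu,n,\bt)=e^{\sigma(n-1,\bt)}/\mu$. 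This produces the first equality of the proposition at once.

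Next I would pass between the three forms using the expressions for $\psi_B$ in terms of $\psi_A,\alpha,\gamma$ recorded at the beginning of the proof of Theorem~\ref{main2}, which themselves come from~\eqref{RP}. The key relation is
\[
\psi_B(\lambda,n-1,\bt) \= \frac{(1\+\alpha(\lambda,n,\bt)) \, d(\lambda,n,\bt)}{\psi_A(\lambda,n,\bt)}\,,
\]
which, after inserting $d(\lambda,n,\bt)=\lambda\,e^{-\sigma(n-1,\bt)}$, rewrites the factor $(1\+\alpha(\lambda,n,\bt))/\psi_A(\lambda,n,\bt)$ occurring in the first equality as $e^{\sigma(n-1,\bt)}\psi_B(\lambda,n-1,\bt)/\lambda$; substituting this into the first equality yields the symmetric second equality. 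Using the same relation at spectral parameter $\mu$, namely $\psi_A(\mu,n,\bt)=(1\+\alpha(\mu,n,\bt))\,\mu\,e^{-\sigma(n-1,\bt)}/\psi_B(\mu,n-1,\bt)$, and substituting into the second equality then gives the third.

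I do not anticipate any genuine obstacle here: once the algebraic relations from the proof of Theorem~\ref{main2} and the normalization are in hand, each step is a single substitution followed by cancellation. The only bookkeeping care needed is to track the powers of $e^{\sigma(n-1,\bt)}$ and of $\lambda,\mu$, since the three displayed forms differ precisely in how the normalization factor is distributed between the two wave functions. A convenient consistency check is that in the symmetric middle form the product $\psi_A(\mu,n,\bt)\,\psi_B(\lambda,n-1,\bt)$ must, via the two instances of $d=\lambda\,e^{-\sigma(n-1,\bt)}$ (resp.\ $\mu\,e^{-\sigma(n-1,\bt)}$), reproduce exactly the single-wave-function ratios of the outer two forms.
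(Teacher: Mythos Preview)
Your argument is correct. The paper explicitly states this proposition \emph{without proof} (see the sentence ``Below we give without proof some of their properties'' immediately preceding Proposition~\ref{KandD}), so there is no proof in the paper to compare against. Your derivation is exactly the natural one: it uses precisely the relations
\[
D(\lambda,\mu,n,\bt)\=d(\mu,n,\bt)\,\frac{\psi_A(\lambda,n,\bt)}{\psi_A(\mu,n,\bt)}\,e(\lambda,\mu,n,\bt)\,,\qquad
\psi_B(\lambda,n-1,\bt)\=\frac{(1+\alpha(\lambda,n,\bt))\,d(\lambda,n,\bt)}{\psi_A(\lambda,n,\bt)}
\]
that the paper already established in the proof of Theorem~\ref{main2}, together with the pair normalization $d(\lambda,n,\bt)=\lambda\,e^{-\sigma(n-1,\bt)}$. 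Each of the three displayed forms then follows by a single substitution and cancellation, as you describe. No step is missing and the bookkeeping of the $e^{\sigma(n-1,\bt)}$ and $\lambda,\mu$ factors is handled correctly.
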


We observe that the following three formal series
$$
K(\lambda,\mu)-\frac{1+\alpha(\lambda)}{\lambda-\mu}\,, \quad K(\lambda,\mu)-\frac{1+\alpha(\mu)}{\lambda-\mu}\,, 
\quad K(\lambda,\mu)-\frac{2+\alpha(\lambda)+\alpha(\mu)}{2 \, (\lambda-\mu)}
$$ 
all belong to $\A[[\lambda^{-1},\mu^{-1}]]$. Therefore, the following 
three formal series
$$
K(\lambda,\mu,n,\bt)-\frac{1+\alpha(\lambda,n,\bt)}{\lambda-\mu}\,, 
\quad K(\lambda,\mu,n,\bt)-\frac{1+\alpha(\mu,n,\bt)}{\lambda-\mu}\,,
$$
$$
K(\lambda,\mu,n,\bt)-\frac{2+\alpha(\lambda,n,\bt)+\alpha(\mu,n,\bt)}{2(\lambda-\mu)}
$$ 
all belong to $V[[\bt]][[\lambda^{-1},\mu^{-1}]]$.
It follows from this observation and Proposition~\ref{KandD} that 
\beq\label{DD}
\frac{e^{s(n-1)} }\mu \, D(\lambda,\mu,n,\bdzero) \, \biggl(\frac{\mu}{\lambda}\biggr)^n \,-\, \frac{1}{\lambda-\mu}  
\; \in\;  \widetilde  V[[\lambda^{-1},\mu^{-1}]] \,.  
\eeq

\begin{remark}
We could loosen both the conditions for wave functions and the pair-condition.
Let us say $\psi_A$ and~$\psi_B$ are pre-wave functions of type~A and of type~B, respectively,  
if they satisfy the first equations of~\eqref{wave112} and~\eqref{wave212}. Define $d_{\rm pre}(\lambda,n,\bt)$ 
and~$D_{\rm pre}(\lambda,\mu,n,\bt)$ by~\eqref{wavepairddpre} and~\eqref{defineDpre}. 
Then the following formula holds true:
\begin{align}
& \sum_{i_1,\dots,i_k\geq 0} 
\frac{\Omega_{i_1,\dots,i_k}(n,\bt)}{\lambda_1^{i_1+2} \cdots \lambda_k^{i_k+2}}  \nn\\
&  \quad \qquad \=  
 \frac{(-1)^{k-1}}{\prod_{j=1}^k d_{\rm pre}(\lambda_j,n,\bt)}  
\sum_{\pi \in \cS_k/C_k} \prod_{j=1}^k D_{\rm pre}(\lambda_{\pi(j)},\lambda_{\pi(j+1)},n,\bt) 
 \,-\, \frac{\delta_{k,2}}{(\lambda_1-\lambda_2)^2}  \,.  \label{mainidentitypretime}
\end{align}
Now $\psi_A$ and $\psi_B$ are determined by $(v(n,\bt),w(n,\bt) )$ up to 
\[ \psi_A (\lambda,n,\bt) \; \mapsto\; G(\lambda,\bt) \, \psi_A(\lambda,n,\bt) \,, 
\quad  \psi_B (\lambda,n,\bt) \; \mapsto\; E(\lambda,\bt) \, \psi_B (\lambda,n,\bt) \,,\]
where 
$G(\lambda,\bt)=1+\sum_{k\geq 1} G_k(\bt) \lambda^{-k}$, $E(\lambda,\bt)=1+\sum_{k\geq 1} E_k(\bt)\lambda^{-k}$ 
with $G_k(\bt), E_k(\bt) \in \CC[[\bt]]$, $k\geq 1$.
This freedom affects $D_{\rm pre}(\lambda,\mu,n,\bt)$ and~$d_{\rm pre}(\lambda,n,\bt)$ into 
\[
D_{\rm pre}(\lambda,\mu,n,\bt)  \mapsto  G(\lambda,\bt) \, E(\mu,\bt) \, D_{\rm pre}(\lambda,\mu,\bt)  \,, 
\quad d_{\rm pre}(\lambda,n,\bt) \mapsto  G(\lambda,\bt) \, E(\lambda,\bt) \, d_{\rm pre}(\lambda,\bt) \,.
\]
Therefore, it gives rise to each summand of~\eqref{mainidentitypretime} the factor 
\[\frac{\prod_{j=1}^k G(\lambda_{\pi(j)},\bt) E(\lambda_{\pi(j+1)},\bt)}{\prod_{j=1}^k G(\lambda_j,\bt) E(\lambda_j,\bt)}\,,\]
which is equal to~1. Hence the right-hand side of~\eqref{mainidentitypretime} remains unchanged. 
\end{remark}

\section{Applications} \label{section5}
Partition functions in some matrix models and enumerative models are 
particular tau-functions for the Toda lattice hierarchy. Theorem~\ref{main1} can then be 
used for computing their logarithmic derivatives. In this section we do two   
explicit computations.

\subsection{Application I. Enumeration of ribbon graphs.}
The initial data of the GUE solution to the Toda lattice hierarchy is given by 
 $f(n)=0$ and $g(n)=n$; see for example~\cite{DuY1} for the proof. 
 For this case, we can take $V=\QQ[n]$ and $\widetilde V=V$.
Substituting the initial data in~\eqref{inir} we find
\beq
s(n) \= -  \bigl(1-\Lambda^{-1}\bigr)^{-1} \log g(n)  \= 
- \bigl(1-\Lambda^{-1}\bigr)^{-1} \log n\=-\log\,\Gamma(n+1) \+C\,,
\eeq
where $C$ is a constant. Below we fix this constant as~$0$.
\begin{prop}\label{GUEpsi}
The $\psi_A,\psi_B$ defined by 
\begin{align}
&\psi_A(\lambda,n)\= \sum_{j\geq 0} (-1)^j \frac{(n-2j+1)_{2j}}{2^j\, j!\, \lambda^{2j}} \lambda^n\,, \label{pair11gue}\\
&\psi_B(\lambda,n)\= \Gamma(n+1) \sum_{j\geq 0} \frac{(n+1)_{2j}}{2^j\,j!\, \lambda^{2j}} \lambda^{-n} \label{pair12gue}
\end{align}
form a particular pair of wave functions associated to $f(n),g(n)$. Here and below $(a)_{i}$ denotes the 
increasing Pochhammer symbol defined by $(a)_{i}=a(a+1)\cdots(a+i-1)$.
\end{prop}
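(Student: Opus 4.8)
The plan is to verify directly the three requirements in the definition of a pair from Section~\ref{Introwave}: the two normalizations, the spectral equations $L\bigl(\psi_A\bigr)=\lambda\psi_A$ and $L\bigl(\psi_B\bigr)=\lambda\psi_B$ with $L=\Lambda+n\,\Lambda^{-1}$ (since here $f(n)=0$, $g(n)=n$), and the pair-normalization~\eqref{normalization}. Observing that $e^{-s(n)}=\Gamma(n+1)$, the leading behaviours of~\eqref{pair11gue}--\eqref{pair12gue} match $\psi_A=\bigl(1+{\rm O}(\lambda^{-1})\bigr)\lambda^n$ and $\psi_B=\bigl(1+{\rm O}(\lambda^{-1})\bigr)e^{-s(n)}\lambda^{-n}$ on the nose, so the normalizations need no work.

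For the spectral equations I would first rewrite the Pochhammer symbols as factorial ratios, $(n-2j+1)_{2j}=n!/(n-2j)!$ and $(n+1)_{2j}=(n+2j)!/n!$, so that $\psi_A=\lambda^n\sum_{j\geq0}a_j(n)\lambda^{-2j}$ and $\psi_B=\lambda^{-n}\sum_{j\geq0}c_j(n)\lambda^{-2j}$ with $a_j(n)=(-1)^j\,n!/\bigl(2^j j!\,(n-2j)!\bigr)$ and $c_j(n)=(n+2j)!/\bigl(2^j j!\bigr)$. Substituting into $\psi(\lambda,n+1)+n\,\psi(\lambda,n-1)=\lambda\,\psi(\lambda,n)$ and matching powers of~$\lambda$ (only even shifts occur) reduces each spectral equation to a single two-term recursion in~$j$: for $\psi_A$ one gets $a_j(n)-a_j(n+1)=n\,a_{j-1}(n-1)$, and for $\psi_B$ one gets $c_{j-1}(n+1)+n\,c_j(n-1)=c_j(n)$. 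Both are elementary factorial identities to be checked by direct manipulation.

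For the pair-condition I would exploit that $\psi_A$ and $\psi_B$ solve the \emph{same} second-order equation $\psi(\lambda,n+1)-\lambda\,\psi(\lambda,n)+n\,\psi(\lambda,n-1)=0$. The associated Casoratian $d(\lambda,n)$ of~\eqref{definitiondtimeind} then satisfies $d(\lambda,n+1)=n\,d(\lambda,n)$; equivalently this follows from~\eqref{alreadyprovedsd} together with $e^{s(n)-s(n-1)}=1/g(n)=1/n$ coming from~\eqref{inir}. Since $e^{s(n-1)}=1/\Gamma(n)=1/(n-1)!$, the quantity $e^{s(n-1)}d(\lambda,n)$ is therefore independent of~$n$, so it suffices to evaluate it at $n=1$. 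There the series~\eqref{pair11gue} truncates, giving $\psi_A(\lambda,0)=1$ and $\psi_A(\lambda,1)=\lambda$ because for $j\geq1$ the relevant Pochhammer symbol contains a zero factor; the condition~\eqref{normalization} thus collapses to the single finite check $\psi_A(\lambda,1)\,\psi_B(\lambda,0)-\psi_B(\lambda,1)\,\psi_A(\lambda,0)=\lambda$, which telescopes term by term using $(2k)!/(2^k k!)=(2k-1)!/\bigl(2^{k-1}(k-1)!\bigr)$.

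The two spectral recursions are routine; the one point deserving care is that~\eqref{normalization} demands the \emph{exact} identity $e^{s(n-1)}d(\lambda,n)=\lambda$, that is, the vanishing of all subleading corrections and not merely of the leading term. I expect the reduction to $n=1$ via the Casoratian relation to be the crux of the argument, since it is precisely what converts this a priori infinite verification into the single explicit computation above.
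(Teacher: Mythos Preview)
Your proposal is correct and follows essentially the same strategy as the paper: verify the spectral equation for each of $\psi_A,\psi_B$ directly, then use the $n$-independence of $e^{s(n-1)}d(\lambda,n)$ (the paper quotes~\eqref{alreadyprovedsd}; you phrase it as the Casoratian relation $d(\lambda,n+1)=n\,d(\lambda,n)$) to reduce the pair-normalization to a single evaluation. The only cosmetic difference is the choice of evaluation point: the paper first expands~\eqref{toshowab} into the combinatorial identities~\eqref{alreadyrecasts} and then sets $n=0$, where both summands vanish trivially; you instead set $n=1$, exploiting the truncation $\psi_A(\lambda,0)=1$, $\psi_A(\lambda,1)=\lambda$ so that the check becomes the single telescoping identity $\lambda\,\psi_B(\lambda,0)-\psi_B(\lambda,1)=\lambda$. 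Your variant is slightly slicker since it avoids writing out the double-sum identity~\eqref{alreadyrecasts} altogether, but the underlying idea is the same.
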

\begin{proof}
It is straightforward to verify that both $\psi_A$ and $\psi_B$ satisfy the equation
\beq
\psi(\lambda,n+1) \+ n\, \psi(\lambda,n-1) \= \lambda \, \psi(\lambda,n) \,. 
\eeq
Moreover, from the definitions~\eqref{pair11gue}--\eqref{pair12gue} we see that 
\[\psi_A \in \widetilde V\bllb \lambda^{-1}\brrb \, \lambda^n\,, \quad \psi_B \in \widetilde V\bllb \lambda^{-1}\brrb  e^{-s(n)} \lambda^{-n}\,.\]
We are left to show that 
\beq\label{toshowab}
\Gamma(n)^{-1} \, \Bigl(\psi_A(\lambda,n) \, \psi_B(\lambda,n-1) - \psi_B(\lambda,n) \, \psi_A(\lambda,n-1) \Bigr)
\= \lambda \,. 
\eeq
Clearly, the meaning of this identity is the following: both sides of~\eqref{toshowab} 
are Laurent series of~$\lambda^{-1}$ with coefficients in $\widetilde V=V=Q[n]$, and 
the equality means all the coefficients should be equal. 
More precisely, the identity~\eqref{toshowab} can be equivalently written as  
the following sequence of identities:
\beq\label{alreadyrecasts}
\frac{n}{j+1}\sum_{j_1=0}^{j+1} \frac{(-1)^{j_1}}2 \binom{j+1}{j_1} \binom{n+2j_1-1}{2j+1} 
\+ \sum_{j_1=0}^j (-1)^{j_1} \binom{j}{j_1} \binom{n+2j_1}{2j+1} \= 0\,,\quad j\geq 0\,.
\eeq
From~\eqref{alreadyprovedsd} we know that the left-hand side of~\eqref{alreadyrecasts}  
as a polynomial of~$n$ is a constant for any $j\geq 0$. 
Note that the value of the left-hand side of~\eqref{alreadyrecasts}  
at $n=0$ is obviously~$0$ for any $j\geq 0$. 
The proposition is proved.
\end{proof}
It follows from the above proposition an 
explicit expression for the $D(\lambda,\mu,n,\bdzero)$
 (cf.~equation~\eqref{defineD}) associated to the pair~\eqref{pair11gue}--\eqref{pair12gue}:
\beq\label{DGUEexpress}
\frac{e^{s(n-1)}}\mu \, D(\lambda,\mu,n,\bdzero) \, 
\biggl(\frac{\mu}{\lambda}\biggr)^n \= \frac{1}{\lambda-\mu} 
\+ A(\lambda,\mu,n)\,,
\eeq
with $A(\lambda,\mu,n)$ given by
\beq
A(\lambda,\mu,n) \= \sum_{k\geq1} \frac{(2k-1)!!}{(2k)!}\sum_{p=0}^{2k-1}
(-1)^{p+[(p+1)/2]} \binom{k-1}{[p/2]} \prod_{j=-p}^{2k-1-p} (n+j) \, \lambda^{-p-1} \mu^{-(2k-p)} \,. \label{Zhouseries}
\eeq
This explicit expression~\eqref{Zhouseries} first appeared in~\cite{Zhou3}. 
Denote 
\beq\label{ZhouseriesAGUE}
\widehat A(\lambda,\mu,n)\=\frac{1}{\lambda-\mu} 
\+ A(\lambda,\mu,n)\,.
\eeq
As a corollary of Proposition~\ref{GUEpsi}, 
Theorem~\ref{main1} and the above~\eqref{DGUEexpress} we have achieved 
a new proof of the following theorem of Jian Zhou. 
\begin{theorem}[Zhou, \cite{Zhou3}]
Fix $k\geq 2$ being an integer. 
The generating series of $k$-point connected GUE correlators has the following expression:
\begin{align}
& \sum_{i_1,\dots,i_k\geq 1} 
\frac{\bigl\langle{\rm tr} \, M^{i_1}\cdots {\rm tr} \, M^{i_k}\bigr\rangle_{\rm c}}{\lambda_1^{i_1+1} \cdots \lambda_k^{i_k+1}} 
\=  
(-1)^{k-1}  \sum_{\pi \in \cS_k/C_k} 
\prod_{j=1}^k \widehat A \, \bigl(\lambda_{\pi(j)},\lambda_{\pi(j+1)},n\bigr) 
\,-\, \frac{ \delta_{k,2}}{(\lambda_1-\lambda_2)^2} \,,  \label{mainidentitygue}
\end{align}
where $\widehat A$ is defined by~\eqref{Zhouseries}--\eqref{ZhouseriesAGUE}.
Here we recall that for any fixed $i_1,\dots,i_k$, the connected GUE correlator 
$\langle{\rm tr} \, M^{i_1}\cdots {\rm tr} \, M^{i_k}\rangle_{\rm c}$ is 
a polynomial of~$n$ (cf. \cite{BIZ,HZ,KKN,DuY1}).
\end{theorem}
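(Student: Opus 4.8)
The plan is to deduce this theorem as a corollary of Theorem~\ref{main1} specialized to the GUE solution, using the explicit pair of Proposition~\ref{GUEpsi} and the closed form~\eqref{DGUEexpress}. First I would invoke the standard dictionary between GUE and the Toda lattice hierarchy. For the initial data $f(n)=0$, $g(n)=n$, the DZ-type tau-function $\tau(n,\bt)$ of the corresponding solution coincides, up to an elementary normalization, with the GUE partition function in which the times $\bt$ couple to the power traces $\tr M^{j}$; this is exactly the identification recorded in~\cite{DuY1,BIZ,HZ,KKN}. Under it the connected correlator is an $n$-dependent polynomial equal to a logarithmic $\bt$-derivative of $\log\tau$ evaluated at $\bt=\bdzero$, with each index shifted by one:
\[
\bigl\langle \tr M^{i_1}\cdots \tr M^{i_k}\bigr\rangle_{\rm c} \= \Omega_{i_1-1,\dots,i_k-1}(n,\bdzero)\,, \qquad i_1,\dots,i_k\geq 1\,.
\]
With the reindexing $i_j\mapsto i_j-1$ the power $\lambda_j^{(i_j-1)+2}=\lambda_j^{i_j+1}$ in the left-hand side of~\eqref{mainidentityfg} matches the left-hand side of~\eqref{mainidentitygue}, and the constraint $i_j\geq 1$ becomes the summation range $i_j-1\geq 0$ of Theorem~\ref{main1}.

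Next I would feed the explicit pair~\eqref{pair11gue}--\eqref{pair12gue} into Theorem~\ref{main1}, so that the right-hand side of~\eqref{mainidentityfg} is written through the function $D(\lambda,\mu,n,\bdzero)$ of this pair, and then use the key simplification~\eqref{DGUEexpress}, which asserts $\frac{e^{s(n-1)}}{\mu}\,D(\lambda,\mu,n,\bdzero)\,(\mu/\lambda)^n=\widehat A(\lambda,\mu,n)$. Substituting $D(\lambda,\mu,n,\bdzero)=\frac{\mu}{e^{s(n-1)}}(\lambda/\mu)^n\,\widehat A(\lambda,\mu,n)$ into the cyclic product $\prod_{j=1}^k D(\lambda_{\pi(j)},\lambda_{\pi(j+1)},n,\bdzero)$, the factors $(\lambda_{\pi(j)}/\lambda_{\pi(j+1)})^n$ telescope around each cycle to $1$, the factors $\mu=\lambda_{\pi(j+1)}$ collect to $\prod_{i=1}^k\lambda_i$, and the factors $e^{-s(n-1)}$ collect to $e^{-k s(n-1)}$. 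These cancel the prefactor $(-1)^{k-1}e^{k s(n-1)}/\prod_j\lambda_j$ in~\eqref{mainidentityfg}, leaving precisely $(-1)^{k-1}\sum_{\pi\in\cS_k/C_k}\prod_{j=1}^k\widehat A(\lambda_{\pi(j)},\lambda_{\pi(j+1)},n)$, while the term $-\delta_{k,2}/(\lambda_1-\lambda_2)^2$, identical on both sides, is carried over unchanged.

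The genuinely nontrivial inputs are already in place: the verification that~\eqref{pair11gue}--\eqref{pair12gue} form a pair (Proposition~\ref{GUEpsi}) and that they yield the closed form~\eqref{DGUEexpress}--\eqref{Zhouseries}. The main bookkeeping obstacle in the present step is the telescoping of the $(\lambda/\mu)^n$ factors together with the correct collection of the $\lambda_{\pi(j+1)}$ and $e^{-s(n-1)}$ factors; I would check that these cancellations are uniform over every cyclic class $\pi\in\cS_k/C_k$, so they may be pulled through the sum. Confirming the GUE/Toda index dictionary and the polynomiality of the correlators in $n$ (which legitimizes evaluation at $\bt=\bdzero$) then completes the argument. \epf
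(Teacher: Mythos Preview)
Your proposal is correct and follows essentially the same route as the paper: the theorem is obtained as a direct corollary of Theorem~\ref{main1} applied to the GUE initial data, using the explicit pair of Proposition~\ref{GUEpsi} and the identification~\eqref{DGUEexpress}. Your write-up is in fact more detailed than the paper's, which records only the one-line observation that Proposition~\ref{GUEpsi}, Theorem~\ref{main1}, and~\eqref{DGUEexpress} together yield the result; your explicit check of the telescoping of the $(\lambda/\mu)^n$ factors and of the cancellation of $e^{ks(n-1)}/\prod_j\lambda_j$ against the accumulated prefactors is exactly the computation the paper leaves implicit.
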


\subsection{Application II. Gromov--Witten invariants of~$\mathbb{P}^1$ in the stationary sector.}
The initial data for the Gromov--Witten solution to the Toda lattice hierarchy 
was for example derived in~\cite{DuY1,DYZ1,DuY2}. It  
has the following explicit expression:
\beq
f(n)\=n\e+\frac\e2\,, \quad  g(n)\=1\,.
\eeq
We have 
$$s(n) \= -  \bigl(1-\Lambda^{-1}\bigr)^{-1} \log 1 \= C\,,$$
where $C$ is an arbitrary constant. Below we take $C=0$. 
\begin{prop}
The $\psi_1,\psi_2$ defined by 
\begin{align}
&\psi_A(\lambda,n) \= \e^{\frac\lambda\e-\frac12} \, \Gamma\Bigl(\frac\lambda\e+\frac12\Bigr)\, J_{\frac\lambda\e-n-\frac12}\Bigl(\frac2\e\Bigr) \,, \label{bessel1}\\
&\psi_B(\lambda,n) \=  (-1)^{n+1} \e^{-\frac\lambda\e-\frac12} \,  \lambda\, 
\Gamma\Bigl(-\frac\lambda\e+\frac12\Bigr)\, J_{-\frac\lambda\e+n+\frac12}\Bigl(\frac2\e\Bigr) \label{bessel2}
\end{align}
form a particular pair of wave functions associated to 
$f(n)=n\e+\frac\e2,g(n)=1$.
Here, $J_\nu(y)$ denotes the Bessel function, and the right-hand sides of~\eqref{bessel1}--\eqref{bessel2} are understood as the 
large $\lambda$ asymptotics of the corresponding 
analytic functions.
\end{prop}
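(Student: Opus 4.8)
The plan is to verify that the proposed $\psi_A,\psi_B$ meet the three requirements laid out in Section~\ref{Introwave}: (i) each solves the spectral problem $L(\lambda,n)(\psi)=\lambda\psi$ with $L=\Lambda+f(n)+g(n)\Lambda^{-1}$ for $f(n)=n\e+\tfrac\e2$, $g(n)=1$; (ii) each has the correct leading asymptotics as $\lambda\to\infty$, namely $\psi_A=(1+{\rm O}(\lambda^{-1}))\lambda^n$ and $\psi_B=(1+{\rm O}(\lambda^{-1}))e^{-s(n)}\lambda^{-n}$ with $s(n)=0$; and (iii) the normalization~\eqref{normalization}, $e^{s(n-1)}d(\lambda,n)=\lambda$, holds. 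Since $g(n)=1$ here, the spectral equation reads $\psi(\lambda,n+1)+(n\e+\tfrac\e2)\psi(\lambda,n)+\psi(\lambda,n-1)=\lambda\psi(\lambda,n)$, which after substituting the shift $n\mapsto n+\tfrac12$ in the order is exactly a three-term recurrence of Bessel type. First I would write the classical contiguous (recurrence) relation for Bessel functions, $J_{\nu-1}(y)+J_{\nu+1}(y)=\tfrac{2\nu}{y}J_\nu(y)$, and specialize it at $y=\tfrac2\e$ and $\nu=\tfrac\lambda\e-n-\tfrac12$ (for $\psi_A$) or $\nu=-\tfrac\lambda\e+n+\tfrac12$ (for $\psi_B$); matching $\tfrac{2\nu}{y}=\nu\e=\lambda-(n+\tfrac12)\e-\cdots$ to the coefficient $\lambda-f(n)$ shows the recurrence~\eqref{todalatticeeqn}-type three-term relation is satisfied, so step~(i) reduces to a direct identification of indices.

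Next I would address the asymptotics~(ii). The prefactors $\e^{\pm\lambda/\e-1/2}\,\Gamma(\pm\tfrac\lambda\e+\tfrac12)$ are chosen precisely so that the large-$\lambda$ (equivalently large-order) asymptotic expansion of the Bessel function produces the normalized leading behavior. Using the uniform large-order asymptotics of $J_\nu(\nu\,{\rm sech}\,\alpha)$ or, more directly here, the expansion of $J_{\nu}(2/\e)$ for $\nu\to\infty$ with $2/\e$ fixed — controlled by $J_\nu(z)\sim\tfrac{1}{\Gamma(\nu+1)}(z/2)^\nu$ to leading order, corrected by the standard asymptotic series — I would extract the factor $\lambda^{\pm n}$ and show the remaining series is $1+{\rm O}(\lambda^{-1})$. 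Combining this with the Stirling asymptotics of the Gamma prefactor should produce exactly the normalization $(1+{\rm O}(\lambda^{-1}))\lambda^{\pm n}$; this is where the specific constants $\e^{-1/2}$, the sign $(-1)^{n+1}$, and the extra factor $\lambda$ in~\eqref{bessel2} get pinned down.

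For step~(iii), I would compute $d(\lambda,n)=\psi_A(\lambda,n)\psi_B(\lambda,n-1)-\psi_B(\lambda,n)\psi_A(\lambda,n-1)$ and show it equals $\lambda$ (since $s\equiv0$). The natural tool is the Bessel Wronskian/cross-product identity: the combination appearing in $d$ is, up to the Gamma and exponential prefactors, a Wronskian-type bilinear expression in $J_\nu$ and $J_{-\nu}$ at shifted indices. Using the reflection formula $\Gamma(\tfrac\lambda\e+\tfrac12)\Gamma(-\tfrac\lambda\e+\tfrac12)=\pi/\cos(\pi\lambda/\e)$ together with the classical cross-product $J_\nu(y)J_{-\nu-1}(y)+J_{-\nu}(y)J_{\nu+1}(y)=-\tfrac{2\sin(\pi\nu)}{\pi y}$ (or the analogous Wronskian $J_\nu J'_{-\nu}-J_{-\nu}J'_\nu$), I expect the $n$-dependence and the trigonometric factors to cancel cleanly, leaving precisely $\lambda$. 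I anticipate the main obstacle to be step~(iii): correctly bookkeeping the index shifts $\nu\mapsto\nu\pm1$ induced by $n\mapsto n-1$, matching them to the right cross-product identity, and confirming that the Gamma-function reflection exactly cancels the $\sin/\cos$ factors so that the product telescopes to $\lambda$ rather than to $\lambda$ times a spurious $n$- or $\lambda$-dependent factor. As in Proposition~\ref{GUEpsi}, once the bilinear identity is reduced to a closed Bessel cross-product relation, the verification becomes an application of known special-function identities rather than a fresh combinatorial computation.
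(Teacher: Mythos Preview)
Your plan is correct and follows exactly the same three-step verification as the paper's proof: check the spectral equation via the Bessel contiguous relation $J_{\nu-1}+J_{\nu+1}=\tfrac{2\nu}{y}J_\nu$, check the leading large-$\lambda$ asymptotics, and check the normalization $d(\lambda,n)=\lambda$. The paper's own proof merely asserts that each of these three properties holds ``using the properties of Bessel functions'' without naming the specific identities, so your outline is in fact more detailed than the original; in particular, your proposed route for step~(iii) via the cross-product identity $J_{\nu+1}J_{-\nu}+J_\nu J_{-\nu-1}=-\tfrac{2\sin(\pi\nu)}{\pi y}$ combined with the reflection formula $\Gamma(\tfrac12+\tfrac\lambda\e)\Gamma(\tfrac12-\tfrac\lambda\e)=\pi/\cos(\pi\lambda/\e)$ works cleanly and gives exactly~$\lambda$.
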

\begin{proof} 
Firstly, using the properties of Bessel functions we can verify that 
$\psi_A(\lambda,n)$ and $\psi_B(\lambda,n)$ defined from the above asymptotics satisfy 
\begin{align}
& \psi_A(\lambda,n+1)\+ \Bigl(n\e+\frac\e2\Bigr) \, \psi_A(\lambda,n) \+  \psi_A(\lambda,n-1) 
\= \lambda \, \psi_A(\lambda,n) \,,\nn\\
& \psi_B(\lambda,n+1)\+ \Bigl(n\e+\frac\e2\Bigr) \, \psi_B(\lambda,n) \+  \psi_B(\lambda,n-1) 
\= \lambda \, \psi_B(\lambda,n) \,. \nn
\end{align}
Secondly, as $\lambda$ goes to~$\infty$, the following asymptotics hold true:
\begin{align}
&\e^{\frac\lambda\e-\frac12} \, 
\Gamma\Bigl(\frac\lambda\e+\frac12\Bigr)\, J_{\frac\lambda\e-n-\frac12}\Bigl(\frac2\e\Bigr) 
\;\sim\;  
\lambda^n \Bigl(1\+ {\rm O}\bigl(\lambda^{-1}\bigr)\Bigr)\,,\nn\\
&  (-1)^{n+1} \e^{-\frac\lambda\e-\frac12} \,  \lambda\, 
\Gamma\Bigl(-\frac\lambda\e+\frac12\Bigr)\, J_{-\frac\lambda\e+n+\frac12}\Bigl(\frac2\e\Bigr)
\;\sim\;  
\lambda^{-n} \Bigl(1\+ {\rm O}\bigl(\lambda^{-1}\bigr)\Bigr) \,. \nn
\end{align}
Thirdly, $\psi_A$ and $\psi_B$ also satisfy 
$$\psi_A(\lambda,n) \, \psi_B(\lambda,n-1) \,-\, \psi_B(\lambda,n) \, \psi_A(\lambda,n-1) \= \lambda\,.$$
We have verified all the defining properties for a pair of wave functions associated to $f(n)=n\e+\frac\e2,g(n)=1$. 
The proposition is proved.
\end{proof}

Note that 
\begin{align}
&\psi_A(\lambda,n-1)\= \e^{\frac\lambda\e-\frac12} \, \Gamma\Bigl(\frac\lambda\e+\frac12\Bigr)\, 
J_{\frac\lambda\e-n+\frac12}\Bigl(\frac2\e\Bigr) \,, \label{bessel3}\\
&\psi_B(\lambda,n-1)\= (-1)^{n} \e^{-\frac\lambda\e-\frac12} \,  \lambda\, 
\Gamma\Bigl(-\frac\lambda\e+\frac12\Bigr)\, J_{-\frac\lambda\e+n-\frac12}\Bigl(\frac2\e\Bigr) \,,\label{bessel4}
\end{align}
and denote 
$$
J_\nu(y) \,=:\, \frac{(y/2)^\nu}{\Gamma(\nu+1)}  j_{\nu+\frac12}(y^2/4) \,.
$$
It follows from~\eqref{bessel1}--\eqref{bessel4},~\eqref{defineD} that the 
$D(\lambda,\mu,0,\bdzero)$
associated to the pair~\eqref{bessel1}--\eqref{bessel2} has the following explicit expression:
$$\frac1\mu D(\lambda,\mu,0,\bdzero) \= - \frac1\e \,
\frac{j_{-\frac\mu\e}\bigl(\frac1{\e^2}\bigr) \, j_{\frac\lambda\e}\bigl(\frac1{\e^2}\bigr)  
  + 
 \frac{\e^{-2}}{(\frac12-\frac\mu\e)(\frac12+\frac\lambda\e)} \, 
 j_{1-\frac\mu\e}\bigl(\frac1{\e^2}\bigr)\, 
 j_{1+\frac\lambda\e}\bigl(\frac1{\e^2}\bigr)}{\mu/\e-\lambda/\e} \,.$$
%$$D(;s) \=   \frac{j_{-a}(X) \, j_{b}(X) \+ \frac{X}{(\frac12-a)(\frac12+b)} j_{1-a}(X) \, j_{1+b}(X) }{a-b}$$
Then according to~\cite{DYZ1}, the function $\frac1\mu D(\lambda,\mu,0,\bdzero)$ 
has the following expressions:
\begin{align}
& \frac1\mu D(\lambda,\mu,0,\bdzero) \nn\\
& \quad \= 
-\frac1\e \sum_{k=0}^\infty \frac{(a-b-2k+1)_{k-1}}{k! \, (-a+\tfrac12)_k \, (b+\tfrac12)_k} \e^{-2k}  \label{Dhgm}\\
& \quad \=  
\frac{-1}{\e(a-b)} ~  {}_2 F_3\Bigl(\frac{b-a}2, \frac{b-a+1}2; \, \frac12 -a \, , \, \frac12 + b \,, b-a+1; \, -4 \e^{-2}\Bigr) \\
& \quad \, \sim \;  
 \frac{-1}{\e(a-b)}  \,-\, \sum_{p,q\geq 0} \frac{(-1)^{q+1}}{a^{p+1} b^{q+1}} 
\sum_{k\geq 1} \frac{\e^{-2k-1}}{k!} \nn\\
& \qquad \qquad \qquad  \sum_{1\leq i,j\leq k} (-1)^{i+j}\frac{(i+j-2k)_{k-1} \bigl(i-\frac12\bigr)^p \bigl(j-\frac12\bigr)^q}{(i-1)!(j-1)!(k-i)!(k-j)!} \;=:\; \widehat A \, (\lambda,\mu) \,, \label{asymD}
\end{align}
where $a:=\frac\mu\e$, $b:=\frac\lambda\e$, the $(a-b+1)_{-1}$ of~\eqref{Dhgm} is defined as $1/(a-b)$, 
and $\sim$ in~\eqref{asymD} is taken as $a,b\rightarrow \infty$ away from the half integers. 
The explicit expression~\eqref{asymD} first appeared in~\cite{DYZ1}.
So we have completed a new proof of the following theorem.
\begin{theorem}[\cite{DYZ1}]
The generating series of $k$-point ($k\geq 2$) Gromov--Witten invariants of~$\mathbb{P}^1$ 
in the stationary sector has the following explicit expression:
\begin{align}
& \e^k \! \sum_{i_1,\dots,i_k\geq 0}  
\frac{(i_1+1)! \cdots (i_k+1)!}{\lambda_1^{i_1+2} \cdots \lambda_k^{i_k+2}} 
\langle  \tau_{i_1}(\omega) \cdots \tau_{i_k}(\omega)  \rangle(\e) \nn\\
& \quad  \= (-1)^{k-1} \sum_{\pi \in \cS_k/C_k} 
\prod_{i=1}^k \widehat A \, \bigl(\lambda_{\pi(i)}, \lambda_{\pi(i+1)}\bigr)  \,-\, \frac{\delta_{k,2}}{(\lambda_1-\lambda_2)^2} \,, 
\end{align}
where $\widehat A \, (\lambda,\mu)$ is explicitly defined in~\eqref{asymD}, and 
\beq \langle\tau_{i_1}(\omega) \cdots \tau_{i_k}(\omega)  \rangle(\e) \:= \sum_{g\geq 0} 
\e^{2g-2}  \sum_{d\geq 0}
\int_{\left[\overline{\mathcal{M}}_{g,k}(\mathbb{P}^1,d)\right]^{{\rm virt}}} 
{\rm ev}_1^*(\omega)  \cdots  {\rm ev}_k^*(\omega) \, \psi_1^{i_1} \cdots \psi_k^{i_k} \,.  \label{GWp1}
\eeq 
(See for example~\cite{DYZ1} for the notation about the integral in the right-hand side of~\eqref{GWp1}.)
\end{theorem}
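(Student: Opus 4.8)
The plan is to obtain this theorem as a direct specialization of Theorem~\ref{main1} to the solution with initial data $f(n)=n\e+\frac\e2$, $g(n)=1$, combined with the explicit Bessel-function evaluation of the kernel~$D$ carried out just above. Essentially all of the analytic work has been front-loaded into the preceding proposition (the pair of wave functions \eqref{bessel1}--\eqref{bessel2}) and into the identification of $\frac1\mu D(\lambda,\mu,0,\bdzero)$ with $\widehat A(\lambda,\mu)$ via~\eqref{asymD}; what remains is assembly.

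First I would record the simplifications special to this solution. Since $g(n)\equiv 1$, formula~\eqref{inir} gives $s(n)\equiv 0$ with the chosen constant $C=0$, so the prefactor $e^{k s(n-1)}$ in~\eqref{mainidentityfg} equals~$1$. Specializing Theorem~\ref{main1} to $n=0$ and exploiting the cyclic structure of the product over $\pi\in\cS_k/C_k$, I would absorb the factor $\prod_{j}\lambda_j^{-1}$ into the product by writing
\[
\frac{1}{\prod_{j=1}^k \lambda_j}\prod_{j=1}^k D\bigl(\lambda_{\pi(j)},\lambda_{\pi(j+1)},0\bigr)
\= \prod_{j=1}^k \frac{1}{\lambda_{\pi(j+1)}}\,D\bigl(\lambda_{\pi(j)},\lambda_{\pi(j+1)},0\bigr),
\]
which is valid because, as $j$ ranges over $1,\dots,k$, the index $\pi(j+1)$ cyclically runs over all of $1,\dots,k$.

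Next I would substitute the already-computed evaluation of the kernel. The pair of wave functions is expressed through Bessel functions, and the resulting $\frac1\mu D(\lambda,\mu,0,\bdzero)$ has been identified with the hypergeometric series $\widehat A(\lambda,\mu)$ through the asymptotic expansion~\eqref{asymD}. Hence each factor above equals $\widehat A(\lambda_{\pi(j)},\lambda_{\pi(j+1)})$, and the right-hand side of Theorem~\ref{main1} collapses to the claimed closed form $(-1)^{k-1}\sum_{\pi\in\cS_k/C_k}\prod_{i=1}^k\widehat A(\lambda_{\pi(i)},\lambda_{\pi(i+1)})-\frac{\delta_{k,2}}{(\lambda_1-\lambda_2)^2}$. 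This part is purely mechanical once the Bessel evaluation is granted.

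The conceptual crux—and the main obstacle—is matching the left-hand sides. Theorem~\ref{main1} delivers $\sum \Omega_{i_1,\dots,i_k}(0,\bdzero)\,\prod_j\lambda_j^{-i_j-2}$, so the theorem amounts to the identification
\[
\Omega_{i_1,\dots,i_k}(0,\bdzero)\=\e^k\Bigl(\textstyle\prod_{j=1}^k (i_j+1)!\Bigr)\,\bigl\langle \tau_{i_1}(\omega)\cdots\tau_{i_k}(\omega)\bigr\rangle(\e).
\]
This is precisely the statement that the DZ-type tau-function of the present Toda solution is, up to the normalization encoded in the $(i_j+1)!$ and $\e^k$ factors and the evaluation at $n=0$, the stationary-sector Gromov--Witten partition function of~$\mathbb{P}^1$ defined in~\eqref{GWp1}. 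I would invoke this as the known GW/Toda dictionary, which is exactly where the initial data $f(n)=n\e+\frac\e2$, $g(n)=1$ comes from; it is established in~\cite{DuY1,DYZ1,DuY2}. With this identification in hand, the two sides of Theorem~\ref{main1} become verbatim the two sides of the claimed formula, completing the proof.
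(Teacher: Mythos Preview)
Your proposal is correct and follows essentially the same approach as the paper: the theorem is obtained by specializing Theorem~\ref{main1} at $n=0$ with $s\equiv0$, plugging in the Bessel pair~\eqref{bessel1}--\eqref{bessel2}, and invoking the identification $\tfrac1\mu D(\lambda,\mu,0,\bdzero)=\widehat A(\lambda,\mu)$ from~\eqref{asymD}. The paper treats all of this as already done and simply states ``So we have completed a new proof of the following theorem''; your write-up makes explicit the cyclic absorption of $\prod_j\lambda_j^{-1}$ and the left-hand-side identification $\Omega_{i_1,\dots,i_k}(0,\bdzero)=\e^k\prod_j(i_j+1)!\,\langle\tau_{i_1}(\omega)\cdots\tau_{i_k}(\omega)\rangle(\e)$ via the GW/Toda dictionary from~\cite{DuY1,DYZ1,DuY2}, which the paper leaves implicit.
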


\begin{appendix}
\section{Pair of abstract pre-wave functions}
Here we construct a ring that is suitable for defining abstract pre-wave functions.
Recall that $\A$ is the ring of polynomials of $v_k,w_k$, $k \in \ZZ$. Instead of the $\ZZ$-coefficients, 
we will use in this appendix the $\QQ$-coefficients, 
i.e., $\A=\QQ \bigl[ \{v_k,w_k \,|\, k\in \ZZ\}\bigr]$ is now under consideration. 
For each monic monomial~$\alpha\in\A\backslash\QQ$, 
we associate a symbol~$m_\alpha$. Denote by~$\cB$ the polynomial ring 
\beq \cB\:=\QQ\bigl[\{ \,m_\alpha\,|\,\alpha \mbox{ is a monic monomial in } \A\backslash\QQ \} \bigr]\,. \eeq
Define the action of~$\Lambda^k$ on~$\cB$ with $k\in \ZZ$ by
\beq
\Lambda^k \bigl(m_{\alpha_1} \cdots m_{\alpha_l}) \= m_{ \Lambda^k(\alpha_1)} \cdots m_{ \Lambda^k (\alpha_l)} 
\eeq
for $\alpha_1,\dots,\alpha_l$ being monic monomials in~$\A\backslash\QQ$, 
as well as by linearly extending it to other elements of~$\cB$. 
For a monic monomial $\alpha=v_{i_1} \cdots v_{i_r} w_{j_1} \cdots w_{j_s}\in \A\backslash\QQ$ with $i_1\leq\dots\leq i_r$, $j_1\leq\dots\leq j_s$ and 
$r+s\geq 1$, let $k_\alpha:=-i_1$ (if $r\geq 1$), $k_\alpha:=-j_1$ (if $r=0$); 
the monomial $\Lambda^{k_\alpha} (\alpha)\in \A$ is then called the 
(unique) reduced monomial (associated to~$\alpha$). 
Denote by~$\mathcal{C}$ the polynomial ring generated by 
$m_{\beta}$, $v_k$, $w_k$ with $\QQ$-coefficients, where 
$\beta$ are reduced monic monomials, and $k\in \ZZ$.
Let us also define an action of~$\Lambda^k$ on~$\mathcal{C}$, $k\in\ZZ$. To this end, we introduce some notations:
for $\beta$ a reduced monic monomial of~$\A$, denote 
\beq
n_{\Lambda^k(\beta)}  \:= \left\{
\begin{array}{cc} 
m_\beta + \sum_{i=0}^{k-1} \Lambda^i(\beta)\,, & ~ k\geq 0\,,\\   
m_\beta - \sum_{i=k}^{-1} \Lambda^i(\beta)\,, & ~ k\leq -1\,.
\end{array}\right.
\eeq
Then for a 
monomial $\alpha \cdot m_{\beta_1} \cdots m_{\beta_s}$ of~$\mathcal{C}$ with $\alpha$ being a monomial in~$\A$, define
\beq
\Lambda^k (\alpha \cdot m_{\beta_1} \cdots m_{\beta_s}) = \Lambda^k (\alpha)  \cdot \prod_{j=1}^s n_{\Lambda^k(\beta_j)}\,, \quad k\in\ZZ\,.
\eeq
Define the action of~$\Lambda^k$ on other elements in~$\mathcal{C}$ by requiring it as a linear operator.
Denote by $p:\mathcal{B}\rightarrow \mathcal{C}$ the linear map which maps  
$m_{\alpha_1} \cdots m_{\alpha_l}\in \cB$ to $n_{\alpha_1} \cdots n_{\alpha_l}\in \mathcal{C}$, 
for~$\alpha_i$, $i=1,\dots,l$ being monic monomials in~$\A\backslash\QQ$.
Denote by $\cB^0$ the image of~$p$. 
Clearly, $\A\subset \cB^0$. Indeed, the 
 element $(\Lambda-1) \bigl(\sum_{i=1}^l \lambda_i \, m_{\alpha_i}\bigr) \in \cB$ 
 becomes $\sum_{i=1}^l \lambda_i \alpha_i \in \A$ 
 under the map~$p$. Here $\alpha_1,\dots,\alpha_l$ are distinct monic monomials in~$\A\backslash\QQ$.
Finally we define an 
operator $\mathbb{S}: \A\backslash\QQ \rightarrow \cB^0$ by 
\beq
\mathbb{S} \bigl(\lambda_1 \alpha_1 \+ \cdots \+ \lambda_l \alpha_l\bigr) \= \lambda_1 n_{\alpha_1} \+ \cdots \+ \lambda_l n_{\alpha_l} \,
\eeq
for $\alpha_1,\dots,\alpha_l$ being distinct monic monomials and $\lambda_1,\dots,\lambda_l\in\QQ$.

Motivated by \eqref{ykn} and~\eqref{zkn}, define two families of elements $y_i,z_i \in \A$, $i\geq 1$ by 
\begin{align}
& y_{k+1} \= 
- \sum_{m_1,\dots,m_k\geq 0\atop \sum_{i=1}^{k} im_i=k+1} 
\frac{\prod_{i=1}^k y_i^{m_i}}{\prod_{i=1}^k m_i!} 
-v_0\delta_{k,0}  - w_0 \sum_{m_1,\dots,m_{k-1}\geq 0\atop \sum_{i=1}^{k-1} im_i=k-1} 
\frac{\prod_{i=1}^{k-1} (-1)^{m_i} \bigl(\Lambda^{-1}(y_i)\bigr)^{m_i}}{\prod_{i=1}^{k-1} m_i!} \,, \nn \\
& z_{k+1} \= 
\sum_{m_1,\dots,m_k\geq 0\atop \sum_{i=1}^{k} im_i=k+1} 
\frac{\prod_{i=1}^k (-1)^{m_i} z_i^{m_i}}{\prod_{i=1}^k m_i!} 
\+ v_1\delta_{k,0}  \+ w_2 \sum_{m_1,\dots,m_{k-1}\geq 0\atop \sum_{i=1}^{k-1} im_i=k-1} 
\frac{\prod_{i=1}^{k-1} \bigl(\Lambda(z_i)\bigr)^{m_i}}{\prod_{i=1}^{k-1} m_i!} \,. \nn
\end{align}
Equivalently, the generating series $y(\lambda):=\sum_{i\geq1} y_i/\lambda^i$, $z(\lambda):=\sum_{i\geq1} z_i/\lambda^i$ satisfy
\begin{align}
& \lambda \, e^{y(\lambda)} \+ v_0 \,-\, \lambda \+ w_0 \, \lambda^{-1} \Lambda^{-1}\bigl(e^{-y(\lambda)}\bigr) \= 0\,, \nn \\
& \lambda \, \Lambda^{-1} \bigl(e^{-z(\lambda)}\bigr) \+ v_0 \,-\, \lambda \+ w_1 \, \lambda^{-1} e^{z(\lambda)} \= 0\,. \nn
\end{align}
Define 
\beq
\psi_A\:= e^{\mathbb{S} (y(\lambda))} \otimes \lambda^n \otimes 1\,, 
\quad \psi_B \:= e^{\mathbb{S} (z(\lambda))} \otimes \lambda^{-n} \otimes e^{-\sigma}  \,,
\eeq
where $e^{-\sigma}$ is a formal element satisfying $e^{(1-\Lambda^{-1})(-\sigma)} = w_0$, and $\lambda^n$, $\lambda^{-n}$ are  
formal elements satisfying $\Lambda^k (1\otimes\lambda^n)=\lambda^k \otimes \lambda^n$, 
$\Lambda^k (1\otimes\lambda^{-n})=\lambda^{-k} \otimes \lambda^{-n}$, $k\in \ZZ$.
%Note that we could treat $e^\sigma$ as an element of a group ring, but do not enter into details about 
%this point as the construction is standard.
We have 
\begin{align}
& L \bigl(\psi_A\bigr) \= \lambda \, \psi_A\,,\quad L \bigl(\psi_B\bigr) \= \lambda \, \psi_B \,,\nn\\
& \psi_A(\lambda) \= \bigl(1+{\rm O}(\lambda^{-1})\bigr)\otimes \lambda^n
\;\in\; \mathcal{C} \bllm\lambda^{-1}\brrm \otimes \lambda^n \,,\\
& \psi_B(\lambda) \= \bigl(1+{\rm O}(\lambda^{-1})\bigr) \otimes \lambda^{-n} \otimes e^{-\sigma}   \;\in\; 
\mathcal{C} \bllm \lambda^{-1}\brrm  \otimes \lambda^{-n}  \otimes e^{-\sigma} \,,
\end{align}
where $L=\Lambda+v_0 + w_0 \, \Lambda^{-1}$. 
We call $\psi_A$ and $\psi_B$ the abstract pre-wave functions of type~A and of type~B, respectively, associated to $v_0,w_0$.

Denote 
\beq\label{wavepairddpre}
d_{\rm pre}(\lambda)\:= \psi_A(\lambda) \, \Lambda^{-1} \bigl(\psi_B(\lambda)\bigr) \,-\, \psi_B(\lambda) \, \Lambda^{-1}\bigl(\psi_A(\lambda)\bigr) 
\eeq
and
\beq\label{Psipredefinition}
\Psi(\lambda) \:= \begin{pmatrix} 
\psi_A (\lambda) & \psi_B(\lambda) \\ 
\Lambda^{-1} \bigl(\psi_A(\lambda)\bigr) & \Lambda^{-1} \bigl(\psi_B(\lambda)\bigr) \end{pmatrix}\,. 
\eeq
Then we have the following identity:
\beq\label{RPpre}
R(\lambda) \= \Psi(\lambda) \, \begin{pmatrix} 1 & 0 \\ 0 & 0 \end{pmatrix} 
\, \Psi^{-1}(\lambda)  \;=:\;  M(\lambda) \,.
\eeq
The proof is similar to that of Proposition~\ref{propRP}. (The main fact used in the proof is that from the definition the 
coefficients of entries of~$R(\lambda)$ are uniquely determined in an algebraic way.)
We omit its details here. However, let us explain the equality~\eqref{RPpre} by an equivalent form. 
From definition we have 
\begin{align}
M(\lambda) \= 
\frac1{d_{\rm pre}(\lambda)} 
\begin{pmatrix} 
\psi_A (\lambda) \, \Lambda^{-1} \bigl(\psi_B(\lambda)\bigr) & -\psi_A (\lambda) \, \psi_B(\lambda) \\ 
\Lambda^{-1} \bigl(\psi_A(\lambda)\bigr) \, \Lambda^{-1} \bigl(\psi_B(\lambda)\bigr)  & -\Lambda^{-1} \bigl(\psi_A(\lambda)\bigr) \, \psi_B(\lambda) \end{pmatrix}\,.  \nn
\end{align}
Then from a straightforward calculation by using the definitions we find
\begin{align}
M_{11}(\lambda) & %\= 
%\frac{\psi_A (\lambda) \, \Lambda^{-1} \bigl(\psi_B(\lambda)\bigr)}{\psi_A(\lambda) \, \Lambda^{-1} \bigl(\psi_B(\lambda)\bigr) \,-\, \psi_B(\lambda) \, \Lambda^{-1}\bigl(\psi_A(\lambda)\bigr)} 
%\= 
%\frac{1}{1 \,-\, \frac{\Lambda^{-1}(\psi_A(\lambda))}{\psi_A (\lambda)} \frac{\psi_B(\lambda) }{\Lambda^{-1} (\psi_B(\lambda))} }\nn\\
%& \= \frac{1}{1 \,-\, \frac{w_0}{\lambda^2} \, e^{ (1- \Lambda^{-1}) \circ \mathbb{S} (z(\lambda))-(1-\Lambda^{-1}) \circ \mathbb{S} (y(\lambda)))}}
\= \frac{1}{1 \,-\, \frac{w_0}{\lambda^2} \, e^{ \Lambda^{-1} (z(\lambda) - y(\lambda))}} \,, \label{m11yz}\\
M_{12}(\lambda) & %\= 
%\frac{ -\psi_A (\lambda) \, \psi_B(\lambda) }{\psi_A(\lambda) \, \Lambda^{-1} \bigl(\psi_B(\lambda)\bigr) \,-\, \psi_B(\lambda) \, \Lambda^{-1}\bigl(\psi_A(\lambda)\bigr)}
%\= \frac{1}{\frac{\Lambda^{-1}(\psi_A(\lambda))}{\psi_A(\lambda)} \,-\, \frac{\Lambda^{-1} (\psi_B(\lambda))}{\psi_B(\lambda)}}  \,, \nn \\
%& \= \frac{1}{\lambda^{-1}  \, e^{(\Lambda^{-1}-1) \circ \mathbb{S} (y(\lambda))}  \,-\, \lambda\, w_0 \, e^{(\Lambda^{-1}  -1)\circ\mathbb{S} (z(\lambda))}  } 
\= \frac{1}{\lambda^{-1}  \, e^{-\Lambda^{-1}(y(\lambda))}  
\,-\, \frac{\lambda}{w_0} \, e^{-\Lambda^{-1}(z(\lambda))}} \,, \label{m12yz}\\
M_{21}(\lambda) & %\= 
%\frac{ \Lambda^{-1}(\psi_A (\lambda)) \, \Lambda^{-1}(\psi_B(\lambda)) }{\psi_A(\lambda) \, \Lambda^{-1} \bigl(\psi_B(\lambda)\bigr) \,-\, \psi_B(\lambda) \, \Lambda^{-1}\bigl(\psi_A(\lambda)\bigr)} 
%\= \frac{1}{\frac{\psi_A(\lambda)}{\, \Lambda^{-1}(\psi_A(\lambda))}  \,-\, \frac{\psi_B(\lambda)}{\Lambda^{-1}(\psi_B(\lambda))}} \nn\\
%& 
\= \frac{1}{\lambda  \, e^{\Lambda^{-1}(y(\lambda))}  \,-\, \frac{w_0}{\lambda} \, e^{\Lambda^{-1}(z(\lambda))}}\,, \label{m21yz}\\
M_{22}(\lambda) & %\= 
%\frac{-\Lambda^{-1} \bigl(\psi_A(\lambda)\bigr) \, \psi_B(\lambda)}{\psi_A(\lambda) \, \Lambda^{-1} \bigl(\psi_B(\lambda)\bigr) \,-\, \psi_B(\lambda) \, \Lambda^{-1}\bigl(\psi_A(\lambda)\bigr)} 
%\= \frac{1}{ 1 \,-\, \frac{\psi_A(\lambda)}{\Lambda^{-1} (\psi_A(\lambda))} \, \frac{\Lambda^{-1}(\psi_B(\lambda))}{\psi_B(\lambda)}}\nn\\
%&
\= \frac{1}{1 \,-\, \frac{\lambda^2}{w_0} \, e^{ \Lambda^{-1} (y(\lambda)- z(\lambda))}} \,. \label{m22yz}
\end{align}
Hence the equality~\eqref{RPpre} means new expressions for the entries of the basic matrix resolvent~$R(\lambda)$ 
explicitly in terms of $y(\lambda),z(\lambda)$.
Substituting the following expansions  
\beq
y(\lambda) \= -\frac{v_0}{\lambda } \,-\, \frac{\frac{1}{2} v_0^2+w_0}{\lambda^2} \+ \cdots \,, \qquad 
z(\lambda) \= \frac{v_1}{\lambda } \+ \frac{\frac{1}{2} v_1^2+w_2}{\lambda^2} \+ \cdots
\eeq
into~\eqref{m11yz}--\eqref{m22yz} we find that the new expressions agree with~\eqref{Rfirstfew}.
Combining with~\eqref{taualphabeta}--\eqref{taugamma} we obtain
\begin{align}
& \frac{1}{\frac{\lambda^2}{w_0} \, e^{ \Lambda^{-1} (y(\lambda)- z(\lambda))} \,-\, 1} \= \sum_{p\geq 0} \Omega_{p,0} \, \lambda^{-p-2} \;=:\; A \,, \\
& \frac{1}{\lambda  \, e^{\Lambda^{-1}(y(\lambda))}  \,-\, \frac{w_0}{\lambda} \, e^{\Lambda^{-1}(z(\lambda))}} \= \lambda^{-1}\+\sum_{p\geq 0} \Lambda^{-1} \bigl(S_p\bigr) \, \lambda^{-p-2} \; =: \;  B\,.
\end{align}
We therefore arrive at the following formulae:
\beq
e^{\Lambda^{-1}(y(\lambda))} \= \frac1\lambda\, \frac{1+A}{B} \,, \qquad 
e^{\Lambda^{-1}(z(\lambda))} \= \frac{\lambda}{w_0} \, \frac{A}{B} \,.
\eeq

Let us proceed to the generating series of multi-point correlation functions. Define 
\beq\label{defineDpre}
D_{\rm pre}(\lambda,\mu) \:= \frac{\psi_A(\lambda) \, \Lambda^{-1} \bigl(\psi_B(\mu)\bigr) \,-\, \Lambda^{-1} \bigl(\psi_A(\lambda)\bigr) \,\psi_B(\mu)}{\lambda-\mu} \,.
\eeq
Using~\eqref{RPpre}, Proposition~\ref{matrixprop} and a similar argument to the proof of Theorem~\ref{main1time} we obtain
\begin{align}
& \sum_{i_1,\dots,i_k\geq 0} 
\frac{\Omega_{i_1,\dots,i_k}}{\lambda_1^{i_1+2} \cdots \lambda_k^{i_k+2}}  \=  
 \frac{(-1)^{k-1}}{\prod_{j=1}^k d_{\rm pre}(\lambda_j)}  
\sum_{\pi \in \cS_k/C_k} \prod_{j=1}^k D_{\rm pre}(\lambda_{\pi(j)},\lambda_{\pi(j+1)}) 
\,-\, \frac{\delta_{k,2}}{(\lambda_1-\lambda_2)^2}  \,.  \label{mainidentitypre}
\end{align}
For the reader's convenience, we give the first few terms of the abstract pre-wave functions $\psi_A(\lambda)$ and~$\psi_B(\lambda)$ as follows:
\begin{align}
& \psi_A \=  \biggl(1 \,-\, \frac{m_{v_0}}{\lambda}  \+ \frac{m_{v_0}^2-m_{v_0^2}-2m_{w_0}}{2 \lambda^2}  \nn\\
& \qquad\qquad - \frac1{6\lambda^3} \Bigl(m_{v_0}^3 + 2m_{v_0^3} - 3m_{v_0} m_{v_0^2} 
+ 6m_{v_0 w_0} + 6m_{v_0 w_1} \nn\\
& \qquad \qquad \qquad \qquad - 6 m_{v_0} m_{w_0} - 6v_{-1} w_0\Bigr) \+ {\rm O}\Bigl(\frac1{\lambda^4}\Bigr)\biggr) \, \lambda^n \,,\\
& \psi_B \=  \biggl(1 \+\frac{m_{v_0}+v_0}{\lambda }  \+ \frac{m_{v_0}^2+m_{v_0^2} +2v_0 m_{v_0}+2m_{w_0}+2v_0^2+2w_0+2w_1}{2\lambda^2} \nn\\
& \qquad \qquad \+ \frac1{6 \lambda^3} \Bigl(m_{v_0}^3+6m_{v_0} m_{w_0} +3 m_{v_0} m_{v_0^2}  +2m_{v_0^3} 
+6m_{v_0 w_1} 
+ 6m_{v_0 w_0} \nn\\
& \qquad\qquad \qquad \qquad  +3 v_0 m_{v_0}^2+6v_0^2m_{v_0}  +6w_0 m_{v_0} +6w_1 m_{v_0} +3v_0 m_{v_0^2}+6v_0 m_{w_0} \nn\\
& \qquad\qquad \qquad \qquad +6 v_0^3+12v_0 w_0+12 v_0 w_1 + 6 v_1 w_1 \Bigr) \+ {\rm O}\Bigl(\frac1{\lambda^4}\Bigr)\biggr) \, \lambda^{-n} e^{-\sigma} \,.
\end{align}
It turns out that the above abstract pre-wave functions form {\it a pair}.
Namely, $d_{\rm pre}(\lambda) = \lambda\, e^{\Lambda^{-1}(-\sigma)}$. 
Interestingly, for given arbitrary initial value $(f(n),g(n))$, 
based on this statement one obtains a constructive method for a pair of wave functions 
associated to $(f(n),g(n))$ (cf.~\eqref{normalization} in Section~\ref{Introwave} for the definition of a pair). 
This is important considering Theorem~\ref{main1}.
We hope to confirm the statement on the pair property of the abstract pre-wave functions in another publication. 
\end{appendix}

\smallskip

\bigskip
\medskip

\noindent School of Mathematical Sciences,  University of Science and Technology of China

\noindent Hefei 230026, P.R. China 

\noindent diyang@ustc.edu.cn

\end{document}